\newtheorem{definition}{Definition}
\newtheorem{lemma}{Lemma}
\newtheorem{theorem}{\textbf{\textsc{Theorem}}}
\begin{document}
\title{Federated Learning Meets Contract Theory: Energy-Efficient Framework for Electric Vehicle Networks}
\author{Yuris Mulya Saputra, \emph{Student Member, IEEE},  Diep N. Nguyen, \emph{Senior Member, IEEE}, Dinh Thai Hoang, \emph{Member, IEEE}, Thang Xuan Vu, \emph{Member, IEEE}, Eryk Dutkiewicz, \emph{Senior Member, IEEE}, \\ and Symeon Chatzinotas, \emph{Senior Member, IEEE}
		%School of Electrical and Data Engineering, University of Technology Sydney, Australia\\
%\author{\IEEEauthorblockN{Yuris Mulya Saputra$^1$, Diep N. Nguyen$^1$, Dinh Thai Hoang$^1$, Eryk Dutkiewicz$^1$,\\
%		Markus Dominik Mueck$^2$, and Srikathyayani Srikanteswara$^2$\\}
%	$^1$ School of Electrical and Data Engineering, University of Technology Sydney, Australia\\
%	$^2$ Intel Corporation\\ \vspace{-5mm}}	
\IEEEcompsocitemizethanks{\IEEEcompsocthanksitem Y.~M.~Saputra is with the
	School of Electrical and Data Engineering, University of Technology Sydney, Sydney, NSW 2007, Australia (e-mail: yurismulya.saputra@student.uts.edu.au) and Department of Electrical Engineering and Informatics, Universitas Gadjah Mada, Yogyakarta 55281, Indonesia.
	D.~N.~Nguyen, D.~T.~Hoang, and E.~Dutkiewicz are with the
	School of Electrical and Data Engineering, University of Technology Sydney, Sydney, NSW 2007, Australia (e-mail: \{hoang.dinh, diep.nguyen, eryk.dutkiewicz\}@uts.edu.au).
	T.~X.~Vu and S.~Chatzinotas are with the Interdisciplinary Centre for Security, Reliability and Trust, University of Luxembourg, 1855 Luxembourg City, Luxembourg (e-mail: \{thang.vu, symeon.chatzinotas\}@uni.lu).
	Preliminary results in this paper have been presented at the IEEE GLOBECOM Conference, 2019~\cite{Saputra:2019}.
    }
    \thanks{}}

%\maketitle
%\thispagestyle{empty}
%++++++++++++++++++++++++++++++++++++++++++++++++++++++++++++++++++++
%++++++++++++++++++++++++++++++++++++++++++++++++++++++++++++++++++++

\IEEEtitleabstractindextext{%
\begin{abstract}

In this paper, we propose a novel energy-efficient framework for an electric vehicle (EV) network using a contract theoretic-based economic model to maximize the profits of charging stations (CSs) and improve the social welfare of the network. Specifically, we first introduce CS-based and CS clustering-based decentralized federated energy learning (DFEL) approaches which enable the CSs to train their own energy transactions locally to predict energy demands. In this way, each CS can exchange its learned model with other CSs to improve prediction accuracy without revealing actual datasets and reduce communication overhead among the CSs. Based on the energy demand prediction, we then design a multi-principal one-agent (MPOA) contract-based method. In particular, we formulate the CSs' utility maximization as a non-collaborative energy contract problem in which each CS maximizes its utility under common constraints from the smart grid provider (SGP) and other CSs' contracts. Then, we prove the existence of an equilibrium contract solution for all the CSs and develop an iterative algorithm at the SGP to find the equilibrium. Through simulation results using the dataset of CSs' transactions in Dundee city, the United Kingdom between 2017 and 2018, we demonstrate that our proposed method can achieve the energy demand prediction accuracy improvement up to 24.63\% and lessen communication overhead by 96.3\% compared with other machine learning algorithms. Furthermore, our proposed method can outperform non-contract-based economic models by 35\% and 36\% in terms of the CSs' utilities and social welfare of the network, respectively.
	
\end{abstract}

% Note that keywords are not normally used for peerreview papers.
\begin{IEEEkeywords}
	Information sharing, privacy, federated learning, contract theory, EV network, demand prediction.
\end{IEEEkeywords}}

% make the title area
\maketitle

\IEEEdisplaynontitleabstractindextext

\IEEEpeerreviewmaketitle

%==========================================================================================
%==========================================================================================
\IEEEraisesectionheading{\section{Introduction}\label{sec:Int}}

\IEEEPARstart{A}ccording to the International Energy Agency's latest forecast, electric vehicles (EVs) including battery EVs and plug-in hybrid EVs will take over the existence of conventional transportation systems by 2030 with 255 millions EVs on the road~\cite{iea:2019}. Thanks to the sustainability, the EVs have attracted worldwide recognition to provide not only high energy efficiency, but also low gas emissions and low-cost oil usage~\cite{Wang:2016}. Nonetheless, the increasing adoption of EVs in the near future, which creates a massive number of energy demands, may encounter an inherent yet challenging problem for the charging station providers (CSPs) to sustain effective energy services at their charging stations (CSs). For example, in an unexpected situation when a vast number of EVs urgently need to charge the battery at once, the CSPs and their corresponding CSs may suffer from a high energy transfer cost from the smart grid provider (SGP) and heavy energy transfer congestion from the CSs to the EVs, respectively~\cite{Lopes:2011}. Moreover, due to the dynamic energy charging demands from the EVs in practice, the CSs may experience underestimation or overestimation of energy supply for the EVs. Hence, the CSPs require to make an effective economic model to maximize their profits, i.e., utilities, in the cost-effective energy transfer activity based on the dynamic energy demands.

To deal with the dynamic energy demands and optimize energy efficiency, we need to accurately predict energy demands from the EVs. Among few works that investigate energy demand prediction for EV networks, the authors in~\cite{Majidpour:2015, Chis:2017, Ma:2017, Fukushima:2018, Lopez:2019} proposed machine learning-based methods, i.e., k-nearest neighbor (kNN), online reinforcement learning, multiple-regression, shallow neural network (SNN), and deep neural network (DNN), to improve energy demand prediction accuracy at specific CSs/EVs. 
%Leveraging the history of daily energy prices, the authors in~\cite{} introduced demand response schemes using online reinforcement learning, aiming at optimizing energy charging for an EV and an energy provider. To improve the energy usage prediction accuracy for EVs, the authors in~\cite{Fukushima:2018} designed a recommendation model using a multiple regression-based method. Alternatively, a charging scheme for EVs to determine the charging duration using kNN, shallow neural network (SNN), and deep neural network (DNN) was developed in~\cite{}. 
Nevertheless, these approaches may not be useful for the whole EV network since they evaluate the prediction individually at each EV/CS. Moreover, each EV/CS generally has limited number of energy request data, and thus they are not sufficient to predict the energy demands accurately. To this end, it is crucial to leverage shared information or global models to predict the energy demands for further prediction accuracy improvement of the whole network.%especially when the number of EV charging transactions at each CS are limited. 

From the accurate prediction of energy demands, the CSs can acquire meticulous energy transfer from the SGP through economic models to satisfy the EVs' energy demands as well as optimize the CSs' profits in the EV network. For example, a collaborative scheme among CSs to distribute energy resources for EV charging was proposed in~\cite{Gusrialdi:2017, Wang:2018, Wang:2019}. However, those previous works assumed that all the CSs are owned by a single CSP. In reality, CSs may belong to different CSPs, and thus there exists non-collaborative game among the CSPs. Specifically, the CSPs compete through their CSs to attract the SGP and upcoming EVs for their own profit maximizations. 

In~\cite{Lee:2015, Yoon:2016, Yang:2016}, the authors presented non-collaborative game approaches in which CSs/EVs act as players to maximize their profits considering the influence of their neighboring CSs/EVs. 
%However, this work is lack of communication with the SGP, and thus it may lead to unreliable energy transfer and significantly increase the workload on the power grid system~\cite{Hu:2017}, especially when the EV charging requests often change quickly. As a result, we need to develop a communication strategy between the SGP and CSs, aiming at providing optimal energy transfer service from the SGP to the CSs. For example, 
%Then, the authors in~\cite{Tan:2017} proposed a hierarchical non-cooperative game among the CSs to efficiently find the optimal pricing policy and suitable CS selection for incoming EVs. 
In this case, all above approaches only work when the participating entities provide full knowledge (referred to as \emph{information symmetry}) for the energy transfer activity. 
%both approaches require to use the full knowledge of participating entities. Alternatively, they do not consider unbalanced information when one of the entities has private information (referred to as information asymmetry) in the network. 
%In this way, the CSPs can raise not only the potential economic profit and energy efficiency of the CSs, but also the satisfaction of the charging EVs. Additionally, the SGP can stabilize the energy supply through its power grid system in the energy transfer activity with the CSs. 
In practice, the SGP generally keeps its energy capacity as a private information (referred to as \emph{information asymmetry})~\cite{Chen:2019}, and thus the above approaches are not applicable. In addition, the above game theoretic-based methods only consider non-negotiable mechanism. Specifically, the SGP first informs the fixed price to the CSs. Then, the CSs need to adjust their amount of requested energy based on that given information without accounting for any price negotiation between the SGP and CSs. This inflexible process may reduce the CSs' expected profits in the energy transfer process. Thus, it will be challenging to discover an appropriate energy transfer policy which maximizes the profits for both the SGP and CSs. 

Given the above, in this work we investigate a contract theoretic-based economic model, an effective incentive mechanism leveraging flexible common agreements between the participating entities under the information asymmetry~\cite{Bolton:2005}. For example, the authors in \cite{Zhang:2018, Chen2:2019, Zhang3:2018} have recently proposed the contract-based approaches considering the incomplete information between a \emph{principal} and multiple \emph{agents} in the energy transfer activity. In this case, the principal offers contracts containing energy-payment bundle while the agents are responsible to accept or reject the offered contracts. Utilizing the contract theoretic-based method, we can provide fair and efficient contract negotiation for the principal and agents ahead of the actual energy transfer activity. If the contracts are accepted, the principal can maximize its utility while satisfying the individual rationality (IR) and incentive compatibility (IC) conditions from the agents{\footnote{The IR conditions play important roles in ensuring that the agents always achieve positive utilities. Meanwhile, the IC conditions guarantee maximum utilities of the agents when the best contracts from the principal are used.}. However, since all the aforementioned works assume only one principal, e.g., one CS, in the power market, they are lack of competition to attract at least one agent. In practice, an agent, e.g., an EV or the SGP, may be tempted to receive several contracts from different principals concurrently. Hence, it is necessary to develop an effective contract-based policy for multiple principals, e.g., the competing CSs, to maximize their profits considering the agent's IR and IC constraints and enhance the social welfare of the EV network.     

This work aims to develop a novel energy-efficient framework for an EV network of CSs using a multi-principal one-agent (MPOA) contract-based economic model~\cite{Bernheim:1986}. %This framework aims to not only predict energy demand accurately as well as decrease the communication overhead remarkably, but also maximize the profits, i.e., utilities, of the CSs as well as leverage the social welfare in the EV network. 
To that end, we first introduce a novel CS-based decentralized federated energy learning (DFEL) method where each CS can train its transaction dataset locally and send its learned model only, i.e., local gradient, to other CSs via the internet for global model update. In this way, we can enhance the upcoming energy demand prediction accuracy and decrease the communication overhead among the CSs. To efficiently minimize the biased prediction, i.e., the problem which is influenced by the combination of imbalanced features and labels in one dataset, we incorporate the CS clustering-based DFEL solution. Particularly, this solution can categorize the CSs into several groups prior to the independent execution of learning algorithm in each group. This aims to reduce the dataset dimension based on the important feature classification~\cite{Li:2018}, and thus further improve the prediction accuracy and the learning speed of the CSs.

Based on the energy demand prediction from the CS-based DFEL method, we formulate the CSs' utility maximization as a non-collaborative energy contract optimization problem leveraging the MPOA contract policy. Specifically, each CS maximizes its utility under the SGP's IR and IC constraints as well as other CSs' contracts to improve the social welfare of the EV network. This problem is implemented at the SGP because each CS only shares energy demand information with the SGP to minimize private information disclosure with other CSs. Then, we prove the existence of equilibrium contract solution for all the CSs and develop an iterative energy contract algorithm which can obtain the equilibrium. In this case, all the CSs achieve maximum utilities when the optimal contracts from them (which meet IC and IR constraints of the SGP) are applied. This equilibrium solution can achieve the social welfare within 9\% of that obtained by the information-symmetry energy contract, i.e., when each CS completely knows the current energy demands of other CSs and the true type of the SGP. Through simulation results, we show that our proposed method can obtain 24.63\% higher than other centralized machine learning methods~\cite{Boutaba:2018} in terms of the energy demand prediction accuracy. Furthermore, the proposed method can reduce the communication overhead by 96.3\% compared with them. We also demonstrate that our proposed method can increase the utilities of the CSs up to 35\% and the social welfare by 36\% compared with non-contract-based economic models~\cite{Tang:2014} in the energy transfer activity.

The main contributions are summarized as follows:

\begin{itemize}
	
	\item We introduce the machine learning-based methods that utilize the CS-based DFEL and the CS clustering-based DFEL algorithms to improve the energy demand prediction accuracy, reduce the communication overhead for CSs, and boost the learning speed.
	
	\item We formulate the non-collaborative problem that leverages the MPOA contract-based approach to maximize the utilities of the CSs and enhance the social welfare of the EV network.
	
	\item We develop the iterative energy contract algorithm which can find the equilibrium contract solution. We show that the final solution of the algorithm can achieve less than 9\% gap from the information-symmetry energy contract solution.
	
	\item We perform comprehensive simulations to evaluate the proposed method using the actual CS dataset in Dundee city, the United Kingdom. These results provide insightful information to help the CSs in designing the efficient learning methods and obtaining the best contracts.
	
\end{itemize}
The rest of this paper is organized as follows. Section~\ref{sec:EV_scheme} describes the energy-efficient EV network framework. Section~\ref{sec:PF} discusses the problem formulation. Section~\ref{sec:EDP} introduces the DFEL algorithm along with the clustering-based method. Then, Section~\ref{sec:PFT3} and Section~\ref{sec:PS} present the problem transformation and the proposed solution of the contract-based model, respectively. The performance evaluation is given in Section~\ref{sec:PE}, and then conclusion is drawn in Section~\ref{sec:Conc}.

\section{Energy-Efficient EV Network Framework}
\label{sec:EV_scheme}

In this section, we describe the EV network system model along with the utility function definitions of the SGP and CSs.

\subsection{System Model}
\label{sec:SM}

\begin{figure}[!t]
	\centering
	\includegraphics[scale=0.28]{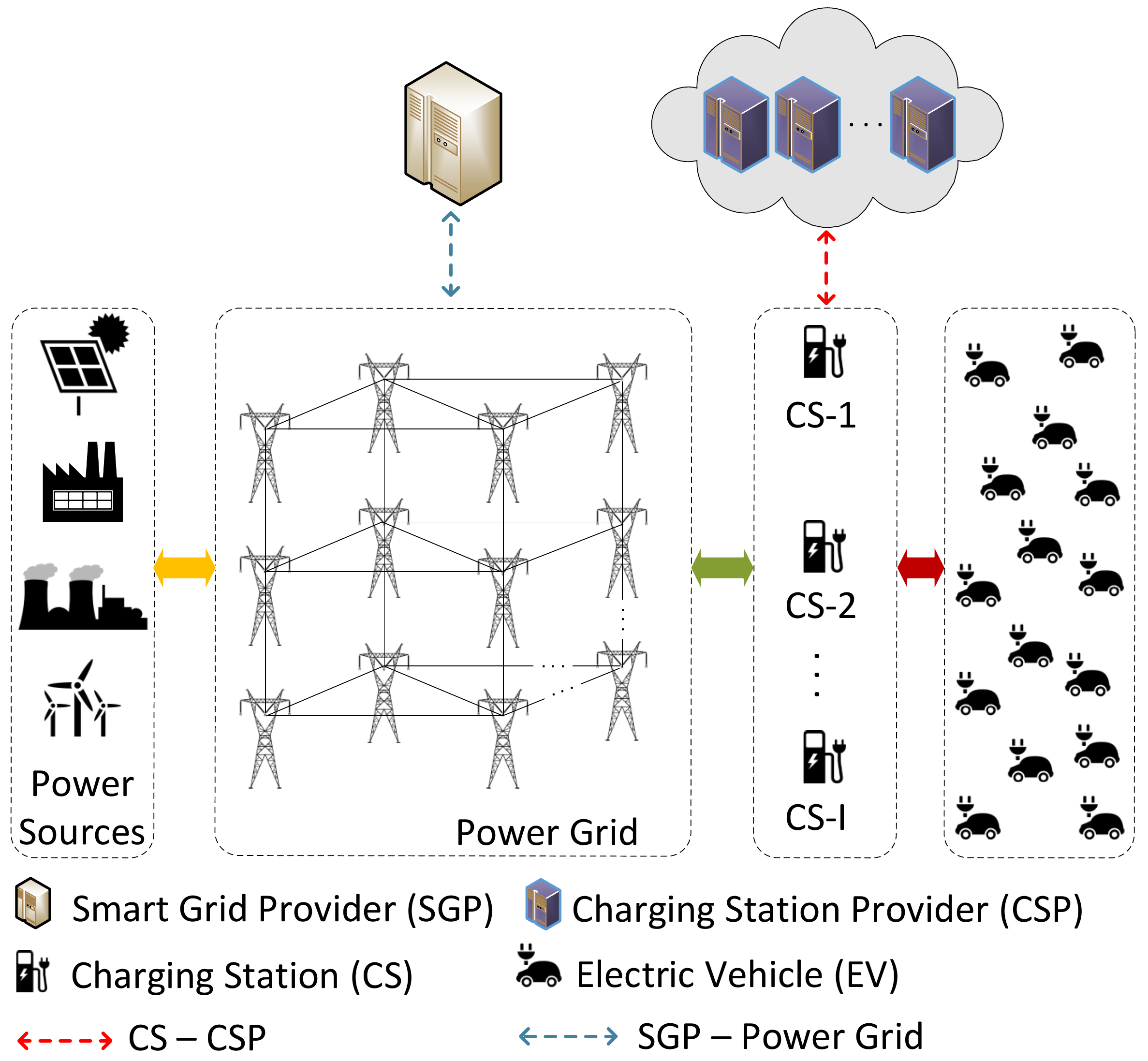}
	\caption{EV network system model.}
	\label{fig:Edge_Architecture}
\end{figure}

The EV network system model is illustrated in Fig.~\ref{fig:Edge_Architecture}. In particular, multiple CSPs orchestrate
CSs to obtain the energy from the power sources through a power grid system (which is controlled by an SGP) prior to EV charging at the CSs. For a particular period, each CS utilizes a record file which involves the EV charging transactions including CS and EV identifiers, transaction date, transaction time, and energy usage to keep track of all charging activities of the served EVs. To deal with the dynamic charging of the EVs, the CS will capture and update the record file regularly, e.g., every day~\cite{Saputra:2019}. Then, the CS can use this file to predict its energy demand for the upcoming period. Nonetheless, the number of charging transactions at each CS may not be sufficient to accurately predict the energy demand. Consequently, each CS requires to exchange its learned model, i.e., local gradient, to other CSs right after completing the energy demand learning over its local charging transaction dataset.
%As such, the CSs can reduce the energy transfer cost to serve the incoming EVs because the energy stored in advance may have inexpensive price~\cite{Roman:2011}. 
Upon predicting the energy demand, each CS can offer a contract containing predicted energy demand and offered payment to the SGP at a pre-defined time interval, i.e., every day or every week, to maximize its utility for energy transfer activity. 

We denote $\mathcal{I} = \{1,\ldots,i,\ldots,I\}$ and $\mathcal{N}_i = \{1,\ldots,n,\ldots,N_i\}$ to be the set of CSs in the EV network and the set of EV charging transactions at CS-$i$, respectively. Then, we specify the total energy demand and the energy demand of transaction $n$ at CS-$i$ as ${\xi}_{i}$ and $\xi_{i}^{n}$, respectively. To perform energy transfer activity for the CSs, the SGP requires to store an amount of energy at its storage with the energy capacity of $S$. We also denote $\rho_i$ and $\varrho_i$ to be the energy transfer payment for the SGP at CS-$i$ and the energy charging price per MWh for EVs at CS-$i$, respectively. To represent the SGP's private information, i.e., energy capacity~\cite{Chen:2019} and the willingness to transfer energy~\cite{Zhang:2015}, we define $\Phi = \{\phi_{min},\ldots,\phi,\ldots,\phi_{max}\}$ to be the set of the SGP's types, where $\phi_{min}$ and $\phi_{max}$ refer to the lowest and highest possible types of the SGP, respectively. In practice, the SGP may have finite number of possible types~\cite{Hu:2019}. As such, the SGP with a higher type implies a larger energy capacity and a higher willingness to transfer energy for the CSs. For example, the SGP with type 1, type 2, and type 3 represent small, intermediate, and large energy capacity, respectively. Despite that the CSs do not have any knowledge of the exact type of the SGP, they can still observe the distributions of all possible types~\cite{Zhang:2015, Hu:2019}, which is defined as $p(\phi), \forall \phi \in \Phi$. From the $p(\phi)$, each CS can offer an energy contract to the SGP according to its own energy demand and financial plan. 

To participate in the energy transfer, the SGP will generate a proportion of each CS's energy demand due to the limited energy at the SGP's storage. Particularly, we specify a vector of proportions that the SGP with type $\phi$ will transfer the requested energy to all the CSs by $\boldsymbol{\pi} = [\pi_1,\ldots,\pi_i,\ldots,\pi_I]$, where $0 \leq \pi_i \leq 1$ (as the SGP knows its own type, we do not need to consider the proportion as a function of type $\phi$). Furthermore, we define a vector of all the CSs' requested energy by $\boldsymbol{\xi}(\phi) = [\xi_1(\phi),\ldots,\xi_i(\phi),\ldots,\xi_I(\phi)], \forall \phi \in \Phi$, where $\xi_i(\phi) \geq 0, \forall i \in \mathcal{I}, \forall \phi \in \Phi$. Correspondingly, for all energy demands, all the CSs will offer a vector of payments to the SGP which is denoted by $\boldsymbol{\rho}(\phi) = [\rho_1(\phi),\ldots,\rho_i(\phi),\ldots,\rho_I(\phi)], \forall \phi \in \Phi$, where $\rho_i(\phi) \geq 0, \forall i \in \mathcal{I}, \forall \phi \in \Phi$. Intuitively, the payment increases when the requested energy increases and vice versa~\cite{Chen2:2019,Su:2019}, i.e., $\frac{d\rho_i(\phi)}{d\xi_i(\phi)} > 0, \frac{d\xi_i(\phi)}{d\rho_i(\phi)} > 0,  \forall i \in \mathcal{I}, \forall \phi \in \Phi$.
%\begin{equation}
%\label{eqn2a}
%\begin{aligned}
%\frac{d\xi_i(\phi)}{d\phi} > 0, \forall i \in \mathcal{I}, \forall \phi \in [\phi_{min}, \phi_{max}],
%\end{aligned}
%\end{equation}
%and
%\begin{equation}
%\label{eqn2b}
%\begin{aligned}
%\frac{d\rho_i(\phi)}{d\phi} > 0, \forall i \in \mathcal{I}, \forall \phi \in [\phi_{min}, \phi_{max}].
%\end{aligned}
%\end{equation}
%The distribution of CSs is described in Fig.~\ref{fig:Edge_Architecture}. This distribution is obtained from real data provided in~\cite{Dundee:2018} for locations of the CSs in Dundee City, the United Kingdom between 2017 and 2018.

\subsection{Utility Functions of the SGP and CSs}

Given $\boldsymbol{\xi}(\phi)$ and $\boldsymbol{\rho}(\phi)$, the utility function of the SGP with type $\phi$ is expressed by
\begin{equation}
\label{eqn:for3}
\begin{aligned}
\phi G(\boldsymbol{\pi},\boldsymbol{\rho}(\phi)) - C(\boldsymbol{\pi},\boldsymbol{\xi}(\phi)),
\end{aligned}
\end{equation}
where $\phi$ in the first component is to show the true type of the SGP when receiving certain contracts from CSs. Moreover, $G(\boldsymbol{\pi},\boldsymbol{\rho}(\phi))$ and $C(\boldsymbol{\pi},\boldsymbol{\xi}(\phi))$ represent the gain and cost functions, respectively. Specifically, we apply a natural logarithm function which is widely used to quantify the gain of energy providers~\cite{Chen2:2019,Lee2:2015}. Thus, the gain function can be written by
\begin{equation}
\label{eqn:for4}
\begin{aligned}
G(\boldsymbol{\pi},\boldsymbol{\rho}(\phi)) = \text{ ln}\Bigg(1 + \sum_{i=1}^I \pi_i\rho_i(\phi)\Bigg).
\end{aligned}
\end{equation}
From Eq.~(\ref{eqn:for4}), the gain function follows the law of diminishing returns, i.e., the function increases as the received payments from the CSs increase and the marginal utility keeps decreasing until the function achieves the saturation point~\cite{Samuelson:2005}. 
%where $G(\boldsymbol{\pi}, 0) = 0$, $G'(\boldsymbol{\pi},\boldsymbol{\rho}(\phi)) > 0$, and $G''(\boldsymbol{\pi},\boldsymbol{\rho}(\phi)) < 0$ for all $\boldsymbol{\rho}(\phi)$. 
Meanwhile, the cost function of the SGP to transfer energy for all CSs can be formulated by
\begin{equation}
\label{eqn:for5}
\begin{aligned}
C(\boldsymbol{\pi},\boldsymbol{\xi}(\phi)) &=  \zeta\sum_{i=1}^{I}\pi_i\xi_i(\phi),
\end{aligned}
\end{equation}
where $\zeta > 0$ is the energy transfer cost per unit. 

According to Eq.~(\ref{eqn:for3}), we need to maximize the utility of the SGP with type $\phi$ as the following optimization problem:
\begin{equation}
\label{eqn:for3b}
\begin{aligned}
(\mathbf{P}_1) \phantom{10} & \max_{\boldsymbol{\pi}} \phantom{5} \phi G(\boldsymbol{\pi},\boldsymbol{\rho}(\phi)) -  C(\boldsymbol{\pi},\boldsymbol{\xi}(\phi)).
\end{aligned}
\end{equation}
\begin{eqnarray}
\text{ s.t. } \quad &\overset{I}{\underset{i=1}{\sum}} \pi_i \xi_i(\phi) \leq S(\phi), \label{eqn:for3c}  \\
&0 \leq \pi_i \leq 1, \forall i \in \mathcal{I}, \label{eqn:for3c2} 
\end{eqnarray}
where $S(\phi)$ represents the energy capacity for the SGP with type $\phi$. The constraint (\ref{eqn:for3c}) specifies that the total amount of actual energy demands from CSs must not exceed the energy capacity of the SGP. Based on $(\mathbf{P}_1)$, we can find the optimal $\boldsymbol{\hat \pi}$, where $\boldsymbol{\hat \pi} = [{\hat \pi}_1,\ldots,{\hat \pi}_i,\ldots,{\hat \pi}_I]$. This $\boldsymbol{\hat \pi}$ can be obtained straightforwardly due to the convexity of the SGP's utility function. Specifically, the gain function follows the law of diminishing returns with a positive, strictly concave, continuously differentiable, and strictly increasing utility function~\cite{Bernheim:1986,Lee2:2015}, i.e., $G(\boldsymbol{\pi}, 0) = 0$, $G'(\boldsymbol{\pi},\boldsymbol{\rho}(\phi)) > 0$, and $G''(\boldsymbol{\pi},\boldsymbol{\rho}(\phi)) < 0$ for all $\boldsymbol{\rho}(\phi)$. In addition, the cost function holds a linear function which is a positive and strictly increasing function~\cite{Bolton:2005}. Considering $\boldsymbol{\hat \pi}$, the CSs compensate energy transfer payments to obtain certain amount of energy from the SGP. As a result, the expected utility function of a CS-$i$, which specifies the expected profit of the CS-$i$ obtained from the energy charging activity to EVs, can be expressed as
\begin{equation}
\label{eqn:for6}
\begin{aligned}
U_i(\boldsymbol{\rho}(\phi), \boldsymbol{\xi}(\phi)) = {\hat \pi}_i\sum_{\phi=\phi_{min}}^{\phi_{max}}\Big(\varrho_i \xi_i(\phi) - \rho_i(\phi)\Big) p(\phi),
\end{aligned}
\end{equation}
where $\overset{\phi_{max}}{\underset{\phi=\phi_{min}}{\sum}} p(\phi)=1$, ${\hat \pi}_i\Big( \varrho_i \xi_i(\phi) - \rho_i(\phi)\Big)$ is the actual utility function of CS-$i$ which implements the possible type $\phi$, and $\varrho_i > 0$ is the gain parameter representing the energy price unit per MWh for served EVs at CS-$i$. Furthermore, the use of $\sum(.)$ indicates that the expected utility of each CS relies on the SGP's possible type distribution. 

From the optimal proportions $\boldsymbol{\hat \pi}$, and contracts $\Big(\boldsymbol{\rho}(\phi), \boldsymbol{\xi}(\phi)\Big), \forall \phi \in \Phi$, we can derive the social welfare of the network as the total ulitilies of the SGP with type $\phi$ and all the participating CSs, which is
\begin{equation}
\label{eqn:for9rev3}
\begin{aligned}
U_{SW}(\phi) = \phi\text{ ln}\Bigg(1 + \sum_{i=1}^I {\hat \pi_i}{ \rho}_i(\phi)\Bigg) - \zeta\sum_{i=1}^{I}{\hat \pi}_i{\xi}_i(\phi) \\ + \sum_{i=1}^I{\hat \pi}_i\Big( \varrho_i {\xi}_i(\phi) - {\rho}_i(\phi)\Big), \forall \phi \in \Phi.
\end{aligned}
\end{equation}

%==========================================================================================
%==========================================================================================
\section{Problem Formulation}\label{sec:PF}

Based on the utility functions in Eq.~(\ref{eqn:for3})-(\ref{eqn:for6}), we can formulate the energy contract optimization problem utilizing $I$ CSs as the principals and one SGP as the agent in the EV network. The objective is to maximize the expected profits, i.e., utilities, of all CSs independently while satisfying the utility requirements of the SGP. First, as a benchmark case, consider that each CS has current energy demand information from other CSs and the true type, i.e., energy capacity, of the SGP. In this case, the offered contracts $\Big(\boldsymbol{\rho}(\phi),\boldsymbol{\xi}(\phi)\Big)$ for type $\phi$ must satisfy individual rationality (IR) requirements as defined in Definition~\ref{DefV1}. 
\begin{definition}{}\label{DefV1}
	Individual rationality constraint: The SGP must obtain a non-negative utility, i.e.,
	\begin{equation}
	\label{eqn:for7a}
	\begin{aligned}
	%	U_{SGP}(\phi,\boldsymbol{\rho}(\phi),\boldsymbol{\xi}(\phi)) \geq U_{SGP}^*, \forall \phi \in [\phi_{min}, \phi_{max}], \\
	\phi G(\boldsymbol{\hat \pi},\boldsymbol{\rho}(\phi)) - C(\boldsymbol{\hat \pi},\boldsymbol{\xi}(\phi)) \geq 0, \forall \phi \in \Phi.
	\end{aligned}
	\end{equation}
\end{definition}
Given the SGP's type $\phi$, we can formulate the information-symmetry energy contract optimization problem for each CS-$i$ to maximize its utility as follows:
\begin{equation}
\label{eqn:for90}
\begin{aligned}
(\mathbf{P}_2) \phantom{10} & \underset{\boldsymbol{\rho}(\phi),\boldsymbol{\xi}(\phi)}{\text{max}} \phantom{5} {\hat \pi}_i\Big( \varrho_i \xi_i(\phi) - \rho_i(\phi)\Big), \forall i \in \mathcal{I},
\end{aligned}
\end{equation}
\begin{eqnarray}
\text{ s.t. } \quad &\overset{I}{\underset{i=1}{\sum}} {\hat \pi}_i \xi_i(\phi) \leq S(\phi), \label{eqn:for90a} \\
& \phi G(\boldsymbol{\hat \pi},\boldsymbol{\rho}(\phi)) - C(\boldsymbol{\hat \pi},\boldsymbol{\xi}(\phi)) \geq 0 \label{eqn:for90b}.
\end{eqnarray}

In practice, the SGP keeps its type, i.e., the energy capacity and the willingness to transfer energy, private and unobservable to the CSs. As a result, it leads to the information asymmetry between the SGP and the CSs. Due to this information asymmetry, in addition to the IR constraints, the offered contracts must also satisfy the incentive compatibility (IC) constraints (as defined in the following Definition~\ref{DefV2}) to ensure the feasibility of the contracts.

\begin{definition}{}\label{DefV2}
	Incentive compatibility constraint: The SGP with type $\phi$ obtains maximum utility when it receives contracts designed for its true type $\phi$, i.e.,
	\begin{equation}
	\label{eqn:for7b}
	\begin{aligned}
\phi G(\boldsymbol{\hat \pi},\boldsymbol{\rho}(\phi)) - C(\boldsymbol{\hat \pi},\boldsymbol{\xi}(\phi)) \geq \phi G(\boldsymbol{\hat \pi},\boldsymbol{\rho}(\hat \phi)) - C(\boldsymbol{\hat \pi},\boldsymbol{\xi}(\hat \phi)), \\ \phi \neq {\hat \phi}, \forall \phi, {\hat \phi} \in \Phi,
	\end{aligned}
	\end{equation}
	where ${\hat \phi}, \forall {\hat \phi} \in \Phi$ are possible types of the SGP with ${\hat \phi} \neq \phi$.
\end{definition}
%Extending from the cooperation among CSs when only one CSP is available in the EV network, 
From the aforementioned IR and IC constraints, we can formulate a non-collaborative energy contract optimization problem where multiple CSPs through their CSs do not exchange the energy demand information with each other due to their selfishness and privacy concerns. Specifically, each CSP controls different CSs and offers a contract to the SGP separately. In other words, each CS-$i$ can communicate with the SGP independently using its individual contract $\Big(\rho_i(\phi),\xi_i(\phi)\Big)$. Then, the energy contract optimization problem $(\mathbf{P}_3)$ to maximize the expected utility for CS-$i$ separately at the SGP can be written as
\begin{equation}
\label{eqn:for9}
\begin{aligned}
(\mathbf{P}_3) \phantom{10} & \underset{\boldsymbol{\rho}(\phi),\boldsymbol{\xi}(\phi)}{\text{max}} \phantom{5} U_{i}(\boldsymbol{\rho}(\phi), \boldsymbol{\xi}(\phi)), \forall i \in \mathcal{I},
\end{aligned}
\end{equation}
\begin{eqnarray}
&\text{ s.t. } \quad \overset{I}{\underset{i=1}{\sum}} {\hat \pi}_i \xi_i(\phi) \leq S(\phi), \forall \phi \in \Phi, \label{eqn:for9a} \\
&\phi G(\boldsymbol{\hat \pi},\boldsymbol{\rho}(\phi)) - C(\boldsymbol{\hat \pi},\boldsymbol{\xi}(\phi)) \geq 0, \forall \phi \in \Phi, \label{eqn:for9b} \\
&\phi G(\boldsymbol{\hat \pi},\boldsymbol{\rho}(\phi)) - C(\boldsymbol{\hat \pi},\boldsymbol{\xi}(\phi)) \geq \nonumber \\
&\phi G(\boldsymbol{\hat \pi},\boldsymbol{\rho}(\hat \phi)) - C(\boldsymbol{\hat \pi},\boldsymbol{\xi}(\hat \phi)),  
\phi \neq {\hat \phi}, \forall \phi, {\hat \phi} \in \Phi \label{eqn:for9c}.
\end{eqnarray}
From the $(\mathbf{P}_3)$, the SGP's choice of ${\hat \pi}_i$ will impact the optimal contract of CS-$i$. In particular, the SGP is expected to give more proportions to a CS which requests lower energy transfer because this CS can help to increase the SGP's utility. Meanwhile, the CS can obtain higher utility when it requests a larger amount of energy transfer to the SGP. These conditions trigger a utility maximization trade-off between the SGP and CSs. In this way, each CS will compete with other CSs in a competition strategy to attract the SGP and find an equilibrium solution. Given the equilibrium contract choices of other CSs $\Big(\boldsymbol{\hat \rho}_{-i}(\phi), \boldsymbol{\hat \xi}_{-i}(\phi)\Big), \forall \phi \in \Phi$, the expected utility of CS-$i$ with equilibrium contract selection $\Big({\hat \rho}_i(\phi),{\hat \xi}_i(\phi)\Big), \forall \phi \in \Phi$ must be the maximum one among the expected utilities of the CS-$i$ as defined in Definition~\ref{DefV3}. In other words, there is no CS-$i$ which has an additional expected utility to deviate from its $\Big({\hat \rho}_i(\phi), {\hat \xi}_i(\phi)\Big), \forall \phi \in \Phi$, unilaterally.
\begin{definition}{}\label{DefV3}
	Equilibrium contract solution for non-collaborative energy contract optimization problem: The optimal contracts $\Big(\boldsymbol{\hat \rho}(\phi), \boldsymbol{\hat \xi}(\phi)\Big), \forall \phi \in \Phi$, are the equilibrium solution of the $(\mathbf{P}_3)$ if and only if the following conditions hold
	\begin{equation}
	\label{eqn:for10}
	\begin{aligned}
	U_i({\hat \rho}_i(\phi), {\hat \xi}_i(\phi), \boldsymbol{\hat \rho}_{-i}(\phi), \boldsymbol{\hat \xi}_{-i}(\phi)) \geq \\ U_i(\rho_i(\phi),\xi_i(\phi),\boldsymbol{\hat \rho}_{-i}(\phi), \boldsymbol{\hat \xi}_{-i}(\phi)), \forall \phi \in \Phi, \forall i \in \mathcal{I},
	\end{aligned}
	\end{equation}
	that satisfy the constraints (\ref{eqn:for9b}) and (\ref{eqn:for9c}).
\end{definition}

To find the optimal contracts $\Big(\boldsymbol{\hat \rho}(\phi), \boldsymbol{\hat \xi}(\phi)\Big)$ and then achieve the maximum profits of CSs while improving the social welfare of the EV network, we propose the following procedures as illustrated in Fig.~\ref{fig:contract_generation}. First, the CSs require to predict energy demands for all the CSs utilizing federated learning to enhance the prediction accuracy. Based on this accurate prediction, the CSs send initial contracts containing the predicted energy demand and offered payment information to the SGP. Then, we develop the feasible MPOA contract-based problem through transforming the $(\mathbf{P}_3)$ into simplified problem at the SGP. Based on the problem transformation, we can design the iterative energy contract algorithm as the proposed solution to find the equilibrium contract solution, i.e., optimal contracts, before sending back to the CSs.

\section{Federated Energy Learning} \label{sec:EDP}
In this section, we aim to predict energy demands of CSs locally using a CS-based DFEL method. We then extend the framework by applying a CS clustering-based method to further improve the prediction accuracy and the learning speed. All methods are applicable in the following scenarios.

\begin{figure}[t]
	\centering
	\includegraphics[scale=0.4]{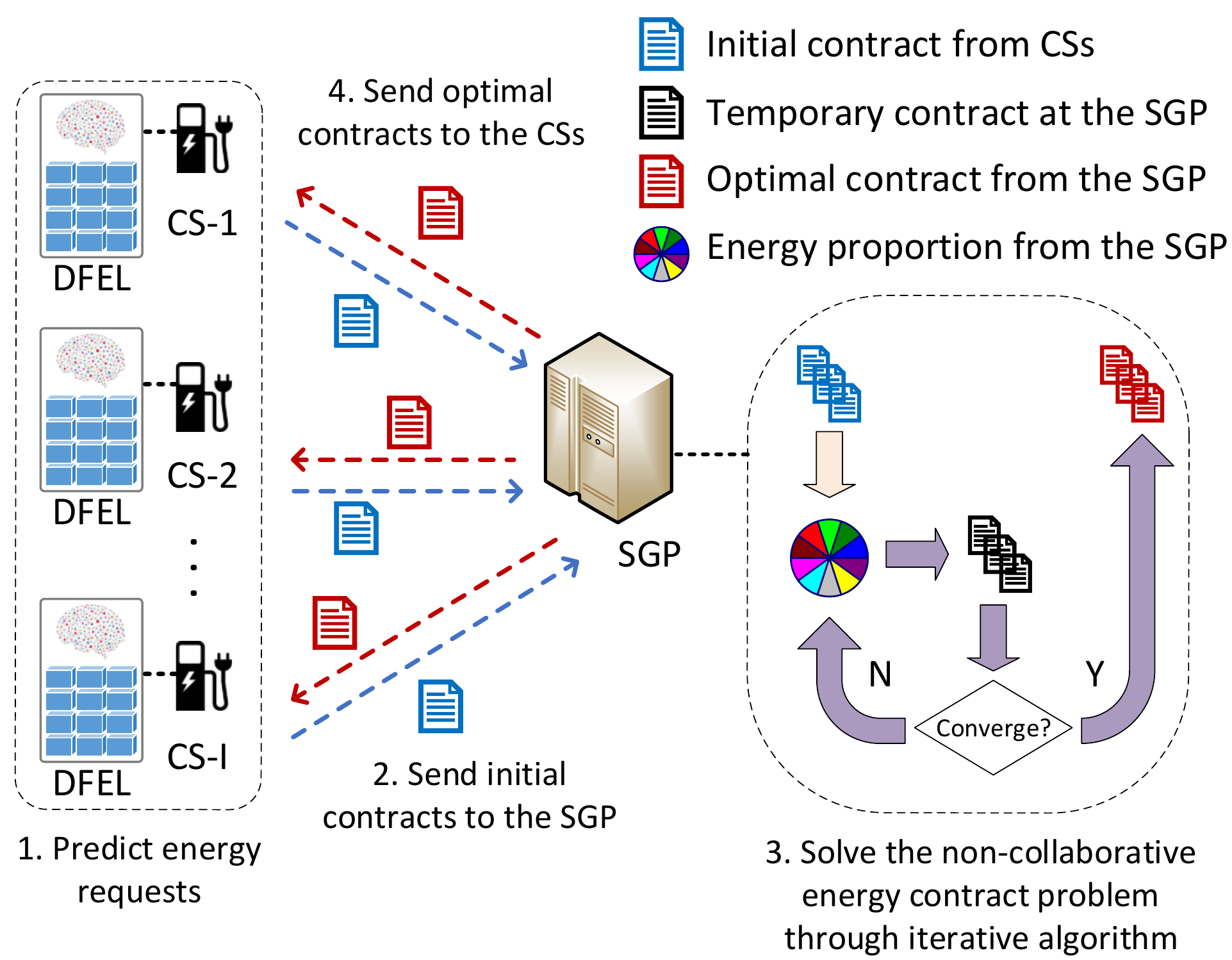}
	\caption{Illustration of the optimal contract generation process.}
	\label{fig:contract_generation}
\end{figure}

\begin{figure*}[!]
	\begin{center}
		$\begin{array}{cc} 
		\epsfxsize=2.7 in \epsffile{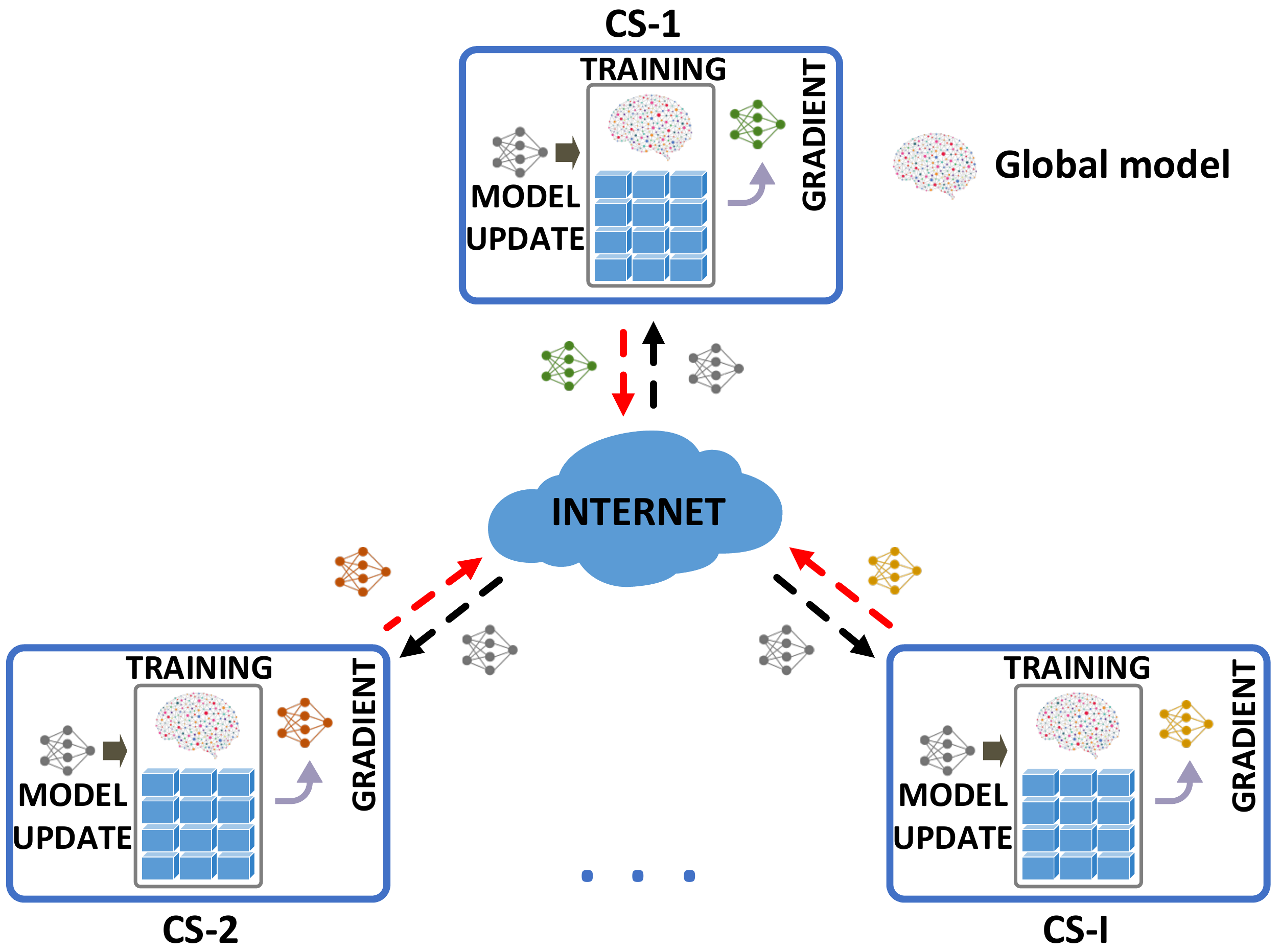} &
		\hspace*{0.4cm}
		\epsfxsize=4 in \epsffile{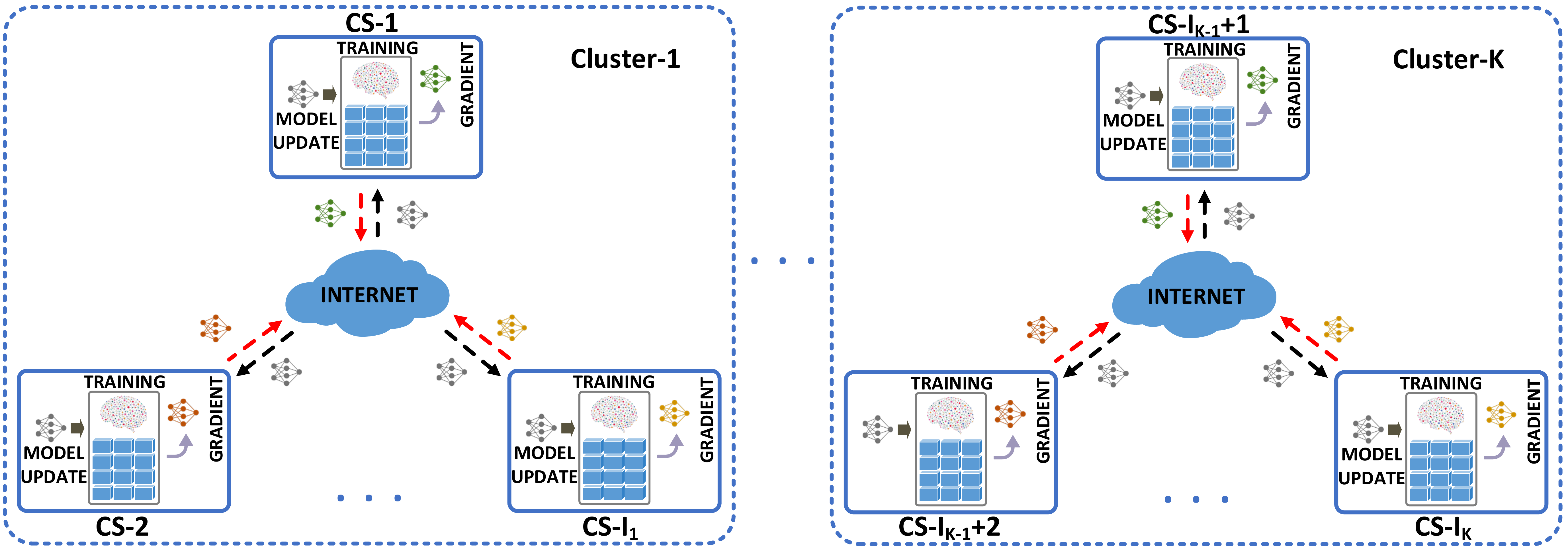} \\ [0.1cm]
		\text{\footnotesize (a) CS-based DFEL} & \text{\footnotesize (b) CS-Based DFEL with clustering}  \\ [0.2cm]
		\end{array}$
		\caption{The proposed federated energy learning.}
		\label{fig:learning_method}
	\end{center}
\end{figure*}

%to find the optimal number of clusters and centroids to minimize the squared distance between CSs and their centroids.
%globally  and locally and model aggregation processes at CSs and MENs, respectively

%====================================================
%====================================================
\subsection{CS-Based Decentralized Federated Energy Learning}
\label{subsec:EDP1}

In this method, each CS can train its local energy charging transaction dataset independently as shown in Fig.~\ref{fig:learning_method}(a). The CS can then send and receive learned models, i.e., gradient information of the DNN, only to and from other CSs, respectively. %without sending any actual dataset to other CSs or the CSPs in the network. 
This is different from the conventional or centralized learning method where all the CSs send their local actual datasets to the cloud server for the learning process. Leveraging the proposed learning method, we can predict the energy demand of the CSs accurately and reduce the communication overhead as well as private information disclosure of the CSs significantly. Naturally, the CSs compete to attract incoming EVs, and thus it is beneficial to keep their economic model private from each other. Using its own and other CSs' learned models, the CS can update the global model locally~\cite{Li:2019}. Then, the CSs can use the current global model to train its own energy charging transaction dataset using the learning algorithm for the next interval. As such, the CSs operate as workers to train their EV charging transactions, i.e., $\mathbf{W}_i$,  including CS and EV identifiers, transaction date, and transaction time as the features as well as $\xi_i^n, \forall n \in \mathcal{N}_i$, $\forall i \in \mathcal{I}$ as the label, locally. %after normalizing all transaction information. 

To train the dataset at CSs, we utilize a deep learning algorithm through the DNN. In particular, consider $\mathcal{L} = \{1,\ldots,l,\ldots,L\}$ as the set of learning layers. In each layer-$l$, we have $\mathbf{A}_l$ as the global weight matrix and $\mathbf{b}_l$ as the global bias vector. Then, each CS can generate the learning input matrix based on the $\mathbf{W}_i$ of layer $l$, i.e., $\mathbf{W}_i^l$, to obtain the learning output matrix of layer-$l$ by
\begin{equation}
\label{eqn3a}
\begin{aligned}
\mathbf{\hat W}_i^{l} = \alpha_i \big(\mathbf{A}_l\mathbf{W}_i^{l} + \mathbf{b}_l\big),
\end{aligned}
\end{equation} 
where $\alpha_i$ represents a \emph{tanh} activation function~\cite{Zhang2:2018} to determine the output of DNN using the hyperbolic tangent of $\mathbf{W}_i^l$ at CS-$i$ as follows:
\begin{equation}
\label{eqn3a2}
\begin{aligned}
\alpha_i = \frac{e^{\mathbf{W}_i^l}-e^{-\mathbf{W}_i^l}}{e^{\mathbf{W}_i^l}+e^{-\mathbf{W}_i^l}}.
\end{aligned}
\end{equation}
The use of this function is to avoid zero-value gradient when the weight matrix is initialized. We also apply several hidden layers $l$, where $1 < l < L$ and $\mathbf{W}_i^{l+1} = \mathbf{\hat W}_i^l$. To deal with the overfitting and generalization error, we add a dropout layer $l_{\emph{\mbox{drop}}}$, where $l_{\emph{\mbox{drop}}} < L$, after the final hidden layer. In this case, the $l_{\emph{\mbox{drop}}}$ will randomly drop the $\mathbf{W}_i^{l_{\emph{\mbox{drop}}}}$ with a certain fraction rate. %Hence, the remaining input elements are scaled by $\frac{1}{1-f}$. 

Considering $\mathbf{A} = [\mathbf{A}_1,\ldots,\mathbf{A}_l,\ldots,\mathbf{A}_L]$ and $\mathbf{b} = [\mathbf{b}_1,\ldots,\mathbf{b}_l,\ldots,\mathbf{b}_L]$ as the global weight and bias vectors of all layers, we can define $\boldsymbol{\psi} = (\mathbf{A}, \mathbf{b})$ as the global model for all layers. Then, we can derive the prediction error $\omega_i(\boldsymbol{\psi}^{(\tau)})$ for the time when we have checked all charging transactions of $\mathbf{W}_i$, i.e., epoch time $\tau$, at CS-$i$ by
\begin{equation}
\label{eqn3b}
\begin{aligned}
\omega_i(\boldsymbol{\psi}^{(\tau)}) = \sum_{n=1}^{N_i}\omega^{n}_i(\boldsymbol{\psi}^{(\tau)}),
\end{aligned}
\end{equation}
where $\omega_i^{n}(\boldsymbol{\psi}^{(\tau)}) = ({\hat w}_i^{n} - w_i^{n})^2$, with $w_i^{n}$ and ${\hat w}_i^{n}$ are the components of learning input matrix $\mathbf{W}_i^1$ and learning output matrix $\mathbf{\hat W}_i^L$, respectively. Based on $\omega_i(\boldsymbol{\psi}^{(\tau)})$, the local gradient of CS-$i$ at $\tau$ can be calculated by 
\begin{equation}
\label{eqn3c}
\begin{aligned}
\nabla \boldsymbol{\psi}_i^{(\tau)} = \frac{\partial \omega_i(\boldsymbol{\psi}^{(\tau)})}{\partial \boldsymbol{\psi}^{(\tau)}}.
\end{aligned}
\end{equation}
Upon obtaining $\nabla \boldsymbol{\psi}_i^{(\tau)}, \forall i \in \mathcal{I}$, the CSs exchange them for global gradient accumulation through the internet which is described by
\begin{equation}
\label{eqn3i}
\begin{aligned}
\nabla \boldsymbol{\psi}^{(\tau)} = \frac{1}{I}\sum_{i=1}^{I} \nabla \boldsymbol{\psi}_i^{(\tau)}.
\end{aligned}
\end{equation}
To guarantee that there is no outdated global model to calculate the local gradients, we apply the gradient accumulation after successfully collecting $I$ local gradients from all the CSs on the internet. Then, this accumulated global gradient is sent back to the CSs for the global model update. In this way, all the CSs can share the local gradients and update the global model $\boldsymbol{\psi}^{(\tau)}$ collaboratively to leverage the prediction accuracy. 

To obtain minimum prediction error, i.e., $\underset{\boldsymbol{\psi}}{\text{\bf min }}\omega_i(\boldsymbol{\psi})$, we use \emph{Adam} optimizer as the adaptive learning rate~\cite{Kingma:2015}, aiming at producing high robustness and achieving fast convergence to the global model. Specifically, we denote $\eta_\tau$ and $\delta_\tau$ to be the exponential moving average of the $\nabla \boldsymbol{\psi}^{(\tau)}$ and the squared $\nabla \boldsymbol{\psi}^{(\tau)}$ to obtain the variance at $\tau$, respectively. Then, we can express the update rules of $\eta_{\tau+1}$ and $\delta_{\tau+1}$ in the following equation:
\begin{equation}
\begin{aligned}
\label{eqn3d1}
\eta_{\tau+1} &= \gamma_\eta^\tau \eta_{\tau} + (1 - \gamma_\eta^\tau)\nabla \boldsymbol{\psi}^{(\tau)},\\
\delta_{\tau+1} &= \gamma_\delta^\tau \delta_{\tau} + (1 - \gamma_\delta^\tau)(\nabla \boldsymbol{\psi}^{(\tau)})^2,
\end{aligned}
\end{equation}
where $\gamma_\eta^\tau$ and $\gamma_\delta^\tau \in [0,1)$ specify $\eta_\tau$'s and  $\delta_\tau$'s steps of the exponential decays at $\tau$, respectively. To further control how often the global model is updated, we also account for the learning step $\lambda$. As such, we can update the $\lambda$ as follows:
\begin{equation}
\label{eqn3d}
\begin{aligned}
\lambda_{\tau+1} = \lambda\frac{\sqrt{1 - \gamma_\delta^{\tau+1}}}{1 - \gamma_\eta^{\tau+1}}.
\end{aligned}
\end{equation}
Finally, we can update the global model $\boldsymbol{\psi}^{(\tau+1)}$ to train $\mathbf{W}_i, \forall i \in \mathcal{I}$, for the next $\tau+1$ by
\begin{equation}
\label{eqn3e}
\begin{aligned}
\boldsymbol{\psi}^{(\tau+1)} = \boldsymbol{\psi}^{(\tau)} - \lambda_{\tau+1}\frac{\eta_{\tau+1}}{\sqrt{\delta_{\tau+1}} + \varepsilon},
\end{aligned}
\end{equation}
where $\varepsilon$ indicates a fixed value to avoid zero division when the $\sqrt{\delta_{\tau+1}}$ closes to zero. The learning process are repeated until the prediction error reaches a minimum convergence, or a pre-defined epoch time threshold $\tau_{\emph{\mbox{epoch}}}$ is obtained. To predict $\mathbf{\hat W}^{*}_i, \forall i \in \mathcal{I}$ of training dataset $\mathbf{W}_i, \forall i \in \mathcal{I}$ and new dataset at CS-$i$, $\forall i \in \mathcal{I}$, we can first generate the final global model $\boldsymbol{\psi}^*$ and then use Eq.~(\ref{eqn3a}) at all the CSs. The summary of the CS-based DFEL algorithm is presented in Algorithm~\ref{DDL-PC}.

\begin{algorithm}[]
	\caption{CS-Based DFEL Algorithm} \label{DDL-PC}
	
	\begin{algorithmic}[1] 
		
		\STATE Set $\alpha_i, \forall i \in \mathcal{I}$ and initial $\boldsymbol{\psi}^{(\tau)}$ when $\tau=0$ for all CSs
		
		\STATE Generate $\mathbf{W}_i$ containing $\xi_i^n, \forall n \in \mathcal{N}_i$, $\forall i \in \mathcal{I}$
		
		\WHILE{$\tau \leq \tau_{\emph{\mbox{epoch}}} \text{ {\bf and} } \omega_i(\boldsymbol{\psi}^{(\tau)}), \forall i \in \mathcal{I} \text{ do not converge}$}
		
		\FOR{$\forall i \in \mathcal{I}$}
		
		\STATE  $\mathbf{W}_i^{1}$ to produce $\mathbf{\hat W}_i^{L}$ at layer-$L$ using $\boldsymbol{\psi}^{(\tau)}$
		
		\STATE Compute $\omega_i(\boldsymbol{\psi}^{(\tau)})$ and $\nabla \boldsymbol{\psi}_i^{(\tau)}$
		
		\STATE Send $\nabla \boldsymbol{\psi}_i^{(\tau)}$ to the internet for global gradient accumulation
		
		\STATE Receive $\nabla \boldsymbol{\psi}^{(\tau)}$ from the internet
		
		\STATE Calculate $\boldsymbol{\psi}^{(\tau)}$
		
		\STATE Update global model $\boldsymbol{\psi}^{(\tau+1)}$
		
		\ENDFOR
		
		\STATE $\tau \leftarrow \tau+1$
		
		\ENDWHILE
		
		\FOR{$\forall i \in \mathcal{I}$}
		
		\STATE Predict $\mathbf{\hat W}^{*}_i$ for the next-interval energy demands using $\mathbf{W}^{*}_i$ and $\boldsymbol{\psi}^*$
		
		\ENDFOR
		
	\end{algorithmic}
\end{algorithm}
%\begin{figure}[!t]
%	\centering
%	\includegraphics[scale=0.4]{Figs/DDL_architecture}
%	\caption{Federated energy demand learning architecture.}
%	\label{fig:DDL_Architecture}
%\end{figure}

\subsection{CS Clustering-Based Energy Learning}
\label{sec:CSC}

Due to the learned model exchanges among large number of CSs, the use of CS-based DFEL method may slow down the learning process. Moreover, the CS-based DFEL algorithm may provide biased energy demand prediction if we do not take the important feature classfication into account when learning the dataset. This may occur if we join disparated features and defined labels into one dataset. To address these issues, we can classify all CSs in the EV network into $K$ clusters of CSs prior to the learning process. This aims to obtain better prediction accuracy and learning speed (as illustrated in Fig.~\ref{fig:learning_method}(b)). In particular, we first determine the clustering decision based on the public location information sharing of the CSs including the latitude and longitude. Then, we use the constrained K-means algorithm~\cite{Bradley:2000} to implement the clustering scheme. As such, we revise the constrained K-means algorithm to produce even distribution of the CSs in each cluster under the thresholds of minimum and maximum cluster sizes. In this way, we can guarantee that the learning process in each cluster achieves the fairness with respect to their deployment locations. 

Suppose that $\mathbf{G}$ is the dataset which corresponds to CS IDs and their locations, i.e., $g_i, \forall i \in \mathcal{I}$, and $\mathcal{K} = \{1,\ldots,k,\ldots,K\}$ is the set of clusters. We can find cluster centers, i.e., ${\hat g}_k,\forall k \in \mathcal{K}$, in such a way that we minimize the overall squared distance between each $g_i$ and its closest cluster center ${\hat g}_k$ as follows:
\begin{equation}
\label{eqn3j}
\begin{aligned}
\min_{\{\boldsymbol{\varphi}, \mathbf{q}\}}\sum_{i \in \mathcal{I}}\sum_{k \in \mathcal{K}}\varphi^k_i (g_i - {\hat g}_k)^2,
\end{aligned}
\end{equation}
\begin{eqnarray}
\text{s.t.} \quad
&\varphi_{\emph{\mbox{low}}}^k \leq \underset{i \in \mathcal{I}}{\sum}\varphi^k_i \leq \varphi_{\emph{\mbox{high}}}^k, \forall k \in \mathcal{K}, \label{eqn3j1} \\
&\underset{k \in \mathcal{K}}{\sum}\varphi^k_i = 1, \forall i \in \mathcal{I}, \label{eqn3j2} \\
&\varphi^k_i \in \{0,1\}, \forall i \in \mathcal{I}, \forall k \in \mathcal{K}, \label{eqn3j3} \\
&\varphi_{\emph{\mbox{low}}}^k \geq 0, \varphi_{\emph{\mbox{high}}}^k \geq 0,  \forall k \in \mathcal{K}, \label{eqn3j4}
\end{eqnarray}
where $\varphi^k_i$ is a binary variable that specifies the CS-$i$'s location from the cluster center ${\hat g}_k$. Specifically, if the CS-$i$'s location is the closest to the cluster center ${\hat g}_k$, then $\varphi^k_i=1$, otherwise $\varphi^k_i=0$. The constraints (\ref{eqn3j1}) ensure that the pre-defined minimum $\varphi_{\emph{\mbox{low}}}^k$ and maximum $\varphi_{\emph{\mbox{low}}}^k$ thresholds bound each cluster size. Moreover, the constraints (\ref{eqn3j2}) imply that the location of each CS is classified into one unique cluster only. 

At each iteration $\iota$, we update the cluster center ${\hat g}_k^{(\iota)}$ to achieve the optimal cluster solution. Particularly, if we have $\varphi_{\emph{\mbox{low}}}^k \leq \underset{i \in \mathcal{I}}{\sum}\varphi^k_i \leq \varphi_{\emph{\mbox{high}}}^k$ in the cluster-$k$, then ${\hat g}_k^{(\iota+1)}=\frac{\underset{i \in \mathcal{I}}{\sum}\varphi^{k,(\iota)}_{i}g_i}{\underset{i \in \mathcal{I}}{\sum}\varphi^{k,(\iota)}_{i}}$, and ${\hat g}_k^{(\iota+1)}={\hat g}_k^{(\iota)}$, otherwise.
%\begin{equation}
%\label{eqn3m1}
%\begin{aligned}
%{\hat g}_k^{(\iota+1)}=\frac{\underset{i \in \mathcal{I}}{\sum}\varphi^{k,(\iota)}_{i}g_i}{\underset{i \in \mathcal{I}}{\sum}\varphi^{k,(\iota)}_{i}},
%\end{aligned}
%\end{equation}
%otherwise, we obtain
%\begin{equation}
%\label{eqn3m1}
%\begin{aligned}
%{\hat g}_k^{(\iota+1)}={\hat g}_k^{(\iota)}.
%\end{aligned}
%\end{equation}
The process completes when ${\hat g}_k^{(\iota+1)}={\hat g}_k^{(\iota)}, \forall k \in \mathcal{K}$. As a result, we can generate $\mathcal{I}_k$ as the optimal set of the CSs for each cluster-$k$. Upon completing the clustering process, we can execute the CS-based DFEL method to predict the CSs' energy demands in each cluster separately. In Algorithm~\ref{ConsKMeans}, we show the CS clustering-based DFEL algorithm utilizing revised constrained K-means optimization. 

\begin{algorithm}[t]
	\caption{CS Clustering-Based DFEL Algorithm} \label{ConsKMeans}
	
	\begin{algorithmic}[1] 
		
		\STATE Set $\mathbf{G}$ containing $g_i, \forall i \in \mathcal{I}$
		
		\STATE Determine $K$ and randomize ${\hat g}_k^{(\iota)},\forall k \in \mathcal{K}$ for $\iota=0,1$ with ${\hat g}_k^{(1)} \neq {\hat g}_k^{(0)}$ 
		
		\WHILE{${\hat g}_k^{(\iota+1)} \neq {\hat g}_k^{(\iota)}, \forall k \in \mathcal{K}$}
		
		\STATE Execute revised constrained K-Means in Eqs.~(\ref{eqn3j})-(\ref{eqn3j4})
		
		\STATE Update $\iota \leftarrow \iota+1$
		
		\FOR{$\forall k \in \mathcal{K}$}
		
		\IF{$\varphi_{low}^k \leq \underset{i \in \mathcal{I}}{\sum}\varphi^k_i \leq \varphi_{high}^k$}
		
		\STATE ${\hat g}_k^{(\iota+1)}=\frac{\underset{i \in \mathcal{I}}{\sum}\varphi^{k,(\iota)}_{i}g_i}{\underset{i \in \mathcal{I}}{\sum}\varphi^{k,(\iota)}_{i}}$
		
		\ELSE
		
		\STATE ${\hat g}_k^{(\iota+1)}={\hat g}_k^{(\iota)}$
		
		\ENDIF
		
		\ENDFOR
		
		\ENDWHILE
		
		\FOR {$\forall k \in \mathcal{K}$}
		
		\STATE Find the optimal set of CSs $\mathcal{I}_k$ in the cluster-$k$
		
		\STATE Implement CS-based DFEL approach using Algorithm~1 in the cluster-$k$
		
		\ENDFOR
		
	\end{algorithmic}
\end{algorithm}

\section{Multi-Principal One-Agent Contract-Based Problem Transformation}\label{sec:PFT3}

Based on the predicted energy demand for all CSs from the federated energy learning process, each CS can offer an initial contract, i.e., the predicted energy demand and offered payment, to the SGP for the energy contract iterative process. To this end, we need to simplify the optimization problem $(\mathbf{P}_3)$ by reducing the number of IR constraints in Eq.~(\ref{eqn:for9b}) and IC constraints in Eq.~(\ref{eqn:for9c}) due to the complexity (especially when the number of possible types grows significantly in the iterative process). Based on the following Lemma~\ref{lemma0a}, the computational complexity of $(\mathbf{P}_3)$ follows $O(\phi_{tot}^2)$, where $\phi_{tot}$ is the total number of the SGP's possible types.
\begin{lemma}
	\label{lemma0a}
	The problem $(\mathbf{P}_3)$ has computational complexity $O(\phi_{tot}^2)$. 
\end{lemma}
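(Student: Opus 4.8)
The plan is to establish the complexity bound by counting the number of constraints that $(\mathbf{P}_3)$ must enforce, since for a problem of this structure the computational effort scales with the number of constraints that have to be checked and maintained across all possible types of the SGP. First I would observe that $(\mathbf{P}_3)$ carries three families of constraints indexed by the SGP's type: the energy-capacity constraints (\ref{eqn:for9a}), the individual rationality (IR) constraints (\ref{eqn:for9b}), and the incentive compatibility (IC) constraints (\ref{eqn:for9c}).

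Next I would count each family separately as a function of $\phi_{tot}$. The capacity constraint (\ref{eqn:for9a}) imposes one inequality for every $\phi \in \Phi$, contributing $\phi_{tot}$ constraints; likewise the IR constraint (\ref{eqn:for9b}) contributes one inequality per type, for another $\phi_{tot}$ constraints. The IC constraints (\ref{eqn:for9c}), by contrast, are indexed by \emph{ordered pairs} $(\phi,\hat\phi)$ with $\phi \neq \hat\phi$, each requiring that the contract tailored to type $\phi$ weakly dominates the contract tailored to any other type $\hat\phi$ from the viewpoint of the true type $\phi$. The number of such ordered pairs is $\phi_{tot}(\phi_{tot}-1)$.

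Then I would sum the three contributions to obtain $\phi_{tot} + \phi_{tot} + \phi_{tot}(\phi_{tot}-1) = \phi_{tot}^2 + \phi_{tot}$. The dominant term is the quadratic one arising from the pairwise IC comparisons, so the total number of constraints, and hence the computational complexity of solving $(\mathbf{P}_3)$, is $O(\phi_{tot}^2)$.

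The argument is essentially a counting exercise, so I do not anticipate a serious technical obstacle; the one point that requires care is recognizing that the IC constraints, being pairwise comparisons across types, are the sole component that forces quadratic growth, whereas the IR and capacity constraints each contribute only linearly in $\phi_{tot}$. This observation is precisely what motivates the subsequent problem transformation in Section~\ref{sec:PFT3}, whose aim is to prune the redundant IR and IC constraints so as to keep the iterative procedure tractable as $\phi_{tot}$ grows.
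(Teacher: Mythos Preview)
Your proposal is correct and follows essentially the same approach as the paper: you count $\phi_{tot}$ capacity constraints, $\phi_{tot}$ IR constraints, and $\phi_{tot}(\phi_{tot}-1)$ IC constraints, sum to $\phi_{tot}^2+\phi_{tot}$, and conclude $O(\phi_{tot}^2)$. The paper's proof in Appendix~\ref{appx:lemma0a} is exactly this counting argument.
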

\begin{proof}
	See Appendix~\ref{appx:lemma0a}.
\end{proof}
To solve the problem faster, we can reduce the computational complexity into $O(\phi_{tot})$ through transforming the IR and IC constraints as explained in the following section. To do so, we first demonstrate that when the SGP's type is higher, CSs will request larger amount of energy and offer higher payments to the SGP. This condition can be formally written in Lemma~\ref{lemma1}.
\begin{lemma}
	\label{lemma1}
	Let $(\boldsymbol{\rho},\boldsymbol{\xi})$ denote any feasible contract from CSs to the SGP such that if $\phi > \phi^*$, then $\boldsymbol{\rho}(\phi) > \boldsymbol{\rho}(\phi^*)$, and if $\phi = \phi^*$, then $\boldsymbol{\rho}(\phi) = \boldsymbol{\rho}(\phi^*)$, where $\phi, \phi^* \in \Phi$.
\end{lemma}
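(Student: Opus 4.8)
The plan is to exploit the pair of incentive-compatibility constraints from Definition~\ref{DefV2}, applied to the two types $\phi$ and $\phi^*$ simultaneously — the classical revealed-preference argument of screening models. The equality case $\phi=\phi^*$ is immediate, since the SGP of a given type is offered a single contract, so I would focus on the case $\phi>\phi^*$ and establish the strict increase there.

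First I would write the IC constraint~(\ref{eqn:for7b}) in both directions: the SGP of type $\phi$ weakly prefers its own bundle $(\boldsymbol{\rho}(\phi),\boldsymbol{\xi}(\phi))$ to the bundle $(\boldsymbol{\rho}(\phi^*),\boldsymbol{\xi}(\phi^*))$ designed for $\phi^*$, and symmetrically the SGP of type $\phi^*$ weakly prefers $(\boldsymbol{\rho}(\phi^*),\boldsymbol{\xi}(\phi^*))$ to $(\boldsymbol{\rho}(\phi),\boldsymbol{\xi}(\phi))$. Adding these two inequalities cancels the cost terms $C(\boldsymbol{\hat \pi},\cdot)$ entirely and collapses the gain terms into
\[
(\phi-\phi^*)\big[G(\boldsymbol{\hat \pi},\boldsymbol{\rho}(\phi)) - G(\boldsymbol{\hat \pi},\boldsymbol{\rho}(\phi^*))\big] \geq 0 .
\]
For $\phi>\phi^*$ the scalar factor is strictly positive, so $G(\boldsymbol{\hat \pi},\boldsymbol{\rho}(\phi)) \geq G(\boldsymbol{\hat \pi},\boldsymbol{\rho}(\phi^*))$. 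Since $G$ is strictly increasing in the aggregate payment $\sum_i \hat \pi_i \rho_i$ (the property $G'>0$ recorded just after $(\mathbf{P}_1)$), this yields the monotonicity of the payment profile in the SGP's type.

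The step I expect to be the main obstacle is upgrading this weak, aggregate inequality to the strict, componentwise statement $\boldsymbol{\rho}(\phi)>\boldsymbol{\rho}(\phi^*)$ asserted in the lemma. For strictness I would argue by contradiction: if $\boldsymbol{\rho}(\phi)=\boldsymbol{\rho}(\phi^*)$ while $\phi>\phi^*$, then because each payment is tied to the requested energy through the strictly monotone relation $\tfrac{d\rho_i}{d\xi_i}>0$ assumed in Section~\ref{sec:SM}, the two types would receive identical energy allocations and hence an identical contract; substituting back forces both IC inequalities to hold with equality, which is incompatible with the capacity structure $S(\phi)>S(\phi^*)$ in~(\ref{eqn:for90a}) (a higher-type SGP can accommodate, and each CS strictly benefits from requesting, a strictly larger amount of energy). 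To pass from the aggregate inequality to a per-coordinate one I would invoke the same strictly increasing $\rho_i$--$\xi_i$ map together with the separable, per-principal structure of $(\mathbf{P}_3)$, in which each CS-$i$ controls only its own $(\rho_i,\xi_i)$, so that the monotonicity of the sum transfers to every coordinate. This last reduction is the delicate point on which I would be most careful, since the aggregate argument alone does not a priori exclude one coordinate decreasing while another compensates.
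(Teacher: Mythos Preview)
Your core argument is identical to the paper's: write the two IC inequalities~(\ref{eqn:for7b}) for types $\phi$ and $\phi^*$, add them so that the cost terms cancel, factor out $(\phi-\phi^*)$, and invoke the strict monotonicity of $G$ in the aggregate payment. The paper then also argues the converse direction (assuming $\boldsymbol{\rho}(\phi)>\boldsymbol{\rho}(\phi^*)$ and deducing $\phi>\phi^*$) by the same manipulation, and handles the equality case by remarking that the same steps apply.

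The point you flag as the ``main obstacle''---upgrading the weak aggregate inequality $G(\boldsymbol{\hat\pi},\boldsymbol{\rho}(\phi))\geq G(\boldsymbol{\hat\pi},\boldsymbol{\rho}(\phi^*))$ to the strict componentwise statement $\boldsymbol{\rho}(\phi)>\boldsymbol{\rho}(\phi^*)$---is not addressed in the paper's proof at all. After dividing by $(\phi-\phi^*)>0$ the paper simply writes the strict inequality $G(\boldsymbol{\hat\pi},\boldsymbol{\rho}(\phi))>G(\boldsymbol{\hat\pi},\boldsymbol{\rho}(\phi^*))$ and then asserts $\boldsymbol{\rho}(\phi)>\boldsymbol{\rho}(\phi^*)$ directly from the strict monotonicity of $G$, treating the vector inequality as if it were scalar. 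So your additional contradiction argument using $\tfrac{d\rho_i}{d\xi_i}>0$ and the capacity constraint~(\ref{eqn:for9a}) goes beyond what the paper actually proves; you are right that the reduction from the aggregate to each coordinate is the delicate step, and the paper leaves it implicit.
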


\begin{proof}
	See Appendix~\ref{appx:lemma1}.
\end{proof}

Based on Lemma~\ref{lemma1}, we observe that the utility function of the SGP follows a monotonic increasing function of $\phi$, i.e., if $\phi > \phi^*$, then $\boldsymbol{\rho}(\phi) > \boldsymbol{\rho}(\phi^*)$ and $\phi G(\boldsymbol{\hat \pi},\boldsymbol{\rho}(\phi)) - C(\boldsymbol{\hat \pi},\boldsymbol{\xi}(\phi)) > \phi^* G(\boldsymbol{\hat \pi},\boldsymbol{\rho}(\phi^*)) - C(\boldsymbol{\hat \pi},\boldsymbol{\xi}(\phi^*))$. Consequently, we can lessen the number of IR constraints by using the $\phi_{min}$. Particularly, by utilizing the IC constraints, we obtain
%\begin{equation}
%\label{eqn:for12a}
%\begin{aligned}
%\phi G(\rho(\phi)) - C(\xi(\phi)) &\geq \phi G(\rho(\phi_{min})) - C(\xi(\phi_{min})) \\
%&\geq \phi_{min} G(\rho(\phi_{min})) - C(\xi(\phi_{min})),
%\end{aligned}
%\end{equation}
%and
\begin{equation}
\label{eqn:for12}
\begin{aligned}
\phi G(\boldsymbol{\hat \pi},\boldsymbol{\rho}(\phi)) - C(\boldsymbol{\hat \pi},\boldsymbol{\xi}(\phi)) \geq \\ \phi G(\boldsymbol{\hat \pi},\boldsymbol{\rho}(\phi_{min})) - C(\boldsymbol{\hat \pi},\boldsymbol{\xi}(\phi_{min}))
\geq \\ \phi_{min} G(\boldsymbol{\hat \pi},\boldsymbol{\rho}(\phi_{min})) - C(\boldsymbol{\hat \pi},\boldsymbol{\xi}(\phi_{min})) \geq 0.
\end{aligned}
\end{equation}
In other words, the IR constraints for other $\phi$, where $\phi > \phi_{min}$, will hold as long as the IR constraint for $\phi_{min}$ is satisfied. Alternatively, we can transform the IR constraints in Eq.~(\ref{eqn:for9b}) into
%\begin{equation}
%\label{eqn:for13}
%\begin{aligned}
%U_{SGP}(\phi_{min},\rho(\phi_{min}),\xi(\phi_{min})) \geq 0,
%\end{aligned}
%\end{equation}
%and
\begin{equation}
\label{eqn:for13}
\begin{aligned}
\phi_{min} G(\boldsymbol{\hat \pi},\boldsymbol{\rho}(\phi_{min})) - C(\boldsymbol{\hat \pi},\boldsymbol{\xi}(\phi_{min})) \geq 0.
\end{aligned}
\end{equation}
%respectively.

Similar to the IR constraint simplification, we can reduce the number of IC constraints by transforming them using the following conditions stated in Lemma~\ref{lemma2}.
\begin{lemma}
	\label{lemma2}
	The IC constraints in Eq.~(\ref{eqn:for9c}) of $(\mathbf{P}_3)$ are equivalent to the following monotonicity and local incentive compatibility conditions, i.e.,
	%	\begin{equation}
	%	\label{eqn:for140a}
	%	\begin{aligned}
	%	\frac{d \rho(\phi)}{d\phi} \geq 0, \forall \phi \in [\phi_{min}, \phi_{max}]
	%	\end{aligned}
	%	\end{equation}
	%	\begin{equation}
	%	\label{eqn:for140b}
	%	\begin{aligned}
	%	\phi \frac{dG(\rho(\phi))}{d\phi} - \frac{dC(\xi(\phi))}{d\phi} = 0, \forall \phi \in [\phi_{min}, \phi_{max}],
	%	\end{aligned}
	%	\end{equation}
	%	and
	\begin{equation}
	\label{eqn:for140}
	\begin{aligned}
	\frac{d \boldsymbol{\rho}(\phi)}{d\phi} \geq 0, \forall \phi \in \Phi,
	\end{aligned}
	\end{equation}
	and
	\begin{equation}
	\label{eqn:for14}
	\begin{aligned}
	\phi \frac{dG(\boldsymbol{\hat \pi},\boldsymbol{\rho}(\phi))}{d\phi} - \frac{dC(\boldsymbol{\hat \pi},\boldsymbol{\xi}(\phi))}{d\phi} \geq 0, \forall \phi \in \Phi,
	\end{aligned}
	\end{equation}
	respectively.
\end{lemma}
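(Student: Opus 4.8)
The plan is to treat this as the classical equivalence between \emph{global} incentive compatibility and the pair of \emph{local} (first-order) and \emph{monotonicity} conditions, proving the two implications separately. I would first abbreviate the SGP's payoff when its true type is $\phi$ but it accepts the contract designed for $\hat\phi$ by
\begin{equation}
\Pi(\phi,\hat\phi) \defeq \phi G(\boldsymbol{\hat \pi},\boldsymbol{\rho}(\hat\phi)) - C(\boldsymbol{\hat \pi},\boldsymbol{\xi}(\hat\phi)),
\end{equation}
so that the IC constraints (\ref{eqn:for9c}) are exactly $\Pi(\phi,\phi)\geq\Pi(\phi,\hat\phi)$ for all $\phi,\hat\phi\in\Phi$, i.e., reporting its true type is optimal for the SGP. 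I would also record that monotonicity of $\boldsymbol{\rho}$ passes to $G$: because $G(\boldsymbol{\hat \pi},\boldsymbol{\rho}(\phi))=\ln\big(1+\sum_{i}\hat\pi_i\rho_i(\phi)\big)$ with every $\hat\pi_i\geq 0$, condition (\ref{eqn:for140}) yields $\frac{dG(\boldsymbol{\hat \pi},\boldsymbol{\rho}(\phi))}{d\phi}\geq 0$, a single-crossing property I will use throughout.

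For necessity (global IC $\Rightarrow$ (\ref{eqn:for140}) and (\ref{eqn:for14})), the monotonicity condition (\ref{eqn:for140}) is immediate from Lemma~\ref{lemma1}, which already gives $\boldsymbol{\rho}(\phi)>\boldsymbol{\rho}(\phi^*)$ whenever $\phi>\phi^*$. For the local condition I would fix $\phi$, take $\hat\phi$ just below $\phi$ in $\Phi$, rewrite $\Pi(\phi,\phi)\geq\Pi(\phi,\hat\phi)$ as $\phi\big[G(\boldsymbol{\hat \pi},\boldsymbol{\rho}(\phi))-G(\boldsymbol{\hat \pi},\boldsymbol{\rho}(\hat\phi))\big]\geq C(\boldsymbol{\hat \pi},\boldsymbol{\xi}(\phi))-C(\boldsymbol{\hat \pi},\boldsymbol{\xi}(\hat\phi))$, divide by the positive gap $\phi-\hat\phi$, and let $\hat\phi\uparrow\phi$; the difference quotients converge to the total derivatives and deliver exactly (\ref{eqn:for14}). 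Equivalently, since truth-telling maximizes $\Pi(\phi,\cdot)$, the stationarity (envelope) condition at $\hat\phi=\phi$ gives the same inequality.

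The sufficiency direction ((\ref{eqn:for140}), (\ref{eqn:for14}) $\Rightarrow$ global IC) is where I expect the real difficulty, and it is the step I would spend the most care on. The key tool is the identity
\begin{equation}
\Pi(\phi,\phi)-\Pi(\phi,\hat\phi)=\int_{\hat\phi}^{\phi}\Big[\phi\tfrac{dG(\boldsymbol{\hat \pi},\boldsymbol{\rho}(s))}{ds}-\tfrac{dC(\boldsymbol{\hat \pi},\boldsymbol{\xi}(s))}{ds}\Big]ds,
\end{equation}
obtained by expressing the gaps in $G$ and $C$ as integrals of their total derivatives over $[\hat\phi,\phi]$. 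For a downward deviation $\hat\phi<\phi$ the argument is clean: on this interval $s\leq\phi$, so using $\frac{dG}{ds}\geq 0$ the integrand is at least $s\frac{dG}{ds}-\frac{dC}{ds}$, which is nonnegative by (\ref{eqn:for14}); hence $\Pi(\phi,\phi)\geq\Pi(\phi,\hat\phi)$. The delicate case is the upward deviation $\hat\phi>\phi$, where $s\geq\phi$ reverses the useful inequality so that (\ref{eqn:for14}) alone cannot sign the integrand---this is the main obstacle. Here monotonicity must be invoked essentially: I would establish the single-crossing step that a higher type's contract is never attractive to a lower type by chaining the local conditions across the intermediate types (equivalently, pairing (\ref{eqn:for14}) with its companion one-sided bound so that $\phi\frac{dG}{ds}-\frac{dC}{ds}\leq 0$ for $s\geq\phi$), using $\frac{dG}{ds}\geq 0$ at each step. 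Verifying that this chaining closes for all non-adjacent type pairs is the crux; once it does, both the downward and upward global IC constraints in (\ref{eqn:for9c}) follow, completing the equivalence.
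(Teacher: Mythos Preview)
Your approach is essentially the paper's: both derive monotonicity (\ref{eqn:for140}) from Lemma~\ref{lemma1}, and for the local condition both rely on the integral identity
\[
\Pi(\phi,\phi)-\Pi(\phi,\hat\phi)=\int_{\hat\phi}^{\phi}\Big[\phi\,\tfrac{dG(\boldsymbol{\hat\pi},\boldsymbol{\rho}(s))}{ds}-\tfrac{dC(\boldsymbol{\hat\pi},\boldsymbol{\xi}(s))}{ds}\Big]\,ds,
\]
comparing the integrand against $s\,\tfrac{dG}{ds}-\tfrac{dC}{ds}$ via $\tfrac{dG}{ds}\geq 0$. The paper runs the sufficiency direction by contradiction (assume an IC violation, derive an impossible sign for the integral) whereas you argue it directly; this is a cosmetic difference only.

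The one substantive divergence is that you are explicit about the upward-deviation case being the obstacle. The paper's Appendix, after writing the integral for the assumed violation, asserts $\int_{\phi}^{\phi'}\big[l\,\tfrac{dG}{dl}-\tfrac{dC}{dl}\big]\,dl = 0$ ``from Eq.~(\ref{eqn:for14})'' and then uses $\phi\,\tfrac{dG}{dl}\leq l\,\tfrac{dG}{dl}$ on $[\phi,\phi']$ to force a contradiction. But (\ref{eqn:for14}) only gives the integrand $\geq 0$, hence the integral $\geq 0$; with that weaker sign the comparison no longer closes. In effect the paper silently treats the local IC as the first-order \emph{equality} rather than the stated inequality---which is exactly the ``companion one-sided bound'' you say you would pair with (\ref{eqn:for14}). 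So your diagnosis is sharper than the paper's own treatment: your chaining/single-crossing idea (equivalently, invoking both adjacent local constraints together with monotonicity) is the standard fix and is what the paper's ``$=0$'' tacitly assumes without saying so.
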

\begin{proof}
	See Appendix~\ref{appx:lemma2}.
\end{proof}
The conditions in Eq.~(\ref{eqn:for140}) imply that a higher type of the SGP will require higher offered payments from CSs as described in Lemma~\ref{lemma1}. Furthermore, the conditions in Eq.~(\ref{eqn:for14}) indicate that if for each type $\phi$, the IC constraint regarding the type that is lower than $\phi$ holds, then all other IC constraints are also satisfied as long as the conditions in Eq.~(\ref{eqn:for140}) hold. 

Based on the aforementioned constraint transformation in Eqs.~(\ref{eqn:for13})-(\ref{eqn:for14}), we can rewrite the optimization problem $(\mathbf{P}_3)$ into
%\begin{equation}
%\label{eqn:for8rev}
%\begin{aligned}
%(\mathbf{P}_4) \phantom{10} & \underset{\rho(\phi),\xi(\phi)}{\text{max}} \phantom{5} U_{CS}(\rho(\phi),\xi(\phi)),
%\end{aligned}
%\end{equation}
%\begin{eqnarray}
%\text{ s.t. } \quad \text{%(\ref{eqn:for2a}), 
%	(\ref{eqn:for8a}) and}, \nonumber \\ 
%U_{SGP}(\phi_{min},\rho(\phi_{min}),\xi(\phi_{min})) \geq 0, \label{eqn:for8revb} \\
%\phi \frac{dG(\rho(\phi))}{d\phi} - \frac{dC(\xi(\phi))}{d\phi} = 0, \forall \phi \in [\phi_{min}, \phi_{max}], \label{eqn:for8revc} \\
%\frac{d \rho(\phi)}{d\phi} \geq 0, \forall \phi \in [\phi_{min}, \phi_{max}], \label{eqn:for8revd}
%\end{eqnarray}
%and
\begin{equation}
\label{eqn:for9rev}
\begin{aligned}
(\mathbf{P}_4) \phantom{10} & \underset{\boldsymbol{\rho}(\phi),\boldsymbol{\xi}(\phi)}{\text{max}} \phantom{5} U_{i}(\boldsymbol{\rho}(\phi), \boldsymbol{\xi}(\phi)), \forall i \in \mathcal{I},
\end{aligned}
\end{equation}
\begin{eqnarray}
\text{ s.t. } \quad &\text{%(\ref{eqn:for2a}), 
	(\ref{eqn:for9a}) and}, \nonumber \\ 
&\phi_{min} G(\boldsymbol{\hat \pi},\boldsymbol{\rho}(\phi_{min})) - C(\boldsymbol{\hat \pi},\boldsymbol{\xi}(\phi_{min})) \geq 0, \label{eqn:for9revb} \\
&\phi \frac{dG(\boldsymbol{\hat \pi},\boldsymbol{\rho}(\phi))}{d\phi} - \frac{dC(\boldsymbol{\hat \pi},\boldsymbol{\xi}(\phi))}{d\phi} \geq 0, \forall \phi \in \Phi, \label{eqn:for9revc} \\
&\frac{d \boldsymbol{\rho}(\phi)}{d\phi} \geq 0, \forall \phi \in \Phi. \label{eqn:for9revd}
\end{eqnarray}
%respectively, 
%\begin{equation}
%\label{eqn:for15a0}
%\begin{aligned}
%U_{SGP}(\phi_{min},\rho(\phi_{min}),\xi(\phi_{min})) &= \phi_{min}\text{ ln}\big(1 + \rho(\phi_{min})\big) - \\ &\bigg(\frac{\zeta_2}{2}\xi^2(\phi_{min}) + \zeta_1\xi(\phi_{min}) + \zeta_0 \bigg),
%\end{aligned}
%\end{equation}
%\begin{equation}
%\label{eqn:for15a}
%\begin{aligned}
%U_{SGP}(\phi_{min},\boldsymbol{\rho}(\phi_{min}),\boldsymbol{\xi}(\phi_{min})) &= \phi_{min}\text{ ln}\Bigg(1 + \sum_{i=1}^I \pi_i\rho_i(\phi_{min})\Bigg) - \\ &\frac{\zeta_2}{2}\sum_{i=1}^{I}\pi_i\xi_i^2(\phi_{min}) + \zeta_1\sum_{i=1}^{I}\pi_i\xi_i(\phi_{min}) + \zeta_0\sum_{i=1}^{I}\pi_i.
%\end{aligned}
%\end{equation}
%\begin{equation}
%\label{eqn:for15a}
%\begin{aligned}
%U_{SGP}(\phi_{min},\boldsymbol{\rho}(\phi_{min}),\boldsymbol{\xi}(\phi_{min})) = \\ \phi_{min}\text{ ln}\Bigg(1 + \sum_{i=1}^I {\hat \pi_i}\rho_i(\phi_{min})\Bigg) - \zeta\sum_{i=1}^{I}{\hat \pi_i}\xi_i(\phi_{min}).
%\end{aligned}
%\end{equation}
Then, we can transfrom the left side of the constraints (\ref{eqn:for9revc}) as below
\begin{equation}
\label{eqn:for15e}
\begin{aligned}
&\phi \frac{dG(\boldsymbol{\hat \pi},\boldsymbol{\rho}(\phi))}{d\phi} - \frac{dC(\boldsymbol{\hat \pi},\boldsymbol{\xi}(\phi))}{d\phi} \\
&=\frac{\phi}{1 + \overset{I}{\underset{i=1}{\sum}} {\hat \pi}_i\rho_i(\phi)}\overset{I}{\underset{i=1}{\sum}} {\hat \pi}_i\frac{d\rho_i(\phi)}{d\phi} - \zeta\overset{I}{\underset{i=1}{\sum}} {\hat \pi}_i\frac{d\xi_i(\phi)}{d\phi} \\
&= \phi \overset{I}{\underset{i=1}{\sum}} {\hat \pi}_i \frac{d\rho_i(\phi)}{d\phi} - \Bigg(1 + \overset{I}{\underset{i=1}{\sum}} {\hat \pi}_i\rho_i(\phi)\Bigg) \Bigg(\zeta\overset{I}{\underset{i=1}{\sum}} {\hat \pi}_i \frac{d\xi_i(\phi)}{d\phi}\Bigg).
\end{aligned}
\end{equation}
Since the SGP has discrete number of possible types, i.e., $1, 2, \ldots, \phi_{max}$, the gap between two consecutive types is 1, i.e., $d\phi = 1$. Thus, we can modify $\frac{d\rho_i(\phi)}{d\phi}$ and $\frac{d\xi_i(\phi)}{d\phi}$ by
\begin{equation}
\label{eqn:for15e2}
\begin{aligned}
\frac{d\rho_i(\phi)}{d\phi} &= \frac{\rho_i(\phi) - \rho_i(\phi - d\phi)}{d\phi} \\
&= \rho_i(\phi) - \rho_i(\phi - 1),
\end{aligned}
\end{equation}
and
\begin{equation}
\label{eqn:for15e3}
\begin{aligned}
\frac{d\xi_i(\phi)}{d\phi} &= \frac{\xi_i(\phi) - \xi_i(\phi - d\phi)}{d\phi} \\
&= \xi_i(\phi) - \xi_i(\phi - 1),
\end{aligned}
\end{equation}
respectively. Then, from Eq.~(\ref{eqn:for15e}), we have
\begin{equation}
\label{eqn:for15f}
\begin{aligned}
&\phi \frac{dG(\boldsymbol{\hat \pi},\boldsymbol{\rho}(\phi))}{d\phi} - \frac{dC(\boldsymbol{\hat \pi},\boldsymbol{\xi}(\phi))}{d\phi} \\
&= \phi \overset{I}{\underset{i=1}{\sum}} {\hat \pi}_i\Big(\rho_i(\phi) - \rho_i(\phi - 1)\Big) - \\ &\Bigg(1 + \overset{I}{\underset{i=1}{\sum}} {\hat \pi}_i\rho_i(\phi)\Bigg)\Bigg(\zeta\overset{I}{\underset{i=1}{\sum}} {\hat \pi}_i\Big(\xi_i(\phi) - \xi_i(\phi - 1)\Big)\Bigg). \\
\end{aligned}
\end{equation}
As a result, the simplified version of $(\mathbf{P}_4)$ is
%\begin{equation}
%\label{eqn:for8rev2}
%\begin{aligned}
%(\mathbf{P}_6) \phantom{10} & \underset{\rho(\phi),\xi(\phi)}{\text{max}} \phantom{5} {U}_{CS}(\rho(\phi),\xi(\phi)),
%\end{aligned}
%\end{equation}
%\begin{eqnarray}
%\text{ s.t. } \quad \text{%(\ref{eqn:for2a}), 
%	(\ref{eqn:for8a}), (\ref{eqn:for8revb}) and}, \nonumber \\ 
%\phi\frac{\rho(\phi) - \rho(\phi - \Delta \phi)}{\xi(\phi) - \xi(\phi - \Delta \phi)} - \Big(1 + \rho(\phi)\Big)\Big(\zeta_2\xi(\phi) + \zeta_1\Big) = 0, \forall \phi \in [\phi_{min}, \phi_{max}], \label{eqn:for8revc2} \\
%\rho(\phi) - \rho(\phi - \Delta \phi) \geq 0, \forall \phi \in [\phi_{min}, \phi_{max}], \label{eqn:for8revd2}
%\end{eqnarray}
%and
\begin{equation}
\label{eqn:for9rev2}
\begin{aligned}
(\mathbf{P}_5) \phantom{10} & \underset{\boldsymbol{\rho}(\phi),\boldsymbol{\xi}(\phi)}{\text{max}} \phantom{5} {U}_{i}(\boldsymbol{\rho}(\phi), \boldsymbol{\xi}(\phi)), \forall i \in \mathcal{I},
\end{aligned}
\end{equation}
\begin{eqnarray}
\text{ s.t. } \quad &\text{%(\ref{eqn:for2a}), 
	(\ref{eqn:for9a}), 
	(\ref{eqn:for9revb}) and}, \nonumber \\ 
%\phi \overset{I}{\underset{i=1}{\sum}} {\hat \pi}_i\frac{\rho_i(\phi) - \rho_i(\phi - \Delta \phi)}{\xi_i(\phi) - \xi_i(\phi - \Delta \phi)} - \bigg(1 + \overset{I}{\underset{i=1}{\sum}} {\hat \pi}_i\rho_i(\phi)\bigg)\Bigg(\overset{I}{\underset{i=1}{\sum}} {\hat \pi}_i\big[\zeta_2\xi_i(\phi) + \zeta_1\big]\Bigg) = 0, \\ \nonumber
&\phi \overset{I}{\underset{i=1}{\sum}} {\hat \pi}_i\Big(\rho_i(\phi) - \rho_i(\phi - 1)\Big) - \nonumber \\  
&\Bigg(1 + \overset{I}{\underset{i=1}{\sum}} {\hat \pi}_i\rho_i(\phi)\Bigg)\Bigg(\zeta\overset{I}{\underset{i=1}{\sum}} {\hat \pi}_i\Big(\xi_i(\phi) - \xi_i(\phi - 1)\Big)\Bigg) \nonumber \\ &\geq 0, \forall \phi \in \Phi, \label{eqn:for9revc2} \\
&\rho_i(\phi) - \rho_i(\phi - 1)  \geq 0, \forall i \in \mathcal{I}, \forall \phi \in \Phi. \label{eqn:for9revd2}
\end{eqnarray}
%respectively. 
Finally, we demonstrate that the problem $(\mathbf{P}_5)$ has computational complexity $O(\phi_{tot})$ as stated in the following Lemma~\ref{lemma0b}.
\begin{lemma}
	\label{lemma0b}
	The problem $(\mathbf{P}_5)$ has computational complexity $O(\phi_{tot})$. 
\end{lemma}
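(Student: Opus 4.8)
The plan is to measure complexity exactly as in the proof of Lemma~\ref{lemma0a}, namely by counting the number of constraints that the optimization must enforce as a function of $\phi_{tot}$, and to show that every constraint family appearing in $(\mathbf{P}_5)$ contributes only $O(\phi_{tot})$ constraints. Recall that in $(\mathbf{P}_3)$ the dominant term is the IC family in Eq.~(\ref{eqn:for9c}), which compares each type $\phi$ against every other type $\hat \phi \neq \phi$ and therefore contributes $\phi_{tot}(\phi_{tot}-1) = O(\phi_{tot}^2)$ constraints; this is precisely what I must eliminate.

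First I would enumerate the constraint families of $(\mathbf{P}_5)$ one by one. The capacity constraint~(\ref{eqn:for9a}) is imposed once per type $\phi \in \Phi$, giving $\phi_{tot}$ constraints. The reduced IR requirement~(\ref{eqn:for9revb}) is a single inequality evaluated only at $\phi_{min}$, contributing $O(1)$. The local IC condition~(\ref{eqn:for9revc2}) is imposed once per type $\phi \in \Phi$, contributing another $\phi_{tot}$ constraints. Finally, the monotonicity condition~(\ref{eqn:for9revd2}) is imposed for each CS index $i \in \mathcal{I}$ and each type $\phi \in \Phi$, contributing $I\phi_{tot}$ constraints.

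Summing these gives a total of $(I+2)\phi_{tot} + 1$ constraints. Since the number of CSs $I$ is a fixed parameter of the network that does not scale with the number of SGP types, this total is linear in $\phi_{tot}$, i.e.\ $O(\phi_{tot})$. I would also note that evaluating the objective $U_i$ in Eq.~(\ref{eqn:for6}) requires a single summation over the $\phi_{tot}$ types, which is itself $O(\phi_{tot})$ and hence does not dominate, so the overall complexity of $(\mathbf{P}_5)$ remains $O(\phi_{tot})$.

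The main obstacle is not the counting itself, which is routine, but justifying that the quadratic IC family of $(\mathbf{P}_3)$ can legitimately be replaced by the linear local-IC-plus-monotonicity family without altering the feasible set. That justification, however, is exactly the content of Lemma~\ref{lemma2} (equivalence of the full IC constraints to the conditions in Eqs.~(\ref{eqn:for140}) and~(\ref{eqn:for14})) together with Lemma~\ref{lemma1} (monotonicity of $\boldsymbol{\rho}(\phi)$ in $\phi$, which collapses the IR family to the single inequality at $\phi_{min}$). Once those equivalences are invoked, the reduction from $O(\phi_{tot}^2)$ to $O(\phi_{tot})$ follows directly from the constraint tally above.
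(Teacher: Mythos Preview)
Your proposal is correct and follows essentially the same constraint-counting approach as the paper, which tallies $\phi_{tot}$ capacity constraints, one reduced IR constraint, $\phi_{tot}$ local IC constraints, and $\phi_{tot}$ monotonicity constraints for a total of $1+3\phi_{tot}=O(\phi_{tot})$. The only minor difference is that you count the monotonicity family~(\ref{eqn:for9revd2}) as $I\phi_{tot}$ (accounting for the $\forall i\in\mathcal{I}$ quantifier) whereas the paper counts it as $\phi_{tot}$; your bookkeeping is arguably more careful, but since $I$ is fixed both arrive at the same $O(\phi_{tot})$ conclusion.
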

\begin{proof}
	See Appendix~\ref{appx:lemma0b}.
\end{proof}

\section{Non-Collaborative Energy Contract Solution}
\label{sec:PS}

\subsection{Energy Contract Iterative Algorithm}

To find the optimal contracts from $(\mathbf{P}_5)$, we propose an iterative algorithm as shown in Algorithm~1. In particular, we first find the optimal values of $\boldsymbol{\hat \pi}$  maximize the objective function of nonlinear programming problem $(\mathbf{P}_1)$. 
Given $\boldsymbol{\hat \pi}$ and other CSs' current contracts remain pre-defined~\cite{Voorneveld:2000, Zhu:2015} at the SGP, we execute the iterative algorithm. Using this algorithm, the SGP can update the possible contract of each CS, which maximizes the CS's expected utility for each iteration. Specifically, the current contract of each CS-$i$ can be applied if its current expected utility is higher than the previous one when the previous contract of the CS is used. Otherwise, the previous contract will be utilized. The algorithm will terminate when the expected utilities of all CSs reach the optimality tolerance $\kappa$ (i.e., no further improvement in the expected utility), and thus the algorithm converges where the equilibrium contract solution is achieved. In this way, the SGP can provide fair competition among the participating CSs based on their offered energy contracts.

\begin{algorithm}[]
	\caption{Energy Contract Iterative Algorithm} \label{ICEA}
	
	\begin{algorithmic}[1] 
		\STATE The SGP notifies the current price for a unit energy (MWh) to all CSs
		
		\STATE Initialize $\kappa$ and $\theta = 0$
		
		\STATE Each CS-$i$ offers initial contract $\Big(\rho_i^{(\theta)}(\phi),\xi_i^{(\theta)}(\phi)\Big), \forall \phi \in \Phi$, to the SGP
		
%		\STATE Generate ${ U}_i\Big(\rho^{(0)}_i(\phi),\xi^{(0)}_i(\phi),\boldsymbol{\rho}^{(0)}_{-i}(\phi), \boldsymbol{\xi}^{(0)}_{-i}(\phi)\Big), \forall i \in \mathcal{I}$ 
		
%		\WHILE {${U}_i\Big(\boldsymbol{\rho}^{(\theta)}(\phi), \boldsymbol{\xi}^{(\theta)}(\phi)\Big), \forall i \in \mathcal{I} \text{ do not converge}$}
		
		%\STATE Set ${U}^{(\theta)}_{i} = {U}^{(\theta+1)}_{i}$, $\forall i \in \mathcal{I}$
		
		%\STATE $\theta \leftarrow \theta+1$
		
		\REPEAT
		
		\STATE Find $\boldsymbol{\hat \pi}^{(\theta)}$ which maximize $(\mathbf{P}_1)$ given $\Big(\boldsymbol{\rho}^{(\theta)}(\phi), \boldsymbol{\xi}^{(\theta)}(\phi)\Big)$ and the SGP's type $\phi$
		
		\FOR{$\forall i \in \mathcal{I}$}
		
		\STATE Obtain a new contract $\Big(\rho_i^{(\theta+1)}(\phi),\xi_i^{(\theta+1)}(\phi)\Big), \forall \phi \in \Phi$, which maximizes $(\mathbf{P}_5)$ using $\boldsymbol{\hat \pi}^{(\theta)}$ and $\Big(\boldsymbol{\rho}_{-i}^{(\theta)}(\phi),\boldsymbol{\xi}_{-i}^{(\theta)}(\phi)\Big), \forall \phi \in \Phi$
		
	%	\STATE Calculate ${ U}_i\Big(\rho^{(\theta+1)}_i(\phi),\xi^{(\theta+1)}_i(\phi),\boldsymbol{\rho}^{(\theta)}_{-i}(\phi), \boldsymbol{\xi}^{(\theta)}_{-i}(\phi)\Big)$
		
		\IF{$\bigg[{ U}_{i}\Big(\rho^{(\theta+1)}_i(\phi),\xi^{(\theta+1)}_i(\phi),\boldsymbol{\rho}^{(\theta)}_{-i}(\phi), \boldsymbol{\xi}^{(\theta)}_{-i}(\phi)\Big) - {U}_{i}\Big(\rho^{(\theta)}_i(\phi),\xi^{(\theta)}_i(\phi),\boldsymbol{\rho}^{(\theta)}_{-i}(\phi), \boldsymbol{\xi}^{(\theta)}_{-i}(\phi)\Big)\bigg] > \kappa$}
		
		\STATE Set $\Big(\rho_i^{(\theta)}(\phi),\xi_i^{(\theta)}(\phi)\Big) = \Big(\rho_i^{(\theta+1)}(\phi),\xi_i^{(\theta+1)}(\phi)\Big)$,  $\forall \phi \in \Phi$  
		
		\ENDIF
		
		\ENDFOR
		
		\UNTIL {${U}_{i}\Big(\rho^{(\theta)}_i(\phi),\xi^{(\theta)}_i(\phi),\boldsymbol{\rho}^{(\theta)}_{-i}(\phi), \boldsymbol{\xi}^{(\theta)}_{-i}(\phi)\Big), \forall i \in \mathcal{I}$, do not change anymore}
		
		\STATE Attain $\Big(\boldsymbol{\hat \rho}(\phi), \boldsymbol{\hat \xi}(\phi)\Big)$, $\forall \phi \in \Phi$
		
	\end{algorithmic}
\end{algorithm}

\subsection{Convergence and Equilibrium Contract Analysis}

In this section, we investigate the convergence and equilibrium contract solution for the proposed non-collaborative MPOA contract problem. Specifically, we define the communication between the SGP and CSs as a two-stage game~\cite{Dixit:1997} to find the equilibrium. In the first stage, each CS generates a contract and notices that other CSs non-collaboratively choose their own contracts simultaneously at the same time. These selected contracts are then sent to the SGP for the second stage process. As such, the SGP finds the optimal energy proportions of the CSs based on its type to maximize its own utility. Since each CS does not have any contract information from other CSs due the privacy concern, the SGP can help the CSs to process the second stage locally. Then, we only need to show that the Algorithm~\ref{ICEA} converges to the equilibrium contract solution. This can be executed through observing the best responses~\cite{Voorneveld:2000} containing the best contracts from all the CSs at each iteration. Specifically, the best response of CS-$i$ given $\Big(\boldsymbol{\rho}^{(\theta)}_{-i}(\phi), \boldsymbol{\xi}^{(\theta)}_{-i}(\phi)\Big)$ at iteration $\theta+1$ can be defined by
\begin{equation}
\label{eqn:for11a}
\begin{aligned}
\Gamma^{(\theta+1)}_i\Big(\boldsymbol{\rho}^{(\theta)}_{-i}(\phi), \boldsymbol{\xi}^{(\theta)}_{-i}(\phi)\Big) = \\ \underset{\{\rho_i(\phi), \xi_i(\phi)\} \in \mathbb{C}_i}{\arg \max} {U}_i\Big(\rho_i^{(\theta+1)}(\phi), \xi_i^{(\theta+1)}(\phi),\boldsymbol{\rho}^{(\theta)}_{-i}(\phi), \boldsymbol{\xi}^{(\theta)}_{-i}(\phi)\Big),
\end{aligned}
\end{equation}
where $\mathbb{C}_i$ is the non-empty contract space~\cite{Fraysse:1993} for CS-$i$ and $\mathbb{C} = \underset{i \in \mathcal{I}}{\prod}\mathbb{C}_i$. Based on the Algorithm~\ref{ICEA}, the current contract of CS-$i$ can be updated to $\Big(\rho^{(\theta+1)}_i(\phi),\xi^{(\theta+1)}_i(\phi)\Big) \in \Gamma^{(\theta+1)}_i$ if the following condition holds
\begin{equation}
\label{eqn:for11c}
\begin{aligned}
{U}_{i}\Big(\rho^{(\theta+1)}_i(\phi),\xi^{(\theta+1)}_i(\phi),\boldsymbol{\rho}^{(\theta)}_{-i}(\phi), \boldsymbol{\xi}^{(\theta)}_{-i}(\phi)\Big) - \\ {U}_{i}\Big(\rho^{(\theta)}_i(\phi),\xi^{(\theta)}_i(\phi),\boldsymbol{\rho}^{(\theta)}_{-i}(\phi), \boldsymbol{\xi}^{(\theta)}_{-i}(\phi)\Big) > \kappa.
\end{aligned}
\end{equation} 
The process continues until the algorithm converges for all CSs as described in Theorem~\ref{theorem_equi1}. %Suppose that the total best response is described by 
%\begin{equation}
%\label{eqn:for11b}
%\begin{aligned}
%\Gamma\Big(\boldsymbol{\rho}(\phi), \boldsymbol{\xi}(\phi)\Big) = \bigg[\Gamma_i\Big(\boldsymbol{\rho}_{-i}(\phi), \boldsymbol{\xi}_{-i}(\phi)\Big)\bigg]_{i \in \mathcal{I}},
%\end{aligned}
%\end{equation}
\begin{theorem}\label{theorem_equi1}
	The best response iterative process in Algorithm~\ref{ICEA} converges under the optimality tolerance $\kappa$.
\end{theorem}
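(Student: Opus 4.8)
The plan is to prove termination through a monotone-improvement argument bounded from above, so that the algorithm can make only finitely many strictly-improving moves. First I would verify that the stage is compact: each contract space $\mathbb{C}_i$ is non-empty by assumption, the constraints (\ref{eqn:for9a}) and (\ref{eqn:for9revb})--(\ref{eqn:for9revd2}) are closed, and the capacity constraint $\sum_{i} {\hat \pi}_i \xi_i(\phi) \leq S(\phi)$ together with $\rho_i(\phi), \xi_i(\phi) \geq 0$ bounds every requested energy and payment. Consequently each expected utility $U_i$ is bounded above by a finite constant: discarding the non-negative payment term in $U_i = {\hat \pi}_i \sum_{\phi}(\varrho_i \xi_i(\phi) - \rho_i(\phi)) p(\phi)$ and using ${\hat \pi}_i \xi_i(\phi) \leq S(\phi_{max})$ gives $U_i \leq \bar U_i < \infty$.

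Next I would establish the monotone improvement. Within each outer iteration $\theta$ the SGP fixes the proportion vector $\boldsymbol{\hat \pi}^{(\theta)}$ (line 5 of Algorithm~\ref{ICEA}) and every CS computes its best response $\Gamma_i^{(\theta+1)}$ against the common baseline $\big(\boldsymbol{\rho}^{(\theta)}_{-i}(\phi),\boldsymbol{\xi}^{(\theta)}_{-i}(\phi)\big)$. By the acceptance rule (\ref{eqn:for11c}), CS-$i$ replaces its contract only when the resulting expected utility strictly exceeds the previous value by more than $\kappa$, and otherwise keeps the previous contract; hence, measured against this fixed baseline, each accepted update raises the CS's expected utility by at least $\kappa > 0$ and no update ever decreases it.

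I would then combine these two facts: a real sequence that is non-decreasing, bounded above by $\bar U_i$, and that increases by at least the fixed amount $\kappa$ on every accepted move admits at most $(\bar U_i - U_i^{(0)})/\kappa$ such moves. Summing over the finitely many CSs shows the total number of strictly-improving updates across the whole run is finite. Therefore, after finitely many iterations no CS can improve its expected utility by more than $\kappa$, the termination condition of Algorithm~\ref{ICEA} is triggered, and the iterates stabilize at a profile $\big(\boldsymbol{\hat \rho}(\phi),\boldsymbol{\hat \xi}(\phi)\big)$ satisfying the equilibrium inequality of Definition~\ref{DefV3}.

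The main obstacle is the coupling through $\boldsymbol{\hat \pi}$: since $\boldsymbol{\hat \pi}^{(\theta)}$ is recomputed from $(\mathbf{P}_1)$ at the start of each outer iteration, updating one CS's contract perturbs the SGP's optimal proportions and hence every CS's utility, so the per-CS monotonicity established inside the inner loop does not automatically persist once $\boldsymbol{\hat \pi}$ changes. I would address this by observing that $\boldsymbol{\hat \pi}^{(\theta)}$ is held fixed throughout the inner best-response sweep, so every acceptance test (\ref{eqn:for11c}) uses a consistent baseline, and by arguing that the strictly positive threshold $\kappa$ precludes improving cycles: any accepted move strictly advances the moving CS's utility, and since the joint iterate lives in the compact product space $\mathbb{C} = \prod_{i \in \mathcal{I}}\mathbb{C}_i$ with all utilities bounded, an infinite sequence of $\kappa$-improving moves is impossible. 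Should a residual cross-iteration effect from $\boldsymbol{\hat \pi}$ remain, I would close the gap by exhibiting an aggregate potential (for instance the summed CS utilities, or the social welfare (\ref{eqn:for9rev3})) that the SGP's optimal re-solution of the convex problem $(\mathbf{P}_1)$ leaves non-decreasing, thereby reducing the statement to the finite-improvement-path property.
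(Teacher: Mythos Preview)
Your argument follows a genuinely different route from the paper. The paper does not use a bounded finite-improvement-path argument; instead it establishes that the best-response selection $\hat\Gamma_i(\boldsymbol{\rho}_{-i},\boldsymbol{\xi}_{-i})=\max\Gamma_i(\boldsymbol{\rho}_{-i},\boldsymbol{\xi}_{-i})$ is \emph{monotone} in the other players' contracts. Concretely, it fixes an order on the contract space, assumes $\hat\Gamma_i$ fails to be increasing, and derives a contradiction via an increasing-differences inequality on $U_i$. Once the best-response map is order-preserving, the iterates $(\boldsymbol{\rho}^{(\theta)},\boldsymbol{\xi}^{(\theta)})$ form a monotone sequence in a compact product lattice, so they must stabilise; the tolerance $\kappa$ then simply certifies when the sequence has stopped moving. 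This is the standard Topkis/Tarski mechanism for supermodular-type games rather than a potential-function argument.

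Your approach is more elementary in that it needs no lattice structure, only boundedness of $U_i$ and the strictly positive step $\kappa$. The cost is exactly the obstacle you flag: because the acceptance test (\ref{eqn:for11c}) compares utilities against the \emph{moving} profile $(\boldsymbol{\rho}_{-i}^{(\theta)},\boldsymbol{\xi}_{-i}^{(\theta)})$ and the moving proportion $\boldsymbol{\hat\pi}^{(\theta)}$, the sequence $U_i^{(\theta)}$ need not be nondecreasing across outer iterations, so the ``at most $(\bar U_i-U_i^{(0)})/\kappa$ moves'' count is not justified as written. Your proposed remedy---an aggregate potential such as the summed CS utilities or the social welfare (\ref{eqn:for9rev3})---would close the gap only if you verify that every accepted unilateral update strictly increases that potential by at least $\kappa$ and that the SGP's re-optimisation of $(\mathbf{P}_1)$ does not decrease it; neither claim is obvious here since $(\mathbf{P}_1)$ optimises the SGP's utility, not the CSs'. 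The paper's monotone-map route sidesteps this coupling entirely because it tracks monotonicity of the \emph{contracts} rather than of the \emph{utilities}.
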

\begin{proof}
	See Appendix~\ref{appx:theorem1}.
\end{proof}

To guarantee that the algorithm converges to the equilibrium contract solution $\Big(\boldsymbol{\hat \rho}(\phi), \boldsymbol{\hat \xi}(\phi)\Big), \forall \phi \in \Phi$, we first observe that the equilibrium contract solution exists through identifying a fixed point in a set-valued function $\Gamma$, $\Gamma: \mathbb{C} \rightarrow 2^\mathbb{C}$, such that 
\begin{equation}
\label{eqn:for11d}
\begin{aligned}
\Gamma = \bigg[\Gamma_i\Big(\boldsymbol{\rho}_{-i}(\phi), \boldsymbol{\xi}_{-i}(\phi)\Big),\Gamma_{-i}\Big(\boldsymbol{\rho}_i(\phi), \boldsymbol{\xi}_i(\phi)\Big)\bigg].
\end{aligned}
\end{equation} 
The existence of this fixed point is equivalent to the equilibrium solution~\cite{Bernheim:1986, Fraysse:1993}. Then, we can show that the Algorithm 3 converges to the equilibrium contract solution, i.e., all the CSs obtain the maximum expected utilities where the $\Big(\boldsymbol{\hat \rho}(\phi), \boldsymbol{\hat \xi}(\phi)\Big)$ is found, as formally stated in Theorem~\ref{theorem_equi2}.
\begin{theorem}\label{theorem_equi2}
	The Algorithm 3 drives the best response iterative process to its equilibrium contract solution $\Big(\boldsymbol{\hat \rho}(\phi), \boldsymbol{\hat \xi}(\phi)\Big), \forall \phi \in \Phi$, through obtaining a fixed point $\Big(\boldsymbol{\rho}^*(\phi), \boldsymbol{\xi}^*(\phi)\Big), \forall \phi \in \Phi$, in $\Gamma$.
\end{theorem}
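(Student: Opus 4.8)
The plan is to establish the result in two stages. First I would prove that the set-valued best-response correspondence $\Gamma$ defined in Eq.~\eqref{eqn:for11d} admits a fixed point $\Big(\boldsymbol{\rho}^*(\phi), \boldsymbol{\xi}^*(\phi)\Big)$, i.e. a contract profile satisfying $\Big(\boldsymbol{\rho}^*(\phi), \boldsymbol{\xi}^*(\phi)\Big) \in \Gamma\Big(\boldsymbol{\rho}^*(\phi), \boldsymbol{\xi}^*(\phi)\Big)$. Since a fixed point of $\Gamma$ is by construction a profile in which every CS-$i$ simultaneously plays a best response to the others, it satisfies the equilibrium inequalities of Definition~\ref{DefV3} in Eq.~\eqref{eqn:for10}, so the fixed point \emph{is} the equilibrium contract $\Big(\boldsymbol{\hat \rho}(\phi), \boldsymbol{\hat \xi}(\phi)\Big)$. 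Second, I would invoke Theorem~\ref{theorem_equi1}, which already guarantees that the best-response iterates generated by Algorithm~\ref{ICEA} converge under the tolerance $\kappa$; combining convergence with the observation that a CS updates its contract in Eq.~\eqref{eqn:for11c} only when a strict improvement exceeding $\kappa$ exists, so that the rest points of the dynamics are precisely the fixed points of $\Gamma$, then forces the limit of the algorithm to coincide with $\Big(\boldsymbol{\hat \rho}(\phi), \boldsymbol{\hat \xi}(\phi)\Big)$.

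For the existence step I would apply Kakutani's fixed-point theorem to $\Gamma: \mathbb{C} \rightarrow 2^{\mathbb{C}}$, checking four hypotheses. (i) The contract space $\mathbb{C} = \prod_{i \in \mathcal{I}} \mathbb{C}_i$ is non-empty, as stated after Eq.~\eqref{eqn:for11a}, and it is compact: each $\xi_i(\phi)$ is bounded above through the capacity constraint~\eqref{eqn:for9a} and below by $0$, while each $\rho_i(\phi)$ is confined to a closed bounded interval by the IR constraint~\eqref{eqn:for9revb} and the monotonicity constraint~\eqref{eqn:for9revd2}, and all defining inequalities are non-strict, so the set is closed. (ii) $\mathbb{C}$ is convex. (iii) For each fixed $\Big(\boldsymbol{\rho}_{-i}(\phi), \boldsymbol{\xi}_{-i}(\phi)\Big)$ the best response $\Gamma_i$ is non-empty, which follows from Weierstrass' theorem because $U_i$ in Eq.~\eqref{eqn:for6} is continuous and the feasible set is compact. (iv) $\Gamma_i$ is convex-valued: since $U_i$ is \emph{linear} in CS-$i$'s own variables $\big(\rho_i(\phi),\xi_i(\phi)\big)$, its maximizer set over a convex feasible region is a convex face, and $\Gamma_i$ has a closed graph, which follows from Berge's maximum theorem once the constraint correspondence is shown to be continuous.

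The main obstacle is verifying convexity of the per-CS feasible region, hence hypotheses (ii) and (iv), under the transformed local IC constraint in Eq.~\eqref{eqn:for9revc2}. That inequality contains the bilinear product $\big(1 + \sum_i \hat\pi_i \rho_i(\phi)\big)\big(\zeta \sum_i \hat\pi_i(\xi_i(\phi) - \xi_i(\phi-1))\big)$, which couples the payment and energy variables and is not jointly convex in general. I would address this by fixing the rival contracts $\big(\boldsymbol{\rho}_{-i},\boldsymbol{\xi}_{-i}\big)$ during CS-$i$'s sub-problem and exploiting the monotonicity result of Lemma~\ref{lemma1} together with the strictly concave, increasing structure of $G$ established after Eq.~\eqref{eqn:for3b}: restricted to the monotone feasible path $\rho_i(\phi) \geq \rho_i(\phi-1)$ imposed by Eq.~\eqref{eqn:for9revd2}, the left-hand side of~\eqref{eqn:for9revc2} can be shown to cut out a convex subset of $\mathbb{C}_i$, so that $\Gamma_i$ stays convex-valued. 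Once convexity and upper hemicontinuity are secured, Kakutani's theorem delivers the fixed point, the equivalence between fixed points of $\Gamma$ and equilibria in the sense of Definition~\ref{DefV3} is exactly the characterization of~\cite{Bernheim:1986, Fraysse:1993}, and Theorem~\ref{theorem_equi1} upgrades mere existence into convergence of Algorithm~\ref{ICEA} to $\Big(\boldsymbol{\hat \rho}(\phi), \boldsymbol{\hat \xi}(\phi)\Big)$.
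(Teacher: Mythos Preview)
Your two-stage plan (existence of a fixed point of $\Gamma$, then convergence via Theorem~\ref{theorem_equi1}) matches the paper's overall architecture, but the existence argument you choose is genuinely different and runs into a real obstacle that you flag but do not actually resolve. The paper does \emph{not} use Kakutani; it uses an order-theoretic (Tarski-type) argument built on the monotone selection $\hat\Gamma_i(\cdot)=\max\Gamma_i(\cdot)$ that was shown to be increasing in the proof of Theorem~\ref{theorem_equi1}. Concretely, the paper considers the set $\mathbb{C}^*\subset\mathbb{C}$ of contract profiles dominated by their image under $\hat\Gamma$, uses $\hat\Gamma(\min\mathbb{C})\geq\min\mathbb{C}$ to see that $\mathbb{C}^*\neq\emptyset$, and then monotonicity gives $\hat\Gamma(\max\mathbb{C}^*)\geq\max\mathbb{C}^*$ while membership gives $\hat\Gamma(\max\mathbb{C}^*)\leq\max\mathbb{C}^*$, so $\max\mathbb{C}^*$ is the desired fixed point. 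This route needs only the lattice order on $\mathbb{C}$ and the increasing property already established, and it completely sidesteps any convexity requirement on the feasible set or on $\Gamma_i$.

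Your Kakutani route, by contrast, requires convex-valuedness of $\Gamma_i$, which in turn needs the per-CS feasible region cut out by the transformed IC constraint~\eqref{eqn:for9revc2} to be convex in CS-$i$'s own variables. Even after freezing the rivals' contracts, that constraint still contains the product $\big(1+\hat\pi_i\rho_i(\phi)+\text{const}\big)\cdot\big(\zeta\hat\pi_i(\xi_i(\phi)-\xi_i(\phi-1))+\text{const}\big)$, which is bilinear in $\rho_i(\phi)$ and $(\xi_i(\phi),\xi_i(\phi-1))$; the super-level set $\{\,\cdot\,\geq 0\}$ of such a bilinear form is not convex in general, and restricting to the monotone cone $\rho_i(\phi)\geq\rho_i(\phi-1)$ from~\eqref{eqn:for9revd2} does not linearize or concavify it. Your sentence ``can be shown to cut out a convex subset of $\mathbb{C}_i$'' is precisely the step that would need a proof and, as written, is the gap. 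The paper's monotone fixed-point argument is therefore not just a stylistic alternative but the device that makes the existence step go through without confronting this non-convexity.
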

\begin{proof}
	See Appendix~\ref{appx:theorem2}.
\end{proof}

\section{Performance Evaluation} 
\label{sec:PE}

%+++++++++++++++++++++++++++++++++++++
%+++++++++++++++++++++++++++++++++++++
\subsection{Dataset Preparation and Evaluation Method for Energy Demand Prediction} 
\label{subsec:DP}

In the simulations, we utilize the actual dataset generated from CSs' transactions in Dundee city, the United Kingdom between 2017 and 2018~\cite{Dundee:2018} to show the efficiency of the proposed learning and economic approaches. Specifically, the dataset contains 65,601 transactions of charging EVs with the following information: CS unique identifier (i.e., 58 CSs), EV transaction identifier, EV transaction date, EV transaction time, and energy usage (in kWh). We divide the information into four learning features, i.e., the first four information, and one label, i.e., the energy usage. Additionally, we group CS identifier, EV transaction date, and EV transaction time to be categorical features. In this case, we transform the EV transaction date and EV transaction time into 7-day and 24-hour categories, respectively. Moreover, as seen in Fig.~\ref{fig:Clustering_Topology}(a), each CS has the location information for the clustering purposes.

%\subsection{Evaluation Method} 
%\label{subsec:EM}

To observe the prediction accuracy, we compute the prediction error using RMSE. This is because we account for the energy demand  prediction which is classified as a regression prediction model, i.e., when the output layer of DNN produces the non-discrete prediction results. Given $M$ number of transactions, the RMSE can be calculated by
\begin{equation}
\label{eqn3k}
\begin{aligned}
\mbox{RMSE} = \sqrt{\frac{1}{M}\sum_{m=1}^M (\xi_m - {\hat \xi}_m)^2},
\end{aligned}
\end{equation}
where $\xi_m$ and ${\hat \xi}_m$ are the real and predicted energy demand for transaction $m$. %Compared with mean squared error (MSE) and mean absolute error (MAE), the RMSE has the following advantages. The use of square root in the RMSE can generate the same scale between the errors and the actual values, which is more interpretable compared with the MSE. In addition, unlike the MAE, the RMSE can deal with the large errors as the errors will be averaged after they are squared, and thus it will be more useful when large errors are specifically undesirable in our approaches. %The RMSE has been widely used to show the prediction accuracy in the literature, e.g., in~\cite{Koesdwiady:2016}-\cite{Zhang3:2018}.

\subsection{Simulation Setup} 
\label{subsec:ES}

\begin{figure*}[!]
	\begin{center}
		$\begin{array}{cc} 
		\epsfxsize=3.3 in \epsffile{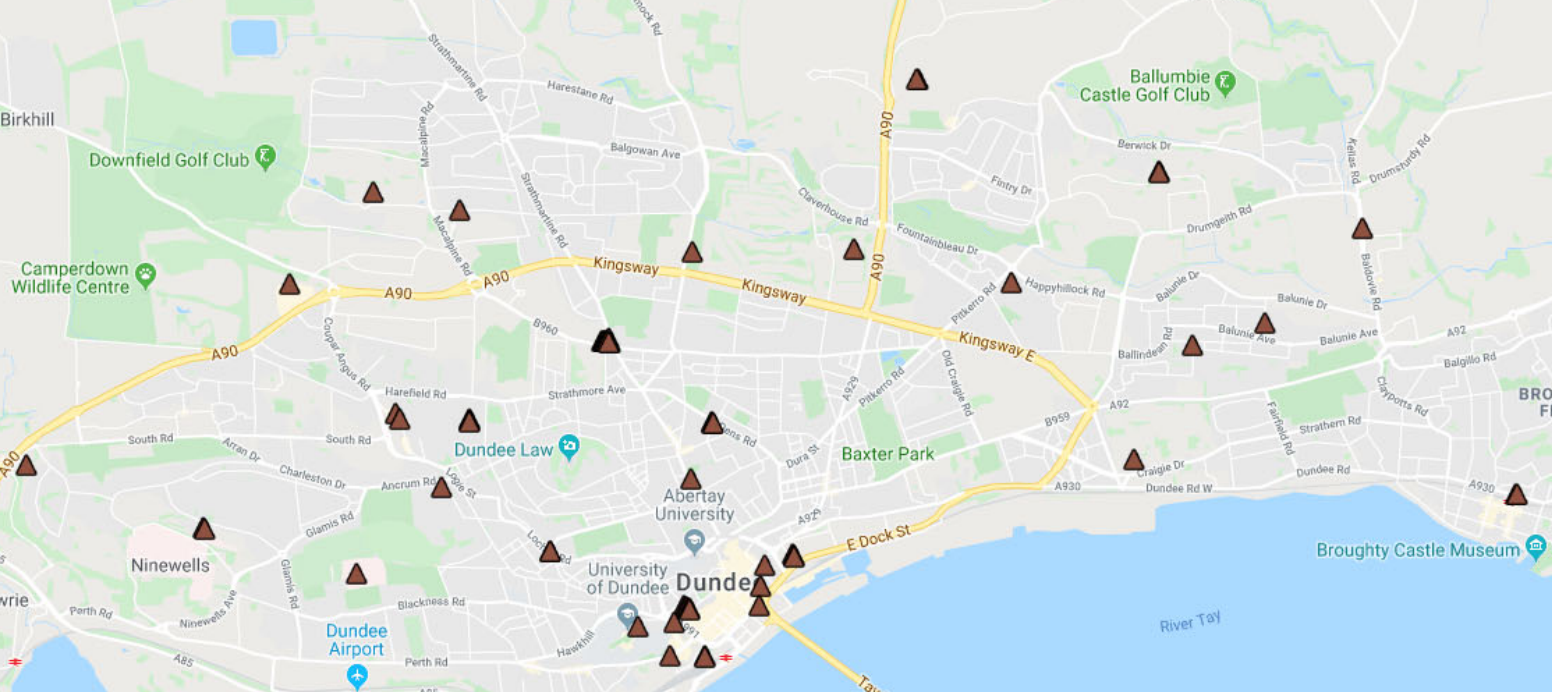} &
		\hspace*{0cm}
		\epsfxsize=3.38 in \epsffile{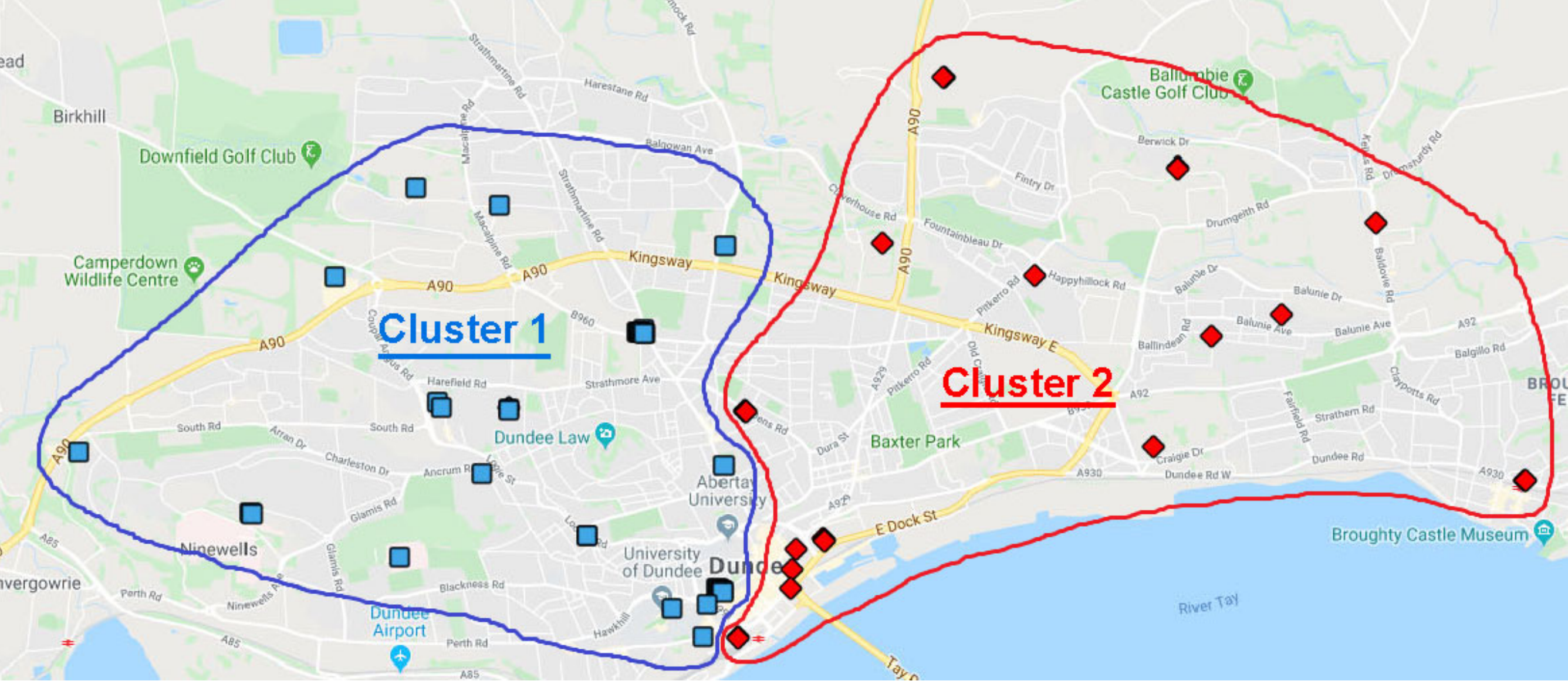} \\ [0.1cm]
		\text{\footnotesize (a) CSs without clustering} & \text{\footnotesize (b) CSs with clustering}  \\ [0.2cm]
		\end{array}$
		\caption{The distribution of CSs in Dundee city, the United
			Kingdom between 2017 and 2018~\cite{Dundee:2018}.}
		\label{fig:Clustering_Topology}
	\end{center}
\end{figure*}

We use \emph{TensorFlow} in a shared computing platform with Intel Xeon E5-2687W v2 3.4GHz 8 cores 32GB RAM to compare the performance of the proposed learning techniques with other centralized learning methods, i.e., a cloud server gathers actual datasets from all CSs and then performs the learning algorithms to predict the energy demand. These centralized methods include decision tree, random forest, support vector regressor, k-neighbors regressor, stochastic gradient descent regressor, multi-layer perceptron regressor, and cloud-based deep learning~\cite{Boutaba:2018}. We divide the dataset into training dataset with ratios 0.8, 0.7, 0.6, and 0.5, and testing dataset with the rest of the ratios. We also split all transactions into $I$ training subsets. Then, each CS-$i$ runs the training subsets as well as testing dataset for the next energy demand prediction. For the CS clustering-based DFEL method, we set $K=2$ and split all the CSs into 2 clusters as illustrated in Fig.~\ref{fig:Clustering_Topology}(b). For the learning process using DNN, we add three layers including two hidden layers with the same number of neurons, i.e., 64 neurons for each layer, which are followed by one dropout layer with a fraction rate 0.15. Moreover, we use the tanh activation function and Adam optimizer which starts from step size 0.01.

Then, we compare the performance of proposed economic model for energy transfer activity with the information-symmetry~\cite{Chen2:2019}, proportional-request, and non-prediction methods~\cite{Tang:2014}.
For the information-symmetry method, the CSs exactly know the type, i.e., the energy capacity, of the SGP to find the optimal contract policy. In this case, we use the information-symmetry method as the upper bound solution. For the proportional-request method, each CS obtains the proportional amount of energy according to its energy demand without using contract policy. Additionally, for the non-prediction method, the CSs request energy transfer from the SGP immediately once they receive energy demands from EVs (with fluctuated energy price unit and without considering contract policy).
In the contract mechanism, we consider one agent, i.e., the SGP, and 58 principals corresponding to 58 CSs. We set the price unit of energy transfer from the SGP at 200 monetary units (MU) for the proposed, information-symmetry, and proportional-request methods. We also set energy charging price of EVs at 220MU per MWh~\cite{electriccar} for all methods. For the proposed, information-symmetry, and proportional-request methods, the initial energy demands for all CSs are generated from the energy demand prediction of CS-based DFEL when 0.8 training set ratio is used. To show various results of the SGP, we consider 10 to 50 possible types with the same distribution of the types, i.e., $p(\phi)=\frac{1}{\phi_{max}}$. As such, each type $\phi$ corresponds to the energy capacity $\phi \frac{S_{max}}{\phi_{max}}$ MWh of the SGP, where $S_{max}$ is the maximum capacity of the SGP regardless the number of possible types and set at 500MW. Finally, we set $\zeta$ at 0.022.

%+++++++++++++++++++++++++++++++++++++
%+++++++++++++++++++++++++++++++++++++
\subsection{Prediction Accuracy and Communication Overhead Performance}
\label{subsec:SR1}

We show the comparisons between centralized and proposed machine learning methods for different ratios of training set in Table~\ref{tab:contents2}. First, we analyze the RMSE, i.e., prediction accuracy, of the testing set when we use 0.8 training set ratio. Specifically, we can observe that the CS-based DFEL with clustering can reduce the RMSE up to 24.63\%, respectively, compared with those of the centralized methods. The reason is that the CS clustering-based method can combine similar important features and/or labels in the same cluster, and thus improve the prediction accuracy~\cite{Li:2018}. In this way, we can minimize the biased prediction cost of the whole dataset by clustering the CSs based on their locations, which then produces the lower prediction error. Furthermore, for the CS-based DFEL without clustering, the performance of RMSE can achieve 23.51\% lower than those of the centralized learning methods. As such, the proposed learning method without clustering can obtain less than 2\% gap from the ones with clustering. The reason is that the CS-based DFEL can thoroughly learn the subset of the whole dataset individually at diverse workers and obtain the average prediction with lower error as well as less variance regarding the number of workers~\cite{Guo:1999}. We also observe that the CS-based DFEL with and without clustering still outperform all the centralized learning methods for other training set ratios, i.e., 0.7, 0.6, and 0.5. To be more specific, the CS-based DFEL with clustering has the lowest RMSE, i.e., the best prediction accuracy, for those training set ratios.

\begin{table}[!]
	\centering
	\caption{The comparison of testing RMSE for different learning methods.}
	\begin{center}
		\begin{tabular}{ |c|c|c|c|c| } 
			\hline
			\multirow{2}{*}{\bf Energy learning method} & \multicolumn{4}{|c|}{\bf Ratio of training set} \\
			\cline{2-5}
			& {\bf 0.8} & {\bf 0.7} & {\bf 0.6} & {\bf 0.5} \\
			\hline
			\hline
			K-neighbors regressor & 7.18 & 7.71 & 7.57 & 7.67 \\
			\hline 
			Multi-layer perceptron & \multirow{2}{*}{6.57} & \multirow{2}{*}{6.62} & \multirow{2}{*}{6.90} & \multirow{2}{*}{6.53} \\
			regressor & & & &\\
			\hline 
			Stochastic gradient & \multirow{2}{*}{6.55} & \multirow{2}{*}{6.57} & \multirow{2}{*}{6.54} & \multirow{2}{*}{6.54} \\ 
			descent regressor & & & &\\ 
			\hline
			Decision tree & 6.47 & 6.49 & 6.47 & 6.47 \\ 
			\hline
			Support vector regressor & 6.46 & 6.50 & 6.50 & 6.53 \\ 
			\hline
			Random forest & 6.35 & 6.66 & 6.88 & 6.80 \\ 
			\hline
			Cloud-based deep learning & 5.86 & 5.86 & 5.86 & 5.87 \\ 
			\hline
			%			{\bf FDL 2 CSs} & 5.83 & 5.84 & 5.84 & 5.85 \\ 
			%			\hline
			%			{\bf FDL 3 CSs} & 5.82 & 5.92 & 5.86 & {\bf 5.84} \\ 
			%			\hline
			{\bf CS-based DFEL} & {\bf 5.81} & {\bf 5.82} & {\bf 5.81} & {\bf 5.84} \\ 
			\hline
			%{\bf CSP-based ERL + CS Clustering} & {\bf 5.77} & {\bf 5.79} & 5.85 & 5.87 \\ 
			%\hline
			{\bf  CS-based DFEL +} & \multirow{2}{*}{\bf 5.76} & \multirow{2}{*}{\bf 5.78} & \multirow{2}{*}{\bf 5.78} & \multirow{2}{*}{\bf 5.83}  \\
			{\bf  CS Clustering} & & & &\\
			\hline
		\end{tabular}
	\end{center}
	\label{tab:contents2}
\end{table}

\begin{figure}[!t]
	\begin{center}
		$\begin{array}{cc} 
		\epsfxsize=1.68 in \epsffile{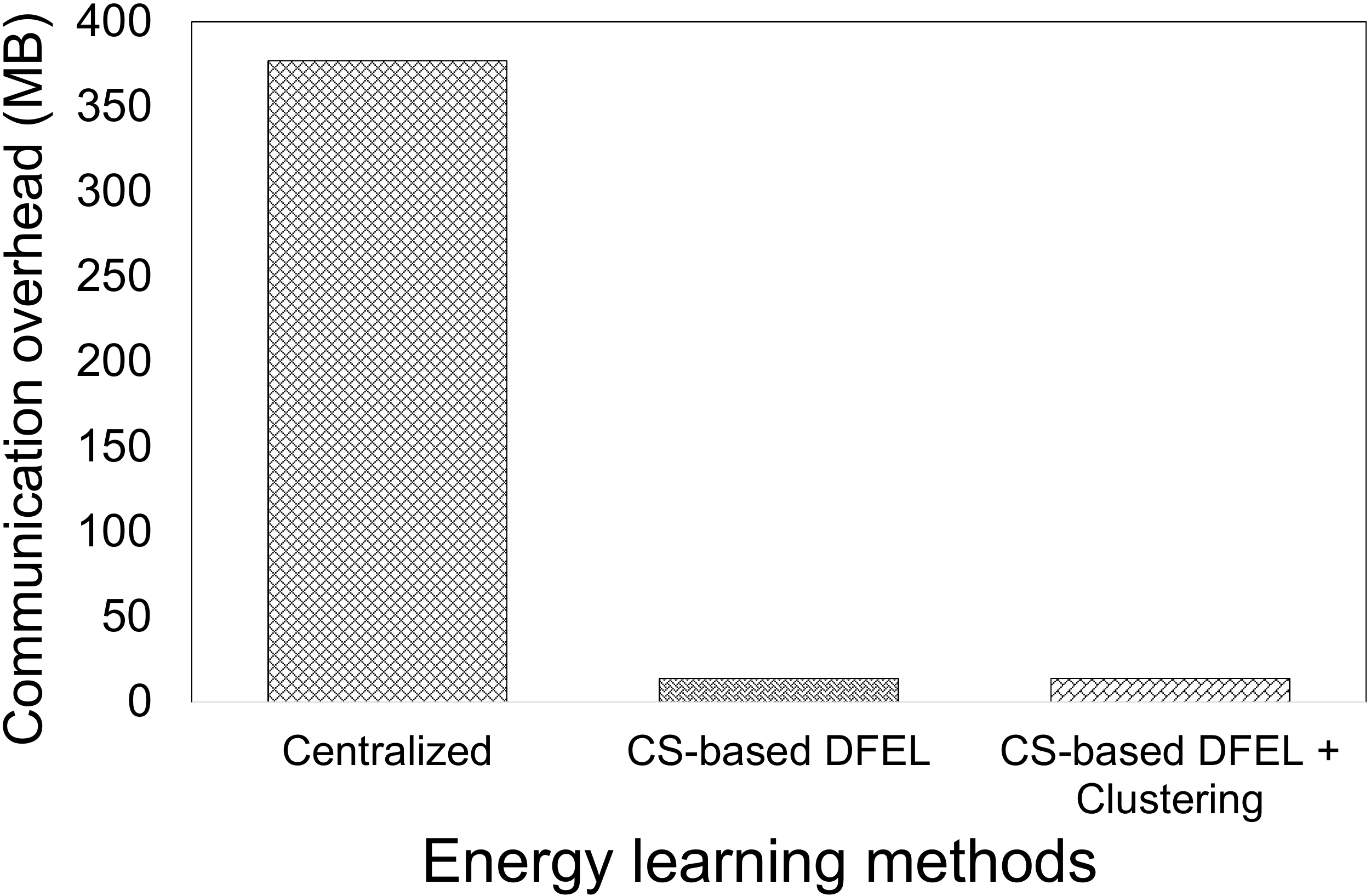} &
		\hspace*{-0.2cm}
		\epsfxsize=1.66 in \epsffile{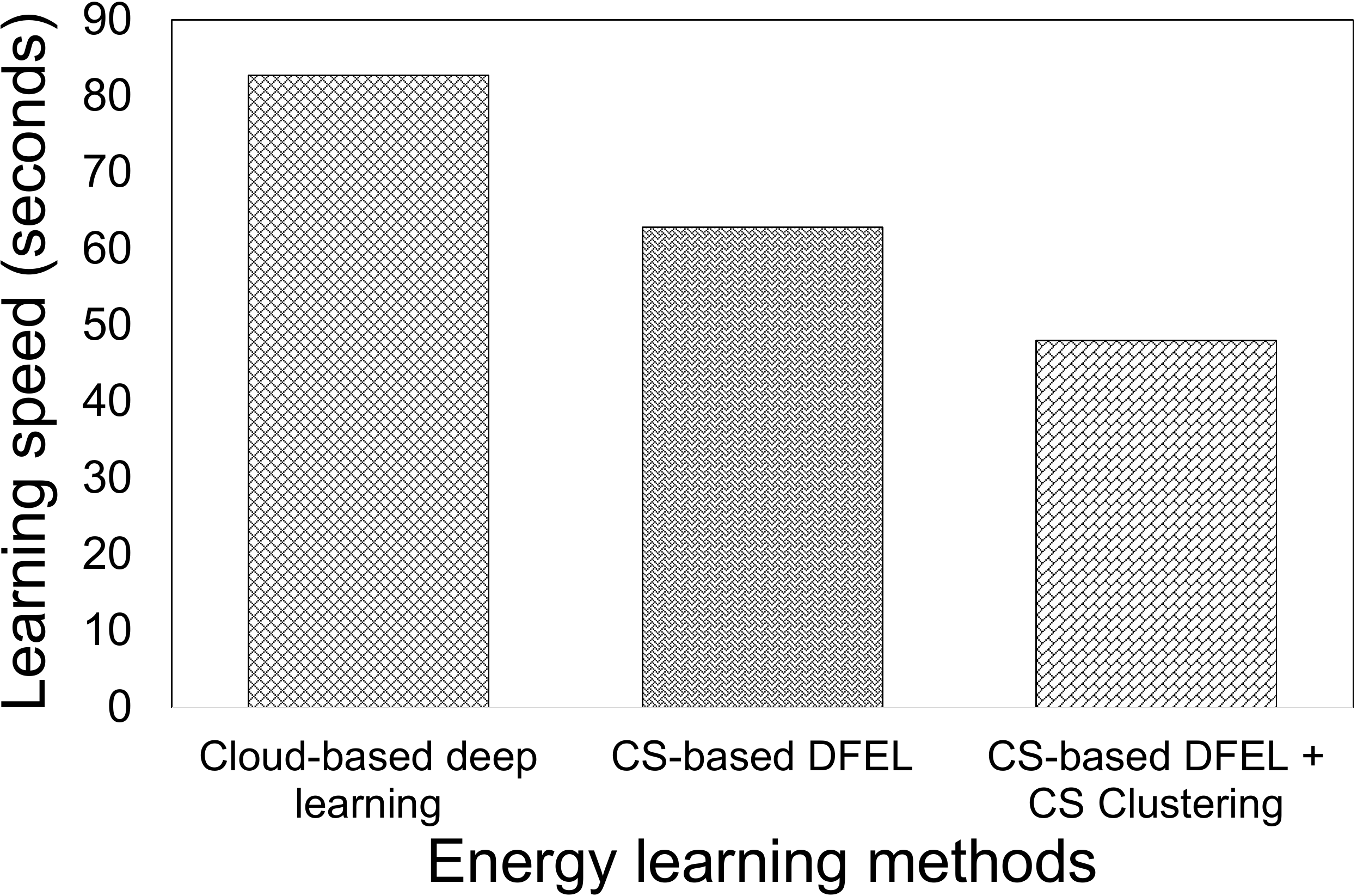} \\ [0.1cm]
		\text{\footnotesize (a) Communication overhead} & \text{\footnotesize (b) Learning speed}  \\ [0.2cm]
		\end{array}$
		\caption{The performance of communication overhead and learning speed for various energy learning methods.}
		\label{fig:comm_over}
	\end{center}
\end{figure}

\begin{figure*}[!t]
	\begin{center}
		$\begin{array}{ccc} 
		\epsfxsize=2.24 in \epsffile{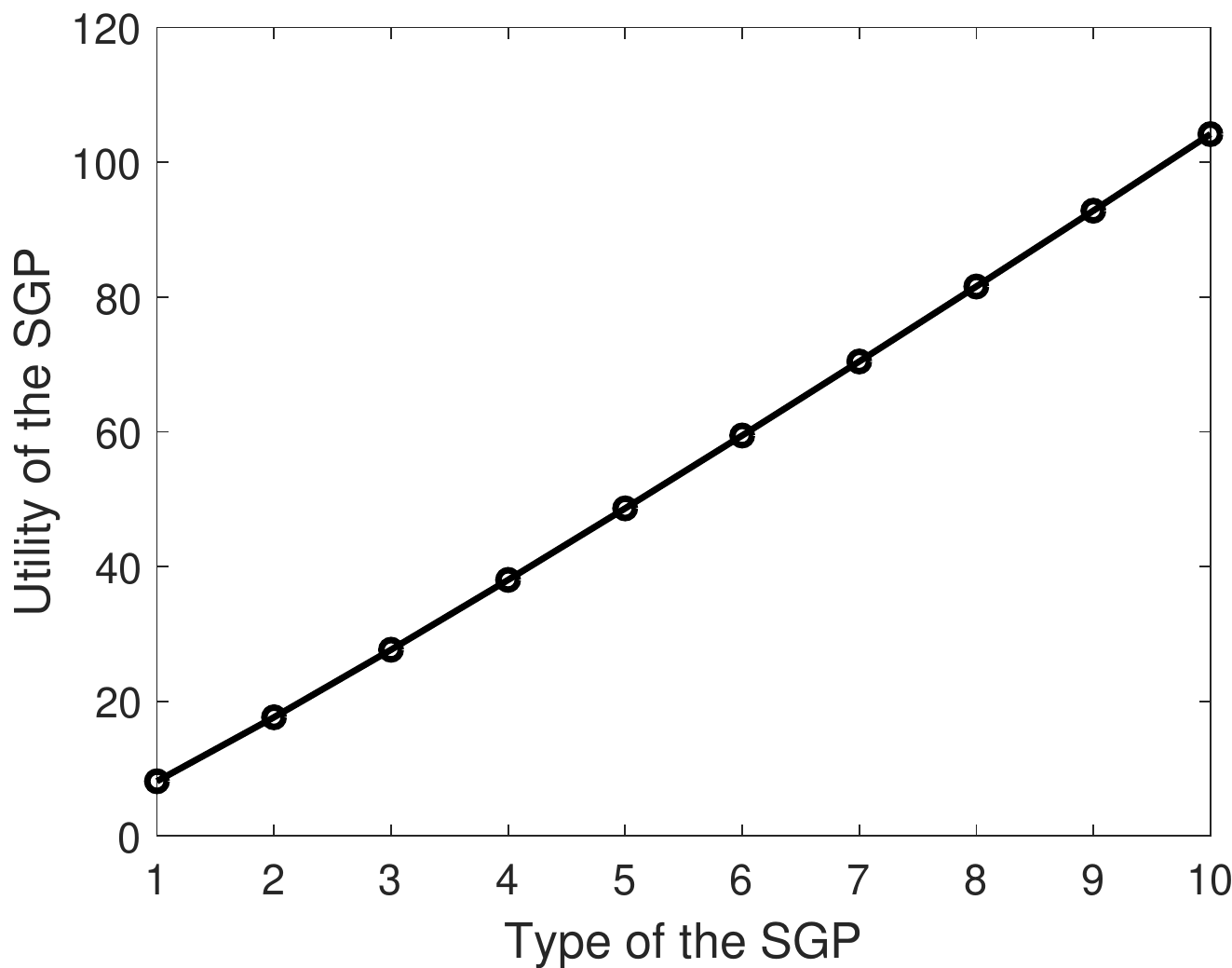} & 
		\hspace*{-.1cm}
		\epsfxsize=2.2 in \epsffile{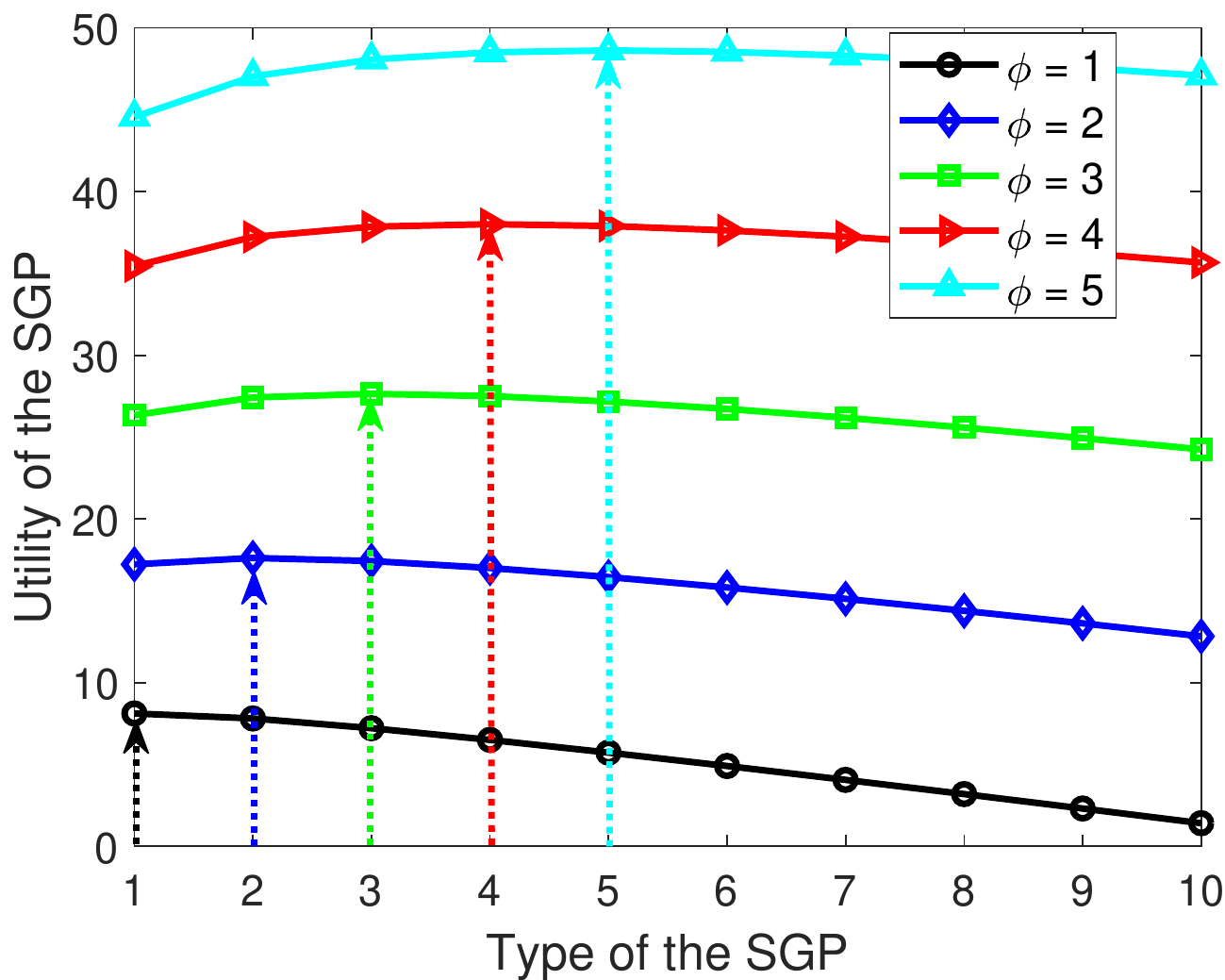} & 
		\hspace*{-.1cm}
		\epsfxsize=2.23 in \epsffile{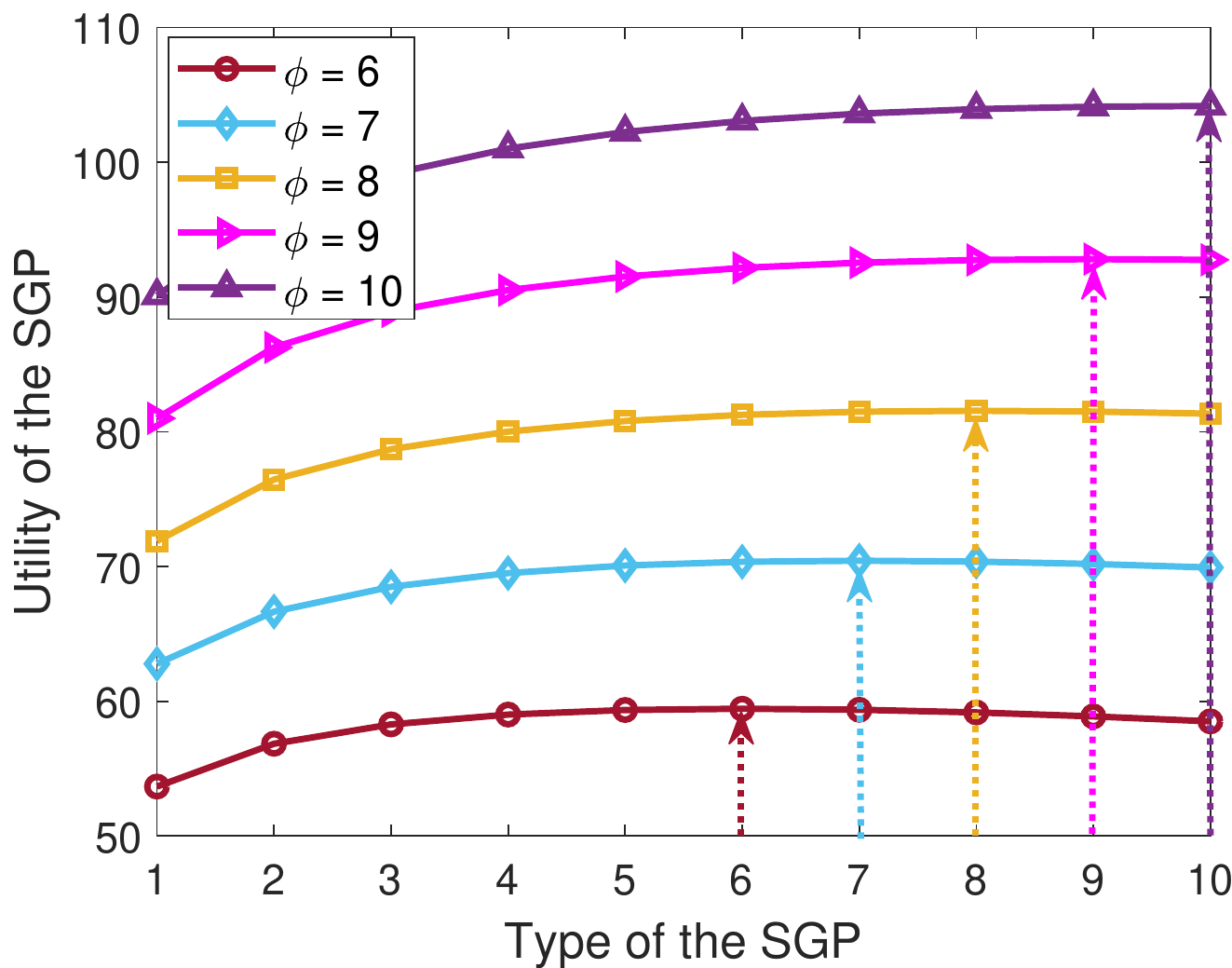} \\ [0.1cm]
		\text{\footnotesize (a) IR constraint} & \text{\footnotesize (b) IC constraints for $\phi = 1$ to $\phi = 5$} & \text{\footnotesize (c) IC constraints for $\phi = 6$ to $\phi = 10$} \\ [0.2cm]
		\end{array}$
		\caption{The observation of the SGP's IR and IC constraints for the proposed economic model.}
		\label{fig:util_SGP_IC}
	\end{center}
\end{figure*} 

Different from the proposed learning methods, the first six centralized learning methods in the table cannot observe the important features and their correlations. This is because they are incapable of learning the complicated hidden features using nonlinear transformation and multiple layers of neural network. For the cloud-based deep learning method, even though the RMSE performance is only 2\% higher than the CS-based DFEL due to the same utilization of the deep learning algorithm, this method experiences the remarkable communication overhead and information disclosure. 

To demonstrate the above drawback, we provide the communication overhead and learning speed comparisons for all learning methods in Fig.~\ref{fig:comm_over}. As observed in Fig.~\ref{fig:comm_over}(a), the communication overheads of CS-based DFEL with and without clustering are 96.3\% lower than that of the centralized methods. The reason is that the CSs only require to send the small learned models to each other through the network without sharing any actual datasets. This benefit corresponds to the information disclosure reduction for participating EVs and CSs. Additionally, the use of CS-based DFEL algorithm  can further increase the learning speed performance as shown in Fig.~\ref{fig:comm_over}(b). In this case, the CS-based DFEL without and with clustering can boost the learning speed by 24\% and 42\%, respectively. For the CS-based DFEL with clustering using 2 clusters, it outperforms the cloud-based deep learning and CS-based DFEL methods because it reduces the dataset dimension through learning smaller number of samples in each cluster simultaneously.

% Baseline, DL, DL+Clustering: (5.2MB + 1.3MB)*58 = 377 MB
% DFEL: (0.12)*58 = 6.96MB
% DFEL + Clustering: 6.96MB

%\begin{figure}[!t]
%	\centering
%	\includegraphics[scale=0.5]{Figs/comm_over2}
%	\caption{The comparison of communication overhead for all machine learning methods.}
%	\label{fig:comm_over}
%\end{figure}

\subsection{Economic Model Performance}

\subsubsection{The validity of IR and IC constraints} 

In  Fig.~\ref{fig:util_SGP_IC}, we observe the IR and IC constraints of the SGP for the proposed, i.e., non-collaborative contract, optimization problem. From Fig.~\ref{fig:util_SGP_IC}(a), the SGP always holds non-negative utility for all possible types of the SGP, and thus satisfies the original IR constraints in Eq.~(\ref{eqn:for9b}). As shown in the figure, the utility of the SGP increases monotonically with regard to its own type. As such, the SGP with a higher type will produce a higher utility. This is because the participating CSs can request more energy transfer from the SGP due to the fact that a higher type SGP utilizes a larger energy capacity. Correspondingly, the CSs require to offer higher payments to the SGP which then help raising the SGP's utility. Next, from Fig.~\ref{fig:util_SGP_IC}(b)-(c), we show that the SGP always achieves the highest utility when it applies the appropriate contract determined for its own type. Alternatively, the SGP may degrade its utility performance when choosing unsuitable contracts for its type. In this way, the original IC constraints in Eq.~(\ref{eqn:for9c}) are satisfied. For example, the SGP with type 1, type 5, and type 10 will achieve the highest utility when it uses the appropriate contracts for type 1, type 5, and type 8, respectively. Since we guarantee that the IR and IC constraints of the SGP are achieved, we can then find the feasible contracts for all CSs.

%\begin{figure}[!]
%	\centering
%	\includegraphics[scale=0.7]{Figs/util_SGP}
%	\caption{The validation of IR constraint of the SGP when 10 possible types are observed.}
%	\label{fig:util_SGP}
%\end{figure}

\subsubsection{The social welfare of the network and utilities of CSs}

%\begin{figure}[t]
%	\centering
%	\includegraphics[scale=0.45]{Figs/Social_welfare_result}
%	\caption{Social welfare between the SGP and all CSs for various methods.}
%	\label{fig:social_welfare}
%\end{figure}
\begin{figure}[t]
	\centering
	\includegraphics[scale=0.33]{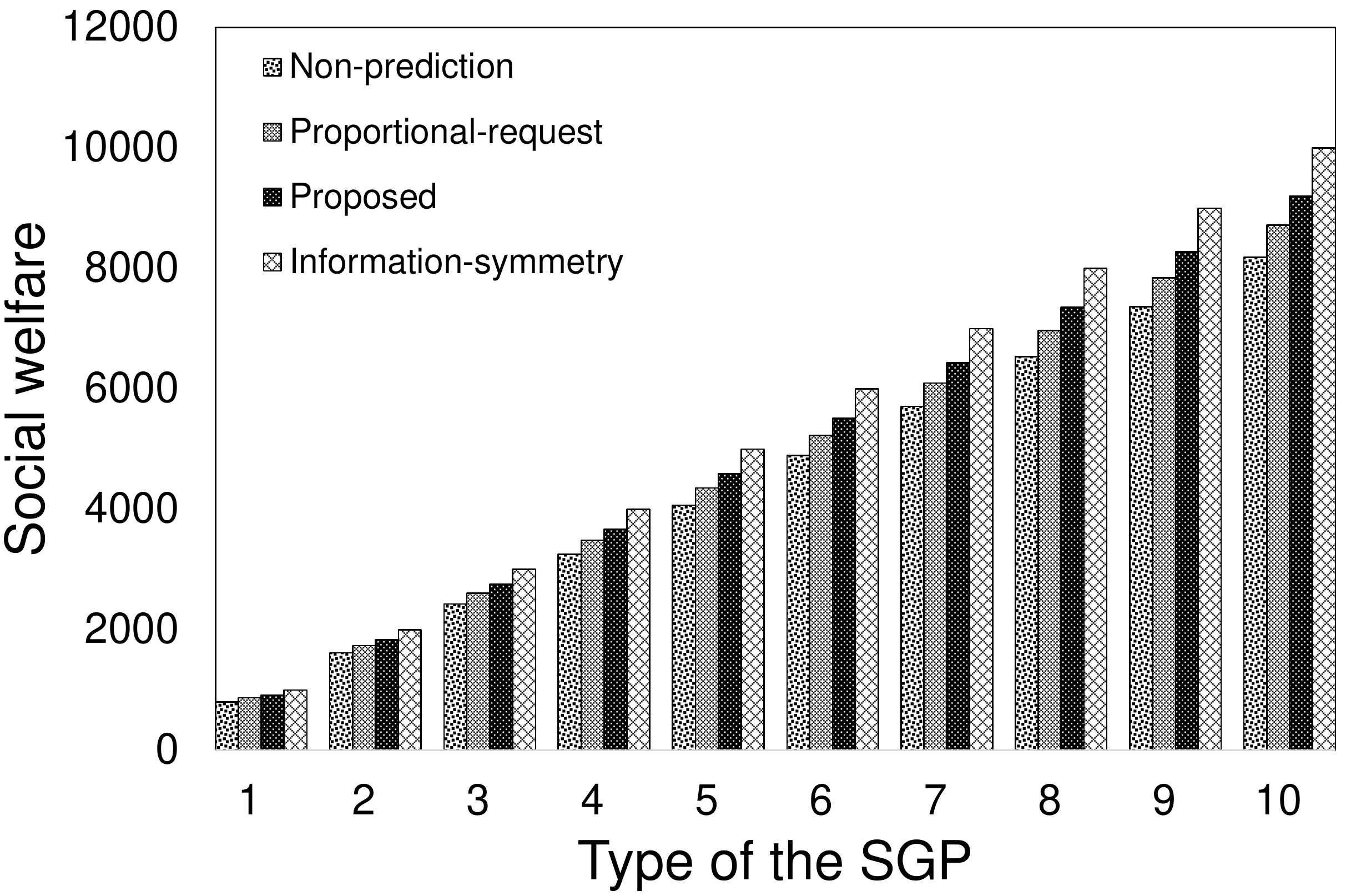}
	\caption{Social welfare between the SGP and all CSs for various methods.}
	\label{fig:social_welfare}
\end{figure}

We then demonstrate the social welfare of the EV network in Fig.~\ref{fig:social_welfare}. In particular, we can enhance the social welfare utilizing the SGP and 58 CSs when the type of the SGP gets higher. Compared with the proportional-request and non-prediction methods, the proposed method can improve the social welfare up to 6\% and 15\%, respectively. Moreover, the social welfare of our non-collaborative contract method is 9\% lower than that of information-symmetry contract method, which works as the upper bound solution. The reason is that, in practice, the CSs do not exactly know the actual energy capacity of the SGP. Hence, the SGP may use lower utilization of its energy capacity to serve the CSs which then incurs lower social welfare of the network.

%\begin{figure}[!]
%	\begin{center}
%		$\begin{array}{cc} 
%		\epsfxsize=1.62 in \epsffile{} &
%		\hspace*{-.2cm}
%		\epsfxsize=1.62 in \epsffile{} \\ [0.1cm]
%		\text{\footnotesize (a) CS-1} & \text{\footnotesize (b) CS-2} \\ [0.2cm]
%		\end{array}$
%		$\begin{array}{cc} 
%		\epsfxsize=1.62 in \epsffile{} &
%		\hspace*{-.2cm}
%		\epsfxsize=1.62 in \epsffile{} \\ [0.1cm]
%		\text{\footnotesize (c) CS-3} & \text{\footnotesize (d) CS-4} \\ [0.2cm]
%		\end{array}$
%		$\begin{array}{cc} 
%		\epsfxsize=1.62 in \epsffile{} & 
%		\hspace*{-.2cm}
%		\epsfxsize=1.62 in \epsffile{} \\ [0.1cm]
%		\text{\footnotesize (e) CS-5} & \text{\footnotesize (f) CS-6} \\ [0cm]
%		\end{array}$
%		\caption{Various utilities of CSs for different methods.}
%		\label{fig:util_CS2}
%	\end{center}
%\end{figure}

\begin{figure}[!]
	\begin{center}
		$\begin{array}{cc} 
		\epsfxsize=1.45 in \epsffile{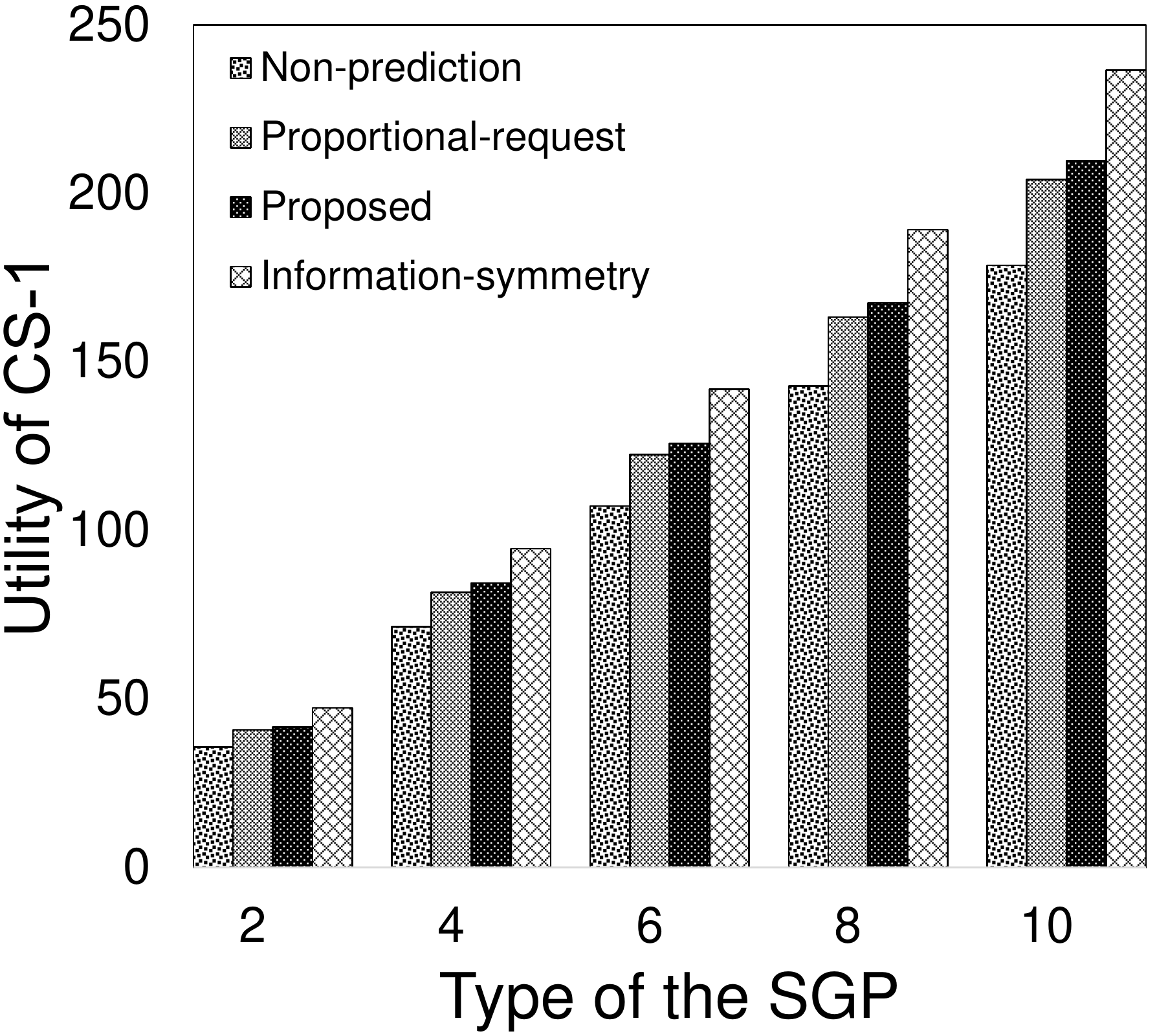} &
		\hspace*{-.2cm}
		\epsfxsize=1.45 in \epsffile{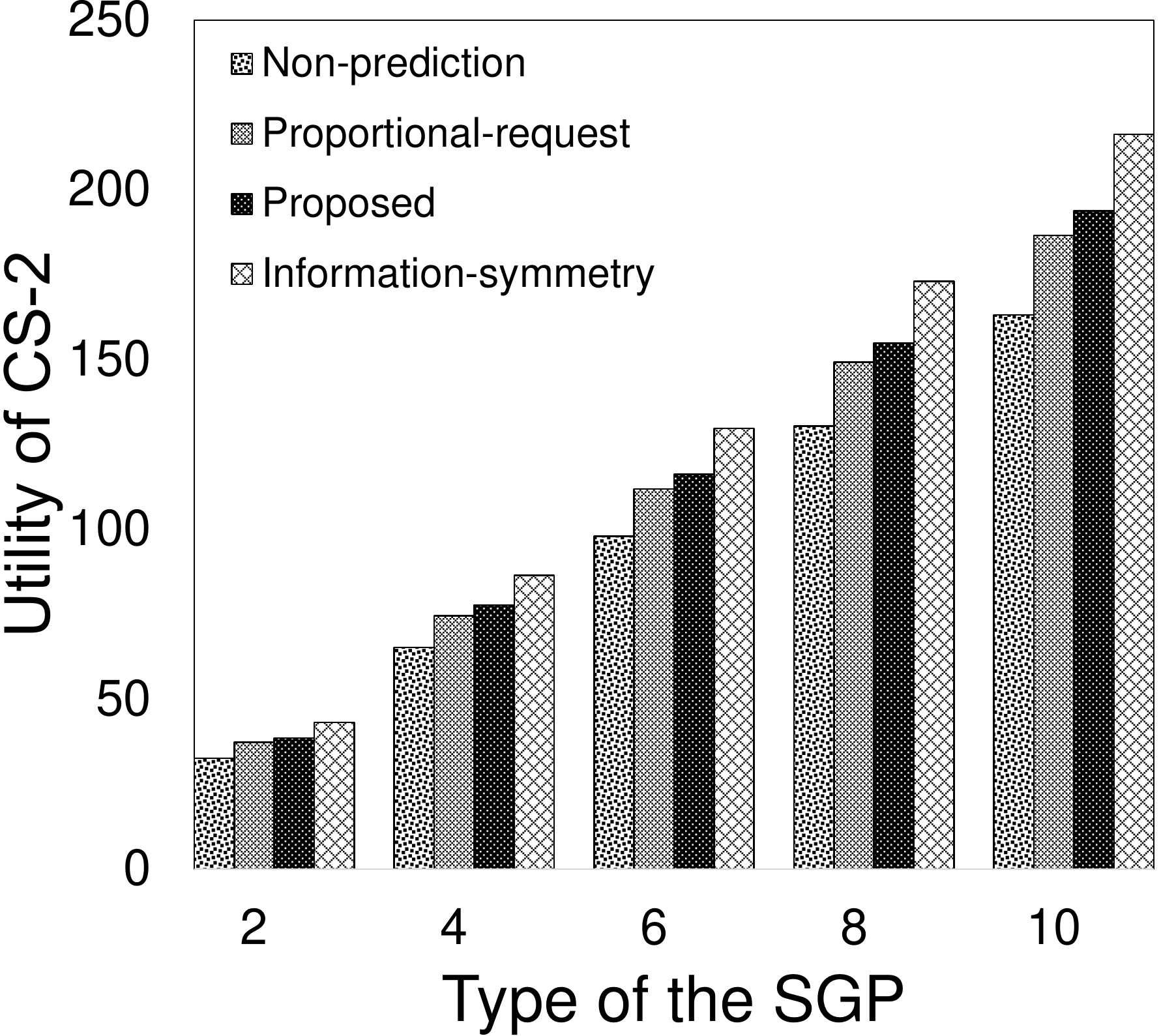} \\ [0.1cm]
		\text{\footnotesize (a) CS-1} & \text{\footnotesize (b) CS-2} \\ [0.2cm]
		\end{array}$
		$\begin{array}{cc} 
		\epsfxsize=1.45 in \epsffile{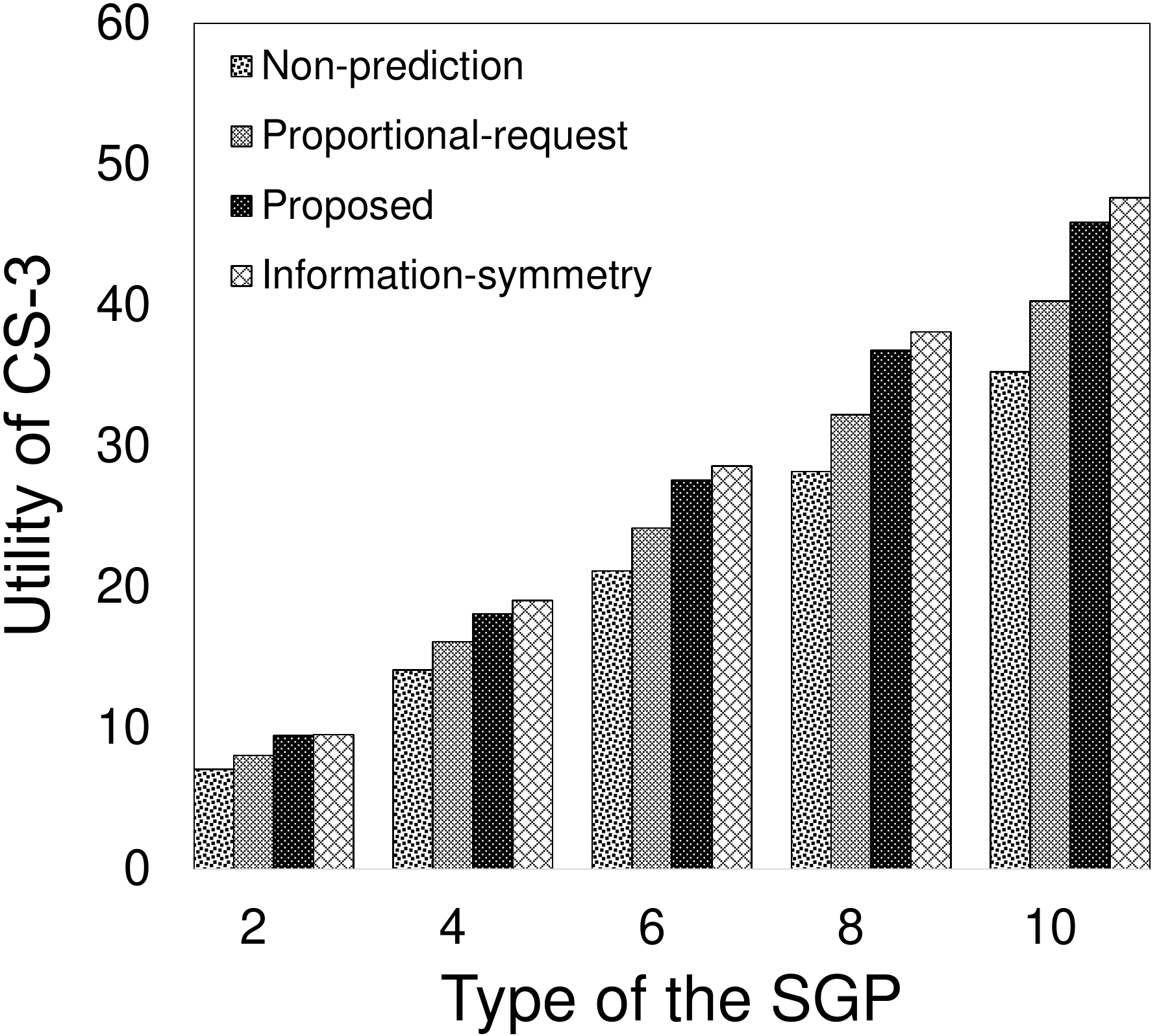} &
		\hspace*{-.2cm}
		\epsfxsize=1.45 in \epsffile{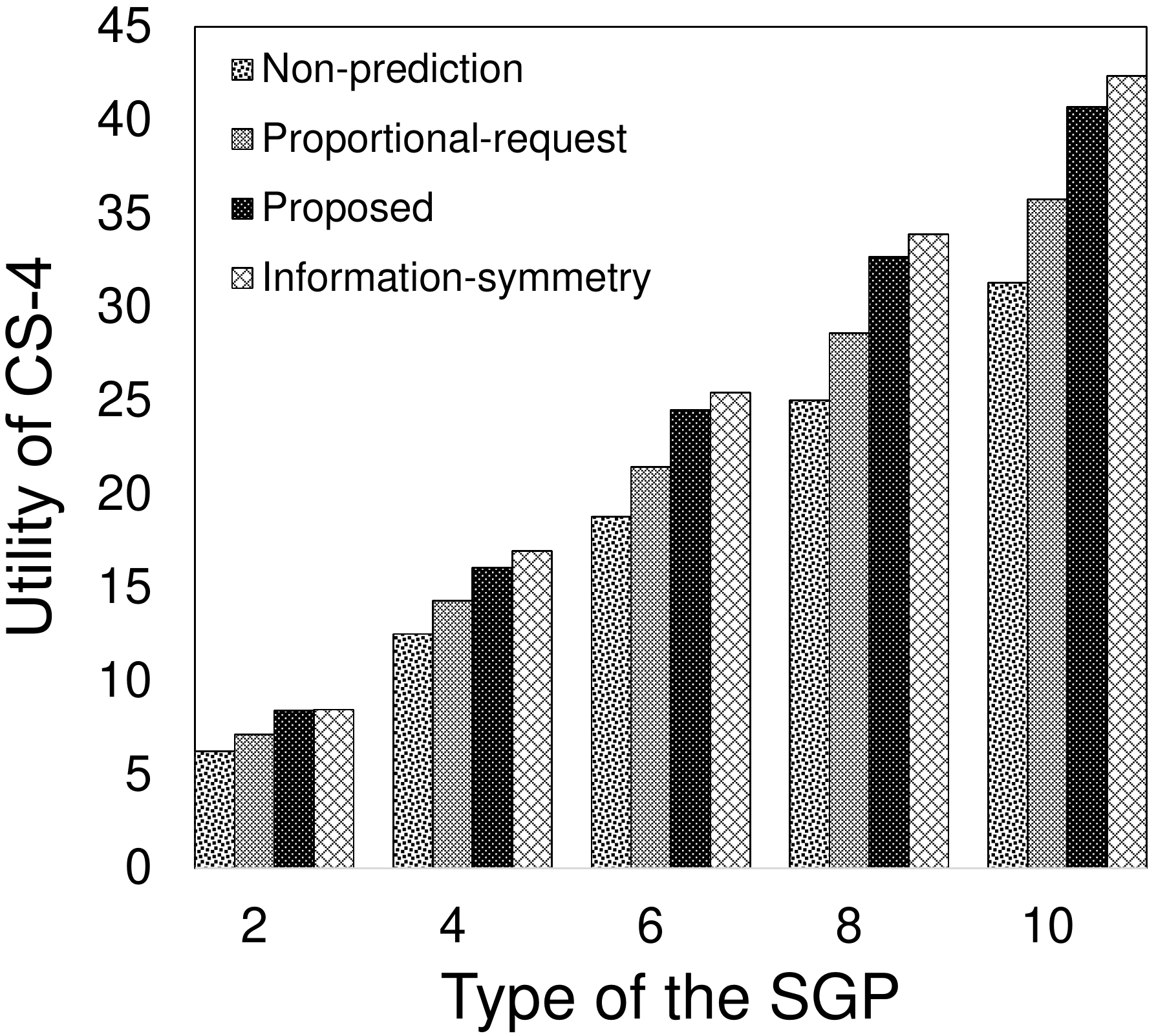} \\ [0.1cm]
		\text{\footnotesize (c) CS-3} & \text{\footnotesize (d) CS-4} \\ [0.2cm]
		\end{array}$
		$\begin{array}{cc} 
		\epsfxsize=1.45 in \epsffile{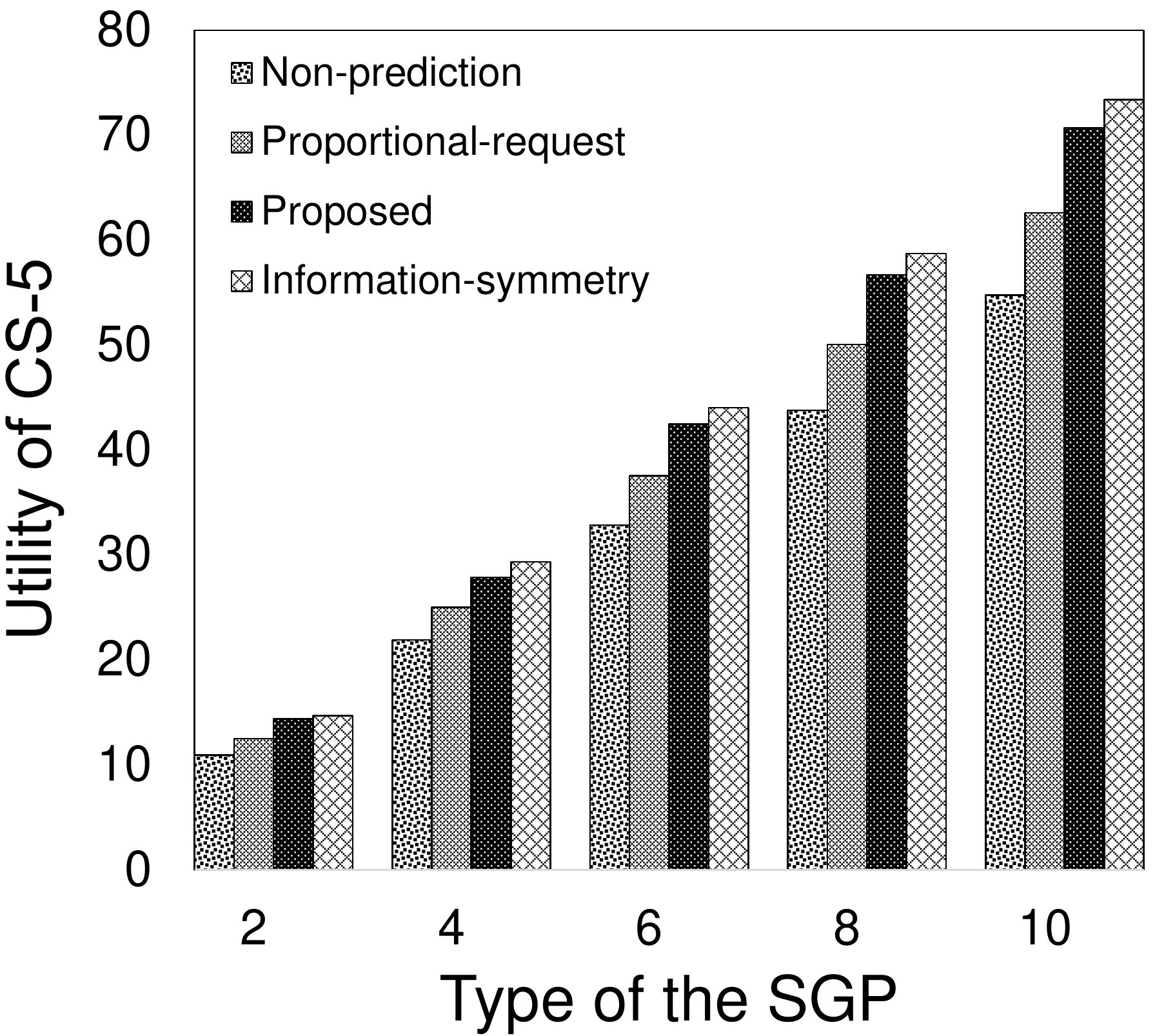} & 
		\hspace*{-.2cm}
		\epsfxsize=1.45 in \epsffile{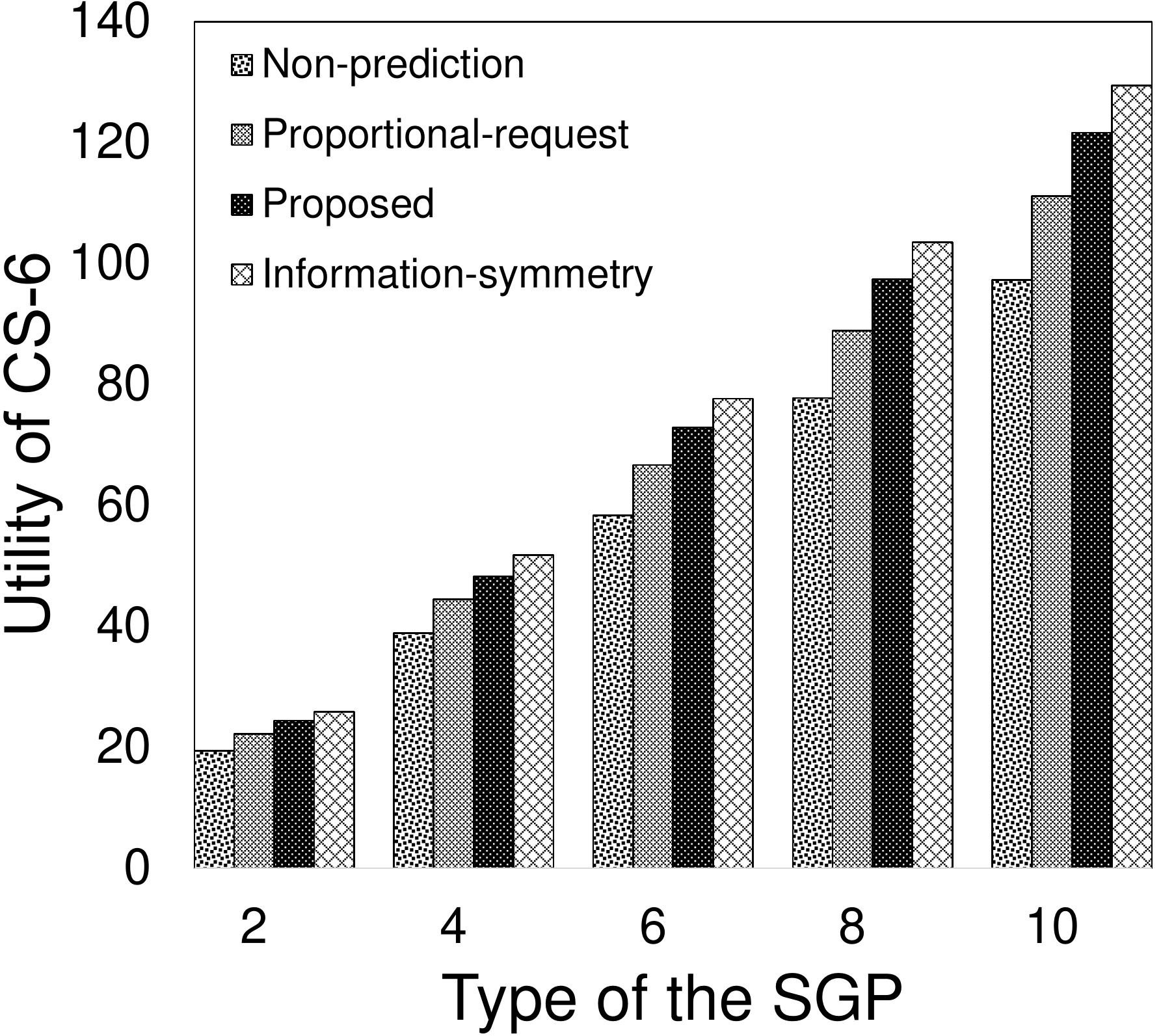} \\ [0.1cm]
		\text{\footnotesize (e) CS-5} & \text{\footnotesize (f) CS-6} \\ [0cm]
		\end{array}$
		\caption{Various utilities of CSs for different methods.}
		\label{fig:util_CS2}
	\end{center}
\end{figure}

%\begin{figure*}[!]
%	\begin{center}
%		$\begin{array}{ccc} 
%		\epsfxsize=2.3 in \epsffile{Figs/util_CS-1} &
%		\hspace*{-.34cm}
%		\epsfxsize=2.28 in \epsffile{Figs/util_CS-3} &
%	    \hspace*{-.34cm}
%		\epsfxsize=2.33 in \epsffile{Figs/util_CS-6} \\ [0.1cm]
%		\text{\footnotesize (a) CS-1} & \text{\footnotesize (b) CS-3} & \text{\footnotesize (c) CS-6} \\ [0.2cm]
%		\end{array}$
%		\caption{Various utilities of CSs for different methods.}
%		\label{fig:util_CS2}
%	\end{center}
%\end{figure*}

To further show the superiority of our proposed method, in Fig.~\ref{fig:util_CS2}, we analyze the utilities of the first 6 CSs as the representative CSs for 10 possible types of the SGP. We observe that the energy demands of CS-1, CS-2, CS-6 (referred to as \emph{high-demand CSs}) are higher than those of CS-3, CS-4, and CS-5 (referred to as \emph{low-demand CSs}). In particular, the proposed method can leverage the utilities of the high-demand and low-demand CSs up to 10\% and 18\%, respectively, compared with those of the proportional-request method. The reason is that, for the proposed method, all CSs can utilize the optimal proportion from the SGP based on the SGP's IR and IC constraints to obtain the optimal contracts. On the other hand, the proportional-request method does not consider the contract policy as well as the SGP's common constraints. Thus, the proportions for all the CSs and the CSs' contracts cannot be optimized to produce the best contracts. We can also observe that the utilities of the high-demand and low-demand CSs are 26\% and 35\% higher than those of the non-prediction method. This is because the non-prediction method experiences fluctuated daily energy prices (with high probability of more expensive price) when the CSs directly request energy from the SGP without predicting the energy demands~\cite{Roman:2011}.

\subsubsection{The expected utilities of CSs vs number of the SGP's types}

\begin{figure}[t]
	\centering
	\includegraphics[scale=0.45]{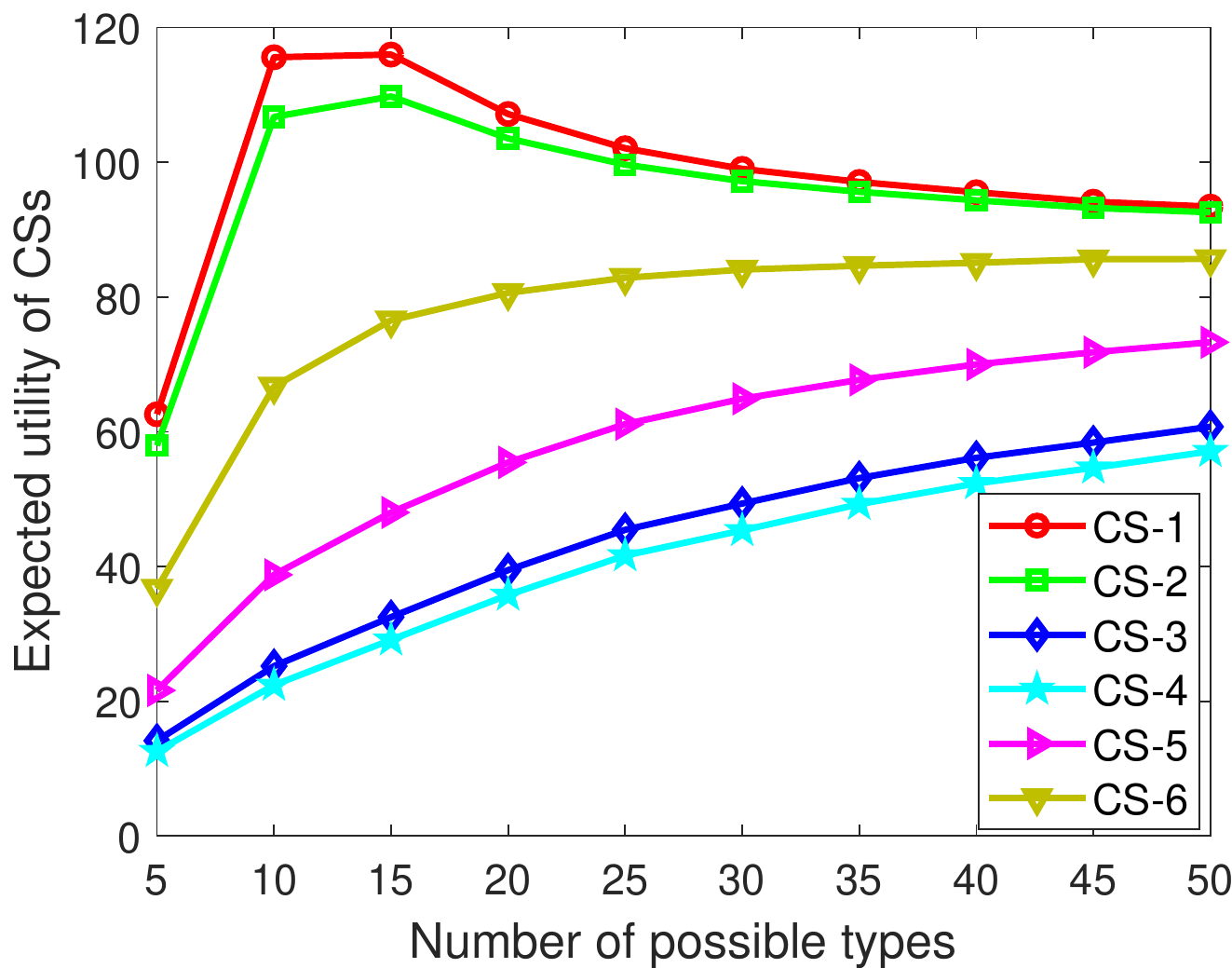}
	\caption{The expected utilities of CSs when the number of SGP's possible types increases.}
	\label{fig:type_increases}
\end{figure}

We then evaluate the proposed method performance as the number of the SGP's possible types increases between 5 and 50 types, with the SGP's true type remains fixed at type 5. This observation can be considered as the expected utility performance when energy capacity of the SGP gets smaller for the same true type of the SGP. %In this case, we also observe the trend for the first 6 CSs in the EV network. 
Specifically, the expected utilities of high-demand CSs fluctuate until the certain number of possible types, i.e., 20 possible types. Then, those expected utilities degrade gradually when we further increase the number of possible types. The reason is that the high-demand CSs cannot find the type of the SGP accurately due to the large range of the energy capacity when smaller number of possible types is taken into account. The same reason also applies for the low-demand CSs when they keep increasing the expected utilities moderately until the case of 15 possible types. However, when the number of possible types becomes bigger, we can observe that all the CSs will achieve the almost-converged expected utility performance at the maximum considered possible types of the SGP. In this way, the accuracy to find the true type of the SGP gets higher as the smaller range of the energy capacity is obtained.

\subsubsection{The total utilities of CSs vs the number of energy transfer price units}

\begin{figure}[t]
	\centering
	\includegraphics[scale=0.33]{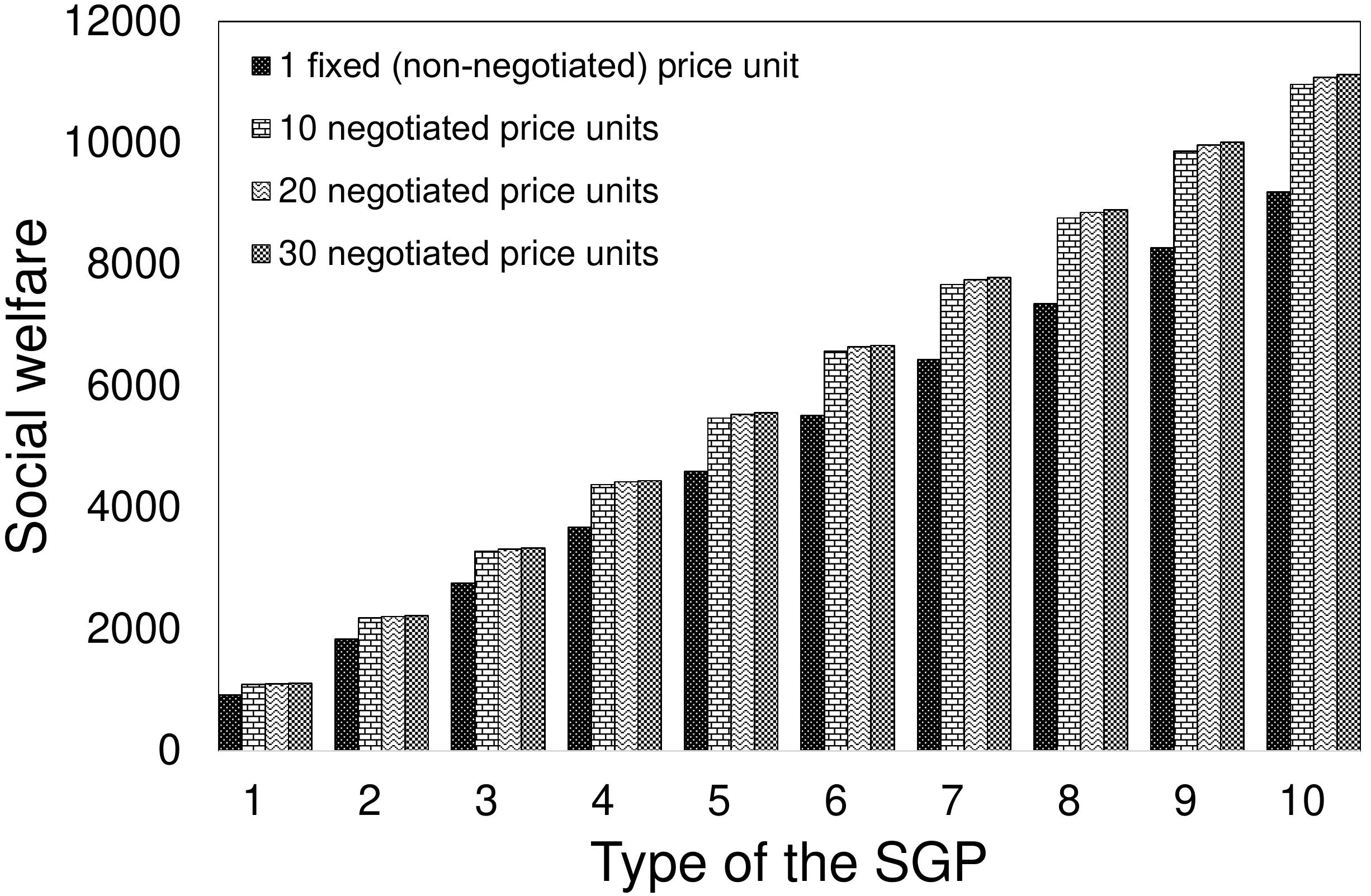}
	\caption{Social welfare of the proposed method for various number of negotiated energy transfer price units.}
	\label{fig:social_welfare2}
\end{figure}

\begin{figure}[!]
	\begin{center}
		$\begin{array}{cc} 
		\epsfxsize=1.65 in \epsffile{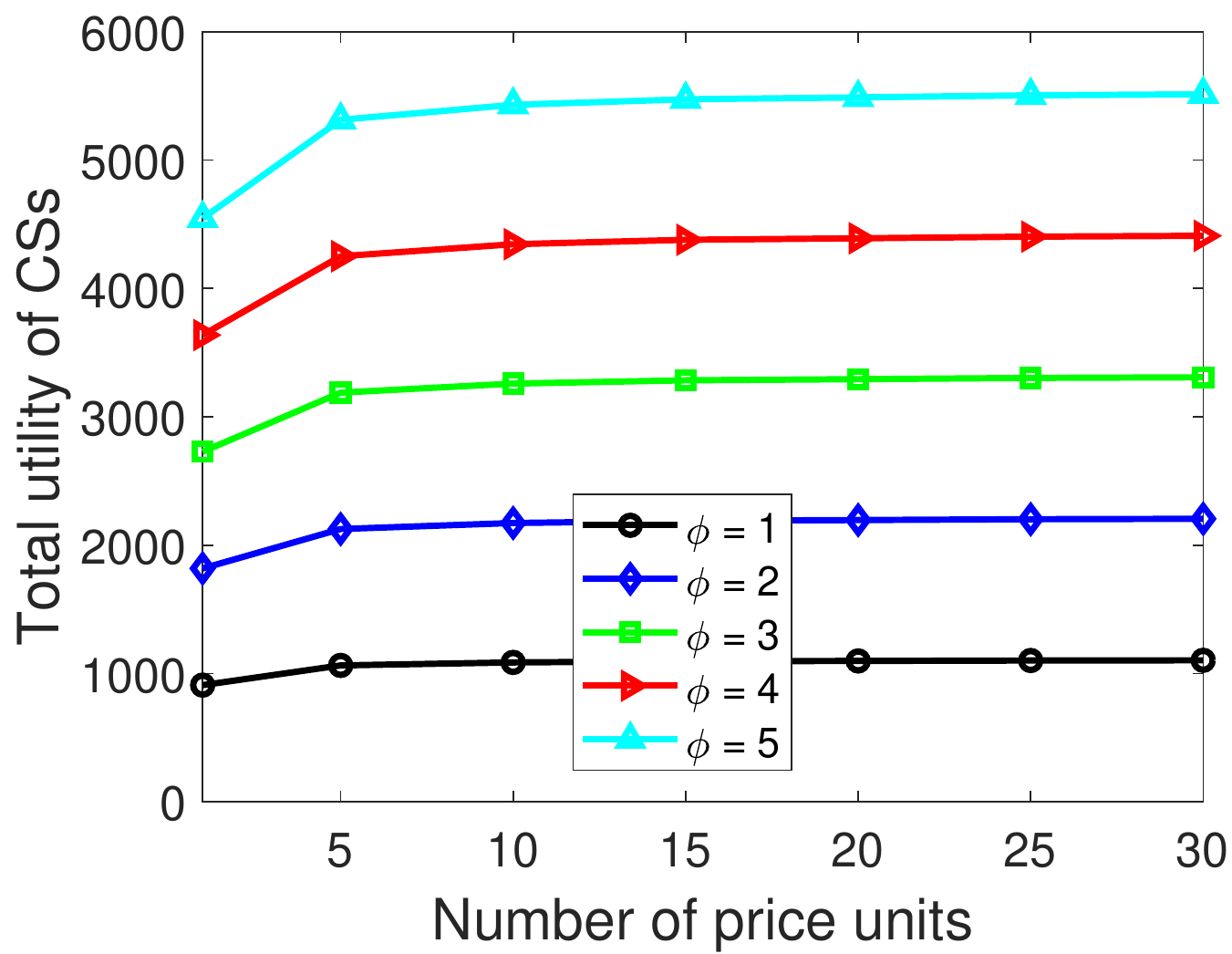} & 
		\hspace*{-0.2cm}
		\epsfxsize=1.65 in \epsffile{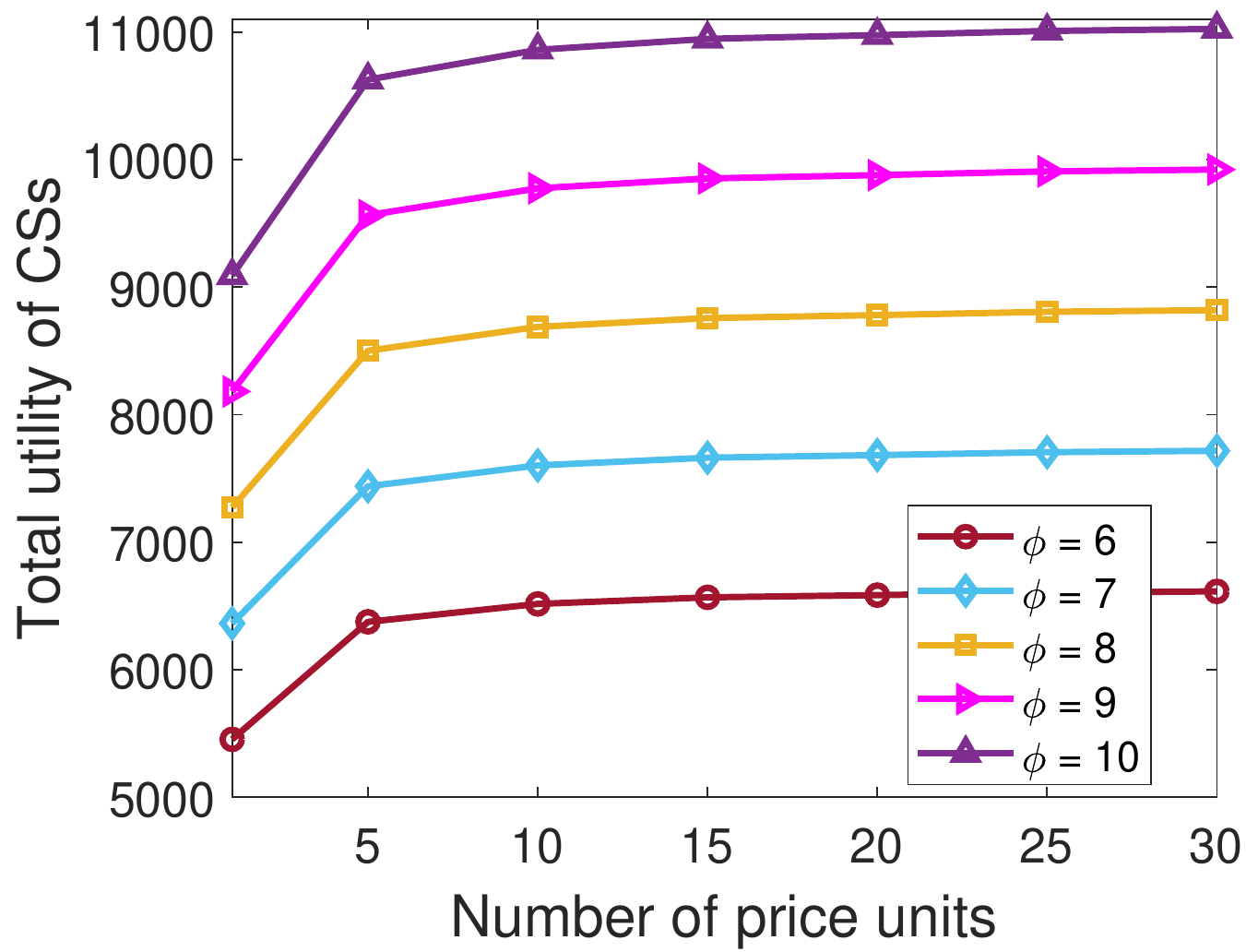} \\ [0.1cm]
		\text{\footnotesize (a) $\phi = 1$ to $\phi = 5$} & \text{\footnotesize (b) $\phi = 6$ to $\phi = 10$} \\ [0.2cm]
		\end{array}$
		\caption{The total utilities of all CSs when the number of negotiated energy transfer price units increases.}
		\label{fig:price_utilities}
	\end{center}
\end{figure}

To prove that the proposed economic model is always more flexible than other non-contract-based methods, i.e., proportional-request and non-prediction methods, we evaluate the social welfare and total CSs' utilities of the proposed method in Fig.~\ref{fig:social_welfare2} and~\ref{fig:price_utilities}, respectively. As such, the SGP and CSs can negotiate together to change the energy transfer price units. %based on the requests from the CSs.
%the CSs can interact with the SGP in some desirable way. 
Particularly, when some CSs request high energy transfer from the SGP, the SGP can slightly reduce the price unit. Meanwhile, the SGP can increase the price unit when only small amount of energy transfer is requested by the other CSs. In this way, the SGP can attract more low-demand CSs to compete in the contract process with higher energy transfer proportions, thereby improving the social welfare of the network. 

To this end, we show the social welfare improvement in Fig.~\ref{fig:social_welfare2} for different number of negotiated energy transfer price units between 190MU and 200MU. As observed in the figure, the variation of the price units (between 10 and 30 price units) for energy transfer negotiation can further boost the social welfare of the network up to 21\% compared with the scenario when only one fixed price unit is applied for all energy transfer requests. This trend aligns with that of the CSs' total utilities as shown in Fig.~\ref{fig:price_utilities}. When the number of energy transfer price units is only one, the total utilities for all possible types suffer from the worst total utility performance. However, when we further vary the number of negotiated energy transfer price units, we can improve the total utilities of the CSs for all possible types until reaching the convergence. These results clearly demonstrate that using the proposed economic model can always outperform other conventional economic models (where the CSs only can modify their utilities based on the given price unit from the SGP without any negotiation) because of the flexibility in the contract negotiation process.

\section{Conclusion}
\label{sec:Conc}
In this paper, we have proposed the novel energy-efficient framework leveraging the effective contract theoretic-based economic model to maximize the profits of CSs and improve the social welfare in the EV network. In particular, we have introduced the CS-based decentralized federated energy learning (DFEL) framework which allows the CSs learning the local dataset to predict energy demands accurately and reduce the communication overhead significantly. Moreover, we have developed the CS clustering-based DFEL approach for the CSs to further boost the energy demand prediction accuracy and reduce the learning time due to the smaller dataset dimension. Based on this prediction, we have designed the MPOA contract-based economic model. In particular, we have formulated the contract-based problem as the non-collaborative energy contract optimization which satisfies the common constraints, i.e., the individual rationality and incentive compatibility, from the SGP. This aims to maximize the expected utility of the CSs and improve the social welfare in the energy transfer process. Then, we have developed the iterative energy
contract algorithm to achieve the equilibrium contract solution for all the CSs. Through simulation results, we have shown that our proposed method outperforms other centralized learning algorithms in terms of the prediction accuracy, communication overhead, and learning speed. Moreover, through the proposed method, we can significantly improve the utilities of the CSs and the social welfare of the network compared with other non-contract-based economic models.

% use section* for acknowledgment
%\ifCLASSOPTIONcompsoc
%  % The Computer Society usually uses the plural form
%  \section*{Acknowledgments}
%\else
%  % regular IEEE prefers the singular form
%  \section*{Acknowledgment}
%\fi

%The authors would like to thank...

% Can use something like this to put references on a page
% by themselves when using endfloat and the captionsoff option.
\ifCLASSOPTIONcaptionsoff
\newpage
\fi

\vspace{0.7cm}

\clearpage
\appendices

%==========================================================================================
%==========================================================================================
\section{Proof of Lemma 1}
\label{appx:lemma0a}

From the constraints (\ref{eqn:for9a})-(\ref{eqn:for9c}), we can compute the total number of constraints according to the considered number of possible types $\phi_{tot}$. Specifically, there are $\phi_{tot}$ number of energy capacity constraints (\ref{eqn:for9a}) and $\phi_{tot}$ number of the IR constraints (\ref{eqn:for9b}). In addition, there exist $\phi_{tot}(\phi_{tot}-1)$ number of IC constraints (\ref{eqn:for9c}). Hence, the total number of constraints is $\Big(\phi_{tot} + \phi_{tot} + \phi_{tot}(\phi_{tot}-1)\Big) = \Big(\phi_{tot} + \phi_{tot}^2\Big)$. Since $\max\Big(\phi_{tot},\phi_{tot}^2\Big) = \phi_{tot}^2$, then the computational complexity of problem $(\mathbf{P}_3)$ is $O(\phi_{tot}^2)$.

\section{Proof of Lemma 2}
\label{appx:lemma1}

Using the IC constraint from the Definition~\ref{DefV2}, we first prove that $\boldsymbol{\rho}(\phi) > \boldsymbol{\rho}(\phi^*)$ if and only if $\phi > \phi^*$. Based on Eq.~(\ref{eqn:for7b}), we now have
\begin{equation}
\label{eqn:for7c}
\begin{aligned}
\phi G(\boldsymbol{\hat \pi},\boldsymbol{\rho}(\phi)) - C(\boldsymbol{\hat \pi},\boldsymbol{\xi}(\phi)) &\geq \phi G(\boldsymbol{\hat \pi},\boldsymbol{\rho}(\phi^*)) - C(\boldsymbol{\hat \pi},\boldsymbol{\xi}(\phi^*)),
\end{aligned}
\end{equation}
and if the SGP has type $\phi^*$, then we obtain
\begin{equation}
\label{eqn:for7d}
\begin{aligned}
\phi^* G(\boldsymbol{\hat \pi},\boldsymbol{\rho}(\phi^*)) - C(\boldsymbol{\hat \pi},\boldsymbol{\xi}(\phi^*)) \geq \\ \phi^* G(\boldsymbol{\hat \pi},\boldsymbol{\rho}(\phi)) - C(\boldsymbol{\hat \pi},\boldsymbol{\xi}(\phi)),
\end{aligned}
\end{equation}
where $\phi \neq \phi^*$ and $\phi, \phi^* \in \Phi$. By combining Eq.~(\ref{eqn:for7c}) and Eq.~(\ref{eqn:for7d}), we have
\begin{equation}
\label{eqn:for7e}
\begin{aligned}
\phi G(\boldsymbol{\hat \pi},\boldsymbol{\rho}(\phi)) + \phi^* G(\boldsymbol{\hat \pi},\boldsymbol{\rho}(\phi^*)) \geq \\ \phi G(\boldsymbol{\hat \pi},\boldsymbol{\rho}(\phi^*)) + \phi^* G(\boldsymbol{\hat \pi},\boldsymbol{\rho}(\phi)), \\
\phi G(\boldsymbol{\hat \pi},\boldsymbol{\rho}(\phi)) - \phi^* G(\boldsymbol{\hat \pi},\boldsymbol{\rho}(\phi)) \geq \\ \phi G(\boldsymbol{\hat \pi},\boldsymbol{\rho}(\phi^*)) - \phi^* G(\boldsymbol{\hat \pi},\boldsymbol{\rho}(\phi^*)), \\
G(\boldsymbol{\hat \pi},\boldsymbol{\rho}(\phi))\big(\phi - \phi^*\big) \geq G(\boldsymbol{\hat \pi},\boldsymbol{\rho}(\phi^*))\big(\phi - \phi^*\big).
\end{aligned}
\end{equation}
Then, we divide both sides using $(\phi - \phi^*)$, and thus $G(\boldsymbol{\hat \pi},\boldsymbol{\rho}(\phi)) > G(\boldsymbol{\hat \pi},\boldsymbol{\rho}(\phi^*))$ considering that $\phi - \phi^* > 0$ since $\phi > \phi^*$. Based on Eq.~(\ref{eqn:for4}), the gain function is strictly increasing for all $\boldsymbol{\rho}(\phi)$. Consequently, as $G(\boldsymbol{\hat \pi},\boldsymbol{\rho}(\phi)) > G(\boldsymbol{\hat \pi},\boldsymbol{\rho}(\phi^*))$ satisfies, we have $\boldsymbol{\rho}(\phi) > \boldsymbol{\rho}(\phi^*)$.

Next, we prove that $\phi > \phi^*$ if and only if $\boldsymbol{\rho}(\phi) > \boldsymbol{\rho}(\phi^*)$. Using Eqs.~(\ref{eqn:for7c})-(\ref{eqn:for7e}), we get
\begin{equation}
\label{eqn:for7f}
\begin{aligned}
\phi \big(G(\boldsymbol{\hat \pi},\boldsymbol{\rho}(\phi)) - G(\boldsymbol{\hat \pi},\boldsymbol{\rho}(\phi^*))\big) \geq \\ \phi^* \big(G(\boldsymbol{\hat \pi},\boldsymbol{\rho}(\phi)) - G(\boldsymbol{\hat \pi},\boldsymbol{\rho}(\phi^*)).
\end{aligned}
\end{equation}
By dividing both sides with $\big(G(\boldsymbol{\hat \pi},\boldsymbol{\rho}(\phi)) - G(\boldsymbol{\hat \pi},\boldsymbol{\rho}(\phi^*))\big)$, we obtain $\phi > \phi^*$ accounting for $\big(G(\boldsymbol{\hat \pi},\boldsymbol{\rho}(\phi)) - G(\boldsymbol{\hat \pi},\boldsymbol{\rho}(\phi^*))\big) > 0$ since $G(\boldsymbol{\hat \pi},\boldsymbol{\rho}(\phi)) > G(\boldsymbol{\hat \pi},\boldsymbol{\rho}(\phi^*))$. Based on Eq.~(\ref{eqn:for4}), the gain function is strictly increasing for all $\boldsymbol{\rho}(\phi)$. As a result, we prove that if $\boldsymbol{\rho}(\phi) > \boldsymbol{\rho}(\phi^*)$, then $\phi > \phi^*$. The same process can be used to prove that if $\phi = \phi^*$, then $\boldsymbol{\rho}(\phi) = \boldsymbol{\rho}(\phi^*)$. %Similarly, we can obtain that if $\phi > \phi^*$, then $\rho(\phi) > \rho(\phi^*)$, and if $\phi = \phi^*$, then $\rho(\phi) = \rho(\phi^*)$ for any feasible contract $(\rho,\xi)$. 

\section{Proof of Lemma 3}
\label{appx:lemma2}

We can first prove the conditions in Eq.~(\ref{eqn:for140}) based on the proof from Lemma~\ref{lemma1} along with the following description. From Eq.~(\ref{eqn:for4}), we can compute the first-order form of the gain function by
\begin{equation}
\label{eqn:for1400}
\begin{aligned}
\frac{dG(\boldsymbol{\hat \pi},\boldsymbol{\rho}(\phi))}{d\phi} =\frac{\overset{I}{\underset{i=1}{\sum}} {\hat \pi}_i\frac{d\rho_i(\phi)}{d\phi} }{1 + \overset{I}{\underset{i=1}{\sum}} {\hat \pi}_i\rho_i(\phi)} \geq 0.
\end{aligned}
\end{equation}
Since $0 \leq {\hat \pi}_i \leq 1, \forall i \in \mathcal{I}$, then $\frac{\overset{I}{\underset{i=1}{\sum}} {\hat \pi}_i}{1 + \overset{I}{\underset{i=1}{\sum}} {\hat \pi}_i\rho_i(\phi)} \geq 0$. As a result, $\frac{d\rho_i(\phi)}{d\phi} \geq 0$ satisfies the first-order condition in Eq.~(\ref{eqn:for1400}).

Next, for the conditions in Eq.~(\ref{eqn:for14}), we can use the contradiction to make the IC constraint cannot be satisfied. In this case, there is at least one $\phi'$ fails to comply with the IC constraint such that
\begin{equation}
\label{eqn:for14a}
\begin{aligned}
0 \leq \phi G(\boldsymbol{\hat \pi},\boldsymbol{\rho}(\phi)) - C(\boldsymbol{\hat \pi},\boldsymbol{\xi}(\phi)) < \\ \phi G(\boldsymbol{\hat \pi},\boldsymbol{\rho}(\phi')) - C(\boldsymbol{\hat \pi},\boldsymbol{\xi}(\phi')),
\end{aligned}
\end{equation}
where $\phi < \phi'$. Alternatively, using the integration from $\phi$ to $\phi'$, we obtain
\begin{equation}
\label{eqn:for14b2}
\begin{aligned}
\int_{\phi}^{\phi'}\bigg(\phi \frac{dG(\boldsymbol{\hat \pi},\boldsymbol{\rho}(l))}{dl} - \frac{dC(\boldsymbol{\hat \pi},\boldsymbol{\xi}(l))}{dl}\bigg) dl > 0.
\end{aligned}
\end{equation}
From Eq.~(\ref{eqn:for14}), we have $\int_{\phi}^{\phi'}\bigg(l  \frac{dG(\boldsymbol{\hat \pi},\boldsymbol{\rho}(l))}{dl} - \frac{dC(\boldsymbol{\hat \pi},\boldsymbol{\xi}(l))}{dl}\bigg) dl = 0$. If $\phi < l < \phi'$, then 
\begin{equation}
\label{eqn:for14d}
\begin{aligned}
\phi \frac{dG(\boldsymbol{\hat \pi},\boldsymbol{\rho}(l))}{dl} \leq l  \frac{dG(\boldsymbol{\hat \pi},\boldsymbol{\rho}(l))}{dl}.
\end{aligned}
\end{equation}
Hence, 
\begin{equation}
\label{eqn:for14e}
\begin{aligned}
\int_{\phi}^{\phi'}\bigg(\phi \frac{dG(\boldsymbol{\hat \pi},\boldsymbol{\rho}(l))}{dl} - \frac{dC(\boldsymbol{\hat \pi},\boldsymbol{\xi}(l))}{dl}\bigg) dl < 0,
\end{aligned}
\end{equation}
and we can see the contradiction. Likewise, if $\phi > \phi'$, we can also obtain the contradiction. In this case,
\begin{equation}
\label{eqn:for14f}
\begin{aligned}
0 \leq \phi' G(\boldsymbol{\hat \pi},\boldsymbol{\rho}(\phi')) - C(\boldsymbol{\hat \pi},\boldsymbol{\xi}(\phi')) < \\ \phi' G(\boldsymbol{\hat \pi},\boldsymbol{\rho}(\phi)) - C(\boldsymbol{\hat \pi},\boldsymbol{\xi}(\phi)),
\end{aligned}
\end{equation}
and
\begin{equation}
\label{eqn:for14b}
\begin{aligned}
\int_{\phi'}^{\phi}\bigg(\phi' \frac{dG(\boldsymbol{\hat \pi},\boldsymbol{\rho}(l))}{dl} - \frac{dC(\boldsymbol{\hat \pi},\boldsymbol{\xi}(l))}{dl}\bigg) dl > 0.
\end{aligned}
\end{equation}
Similarly, we also have $\int_{\phi'}^{\phi}\bigg(l \frac{dG(\boldsymbol{\hat \pi},\boldsymbol{\rho}(l))}{dl} - \frac{dC(\boldsymbol{\hat \pi},\boldsymbol{\xi}(l))}{dl}\bigg) dl = 0$. If $\phi' < l < \phi$, then 
\begin{equation}
\label{eqn:for14g}
\begin{aligned}
\phi' \frac{dG(\boldsymbol{\hat \pi},\boldsymbol{\rho}(l))}{dl} \leq l  \frac{dG(\boldsymbol{\hat \pi},\boldsymbol{\rho}(l))}{dl}.
\end{aligned}
\end{equation}
and
\begin{equation}
\label{eqn:for14h}
\begin{aligned}
\int_{\phi'}^{\phi}\bigg(\phi' \frac{dG(\boldsymbol{\hat \pi},\boldsymbol{\rho}(l))}{dl} - \frac{dC(\boldsymbol{\hat \pi},\boldsymbol{\xi}(l))}{dl}\bigg) dl < 0.
\end{aligned}
\end{equation}
Hence, we also obtain the contradiction.
%Similar proof can be obtained for the conditions in Eq.~(\ref{eqn:for140a}) and Eq.~(\ref{eqn:for140b}).

\section{Proof of Lemma 4}
\label{appx:lemma0b}

From the constraints (\ref{eqn:for9a}), (\ref{eqn:for9revb}), (\ref{eqn:for9revc2}), and (\ref{eqn:for9revd2}), there exist $\phi_{tot}$ number of energy capacity constraints (\ref{eqn:for9a}), one constraint (\ref{eqn:for9revb}), $\phi_{tot}$ number of the constraints (\ref{eqn:for9revc}), and $\phi_{tot}$ number of the constraints (\ref{eqn:for9revd}). Thus, the total number of constraints is $\Big(\phi_{tot} + 1 + \phi_{tot} + \phi_{tot})\Big) = \Big(1 + 3\phi_{tot}\Big)$. As $\max\Big(\phi_{tot}, 1,\phi_{tot},\phi_{tot}\Big) = \phi_{tot}$, then the computational complexity of problem $(\mathbf{P}_5)$ is $O(\phi_{tot})$.

%==========================================================================================
%==========================================================================================
\section{Proof of Theorem 1}
\label{appx:theorem1}

Suppose $\mathbb{C}$ contains $\Big(\boldsymbol{\rho}_{-i}^{(\theta+1)}(\phi), \boldsymbol{\xi}^{(\theta+1)}_{-i}(\phi)\Big)$ and $\Big(\boldsymbol{\rho}^{(\theta)}_{-i}(\phi), \boldsymbol{\xi}^{(\theta)}_{-i}(\phi)\Big)$ which are ordered by
\begin{equation}
\label{eqn:for16a}
\begin{aligned}
\Big(\boldsymbol{\rho}^{(\theta+1)}_{-i}(\phi), \boldsymbol{\xi}^{(\theta+1)}_{-i}(\phi)\Big) > \Big(\boldsymbol{\rho}^{(\theta)}_{-i}(\phi), \boldsymbol{\xi}^{(\theta)}_{-i}(\phi)\Big).
\end{aligned}
\end{equation}
We first show that the choice ${\hat \Gamma}_i$ in $\Gamma_i$ at CS-$i$, where ${\hat \Gamma}^{(\theta+1)}_i\Big(\boldsymbol{\rho}^{(\theta)}_{-i}(\phi), \boldsymbol{\xi}^{(\theta)}_{-i}(\phi)\Big) = \max \Gamma^{(\theta+1)}_i\Big(\boldsymbol{\rho}^{(\theta)}_{-i}(\phi), \boldsymbol{\xi}^{(\theta)}_{-i}(\phi)\Big)$, is increasing, i.e., 
\begin{equation}
\label{eqn:for16b}
\begin{aligned}
{\hat \Gamma}^{(\theta+2)}_i\Big(\boldsymbol{\rho}^{(\theta+1)}_{-i}(\phi), \boldsymbol{\xi}^{(\theta+1)}_{-i}(\phi)\Big) \geq {\hat \Gamma}^{(\theta+1)}_i\Big(\boldsymbol{\rho}^{(\theta)}_{-i}(\phi), \boldsymbol{\xi}^{(\theta)}_{-i}(\phi)\Big).
\end{aligned}
\end{equation}
Let $\Big(\boldsymbol{\rho}^{(\theta+1)}_{-i}(\phi), \boldsymbol{\xi}^{(\theta+1)}_{-i}(\phi)\Big) > \Big(\boldsymbol{\rho}^{(\theta)}_{-i}(\phi), \boldsymbol{\xi}^{(\theta)}_{-i}(\phi)\Big)$ and ${\hat \Gamma}^{(\theta+2)}_i\Big(\boldsymbol{\rho}^{(\theta+1)}_{-i}(\phi), \boldsymbol{\xi}^{(\theta+1)}_{-i}(\phi)\Big) < {\hat \Gamma}^{(\theta+1)}_i\Big(\boldsymbol{\rho}^{(\theta)}_{-i}(\phi), \boldsymbol{\xi}^{(\theta)}_{-i}(\phi)\Big)$ to show a contradiction. Then, we have
\begin{equation}
\label{eqn:for16c}
\begin{aligned}
{U}_i\bigg({\hat \Gamma}^{(\theta+1)}_i\Big(\boldsymbol{\rho}^{(\theta)}_{-i}(\phi), \boldsymbol{\xi}^{(\theta)}_{-i}(\phi)\Big),\boldsymbol{\rho}^{(\theta+2)}_{-i}(\phi), \boldsymbol{\xi}^{(\theta+2)}_{-i}(\phi)\bigg) + \\ {U}_i\bigg({\hat \Gamma}^{(\theta+2)}_i\Big(\boldsymbol{\rho}^{(\theta+1)}_{-i}(\phi), \boldsymbol{\xi}^{(\theta+1)}_{-i}(\phi)\Big),\boldsymbol{\rho}^{(\theta+1)}_{-i}(\phi), \boldsymbol{\xi}^{(\theta+1)}_{-i}(\phi)\bigg) \geq \\ {U}_i\bigg({\hat \Gamma}^{(\theta+2)}_i\Big(\boldsymbol{\rho}^{(\theta+1)}_{-i}(\phi), \boldsymbol{\xi}^{(\theta+1)}_{-i}(\phi)\Big),\boldsymbol{\rho}^{(\theta+2)}_{-i}(\phi), \boldsymbol{\xi}^{(\theta+2)}_{-i}(\phi)\bigg) + \\ {U}_i\bigg({\hat \Gamma}^{(\theta+1)}_i\Big(\boldsymbol{\rho}^{(\theta)}_{-i}(\phi), \boldsymbol{\xi}^{(\theta)}_{-i}(\phi)\Big),\boldsymbol{\rho}^{(\theta+1)}_{-i}(\phi), \boldsymbol{\xi}^{(\theta+1)}_{-i}(\phi)\bigg).
\end{aligned}
\end{equation}
From the definition of $\Gamma^{(\theta+1)}_i$, ${\hat \Gamma}^{(\theta+1)}_i\Big(\boldsymbol{\rho}^{(\theta)}_{-i}(\phi), \boldsymbol{\xi}^{(\theta)}_{-i}(\phi)\Big) \in \Gamma^{(\theta+1)}_i\Big(\boldsymbol{\rho}^{(\theta)}_{-i}(\phi), \boldsymbol{\xi}^{(\theta)}_{-i}(\phi)\Big)$ specifies that
\begin{equation}
\label{eqn:for16d}
\begin{aligned}
{U}_i\bigg({\hat \Gamma}^{(\theta+1)}_i\Big(\boldsymbol{\rho}^{(\theta)}_{-i}(\phi), \boldsymbol{\xi}^{(\theta)}_{-i}(\phi)\Big),\boldsymbol{\rho}^{(\theta+1)}_{-i}(\phi), \boldsymbol{\xi}^{(\theta+1)}_{-i}(\phi)\bigg) \geq \\ {U}_i\bigg({\hat \Gamma}^{(\theta+2)}_i\Big(\boldsymbol{\rho}^{(\theta+1)}_{-i}(\phi), \boldsymbol{\xi}^{(\theta+1)}_{-i}(\phi)\Big),\boldsymbol{\rho}^{(\theta+1)}_{-i}(\phi), \boldsymbol{\xi}^{(\theta+1)}_{-i}(\phi)\bigg).
\end{aligned}
\end{equation}
Then, based on Eq.~(\ref{eqn:for16c}) and Eq.~(\ref{eqn:for16d}), we can obtain
\begin{equation}
\label{eqn:for16e}
\begin{aligned}
{U}_i\bigg({\hat \Gamma}^{(\theta+1)}_i\Big(\boldsymbol{\rho}^{(\theta)}_{-i}(\phi), \boldsymbol{\xi}^{(\theta)}_{-i}(\phi)\Big),\boldsymbol{\rho}^{(\theta+2)}_{-i}(\phi), \boldsymbol{\xi}^{(\theta+2)}_{-i}(\phi)\bigg) \geq \\ {U}_i\bigg({\hat \Gamma}^{(\theta+2)}_i\Big(\boldsymbol{\rho}^{(\theta+1)}_{-i}(\phi), \boldsymbol{\xi}^{(\theta+1)}_{-i}(\phi)\Big),\boldsymbol{\rho}^{(\theta+2)}_{-i}(\phi), \boldsymbol{\xi}^{(\theta+2)}_{-i}(\phi)\bigg),
\end{aligned}
\end{equation}
and thus ${\hat \Gamma}^{(\theta+1)}_i\Big(\boldsymbol{\rho}^{(\theta)}_{-i}(\phi), \boldsymbol{\xi}^{(\theta)}_{-i}(\phi)\Big) \in \Gamma^{(\theta+2)}_i\Big(\boldsymbol{\rho}^{(\theta+1)}_{-i}(\phi), \boldsymbol{\xi}^{(\theta+1)}_{-i}(\phi)\Big)$. Alternatively, it indicates that ${\hat \Gamma}^{(\theta+1)}_i\Big(\boldsymbol{\rho}^{(\theta)}_{-i}(\phi), \boldsymbol{\xi}^{(\theta)}_{-i}(\phi)\Big) > {\hat \Gamma}^{(\theta+2)}_i\Big(\boldsymbol{\rho}^{(\theta+1)}_{-i}(\phi), \boldsymbol{\xi}^{(\theta+1)}_{-i}(\phi)\Big)$. From the definition of ${\hat \Gamma}^{(\theta+1)}_i$, it follows that ${\hat \Gamma}^{(\theta+2)}_i\Big(\boldsymbol{\rho}^{(\theta+1)}_{-i}(\phi), \boldsymbol{\xi}^{(\theta+1)}_{-i}(\phi)\Big) \geq {\hat \Gamma}^{(\theta+1)}_i\Big(\boldsymbol{\rho}^{(\theta)}_{-i}(\phi), \boldsymbol{\xi}^{(\theta)}_{-i}(\phi)\Big)$ which produces a contradiction. Hence, ${\hat \Gamma}_i$ is an increasing function. 

Based on the aforementioned increasing function, the contract of CS-$i$ at iteration $\theta + 1$ can be updated to $\Big(\rho^{(\theta+1)}_i(\phi),\xi^{(\theta+1)}_i(\phi)\Big) \in \Gamma^{(\theta+1)}_i$ if the condition in Eq.~(\ref{eqn:for11c}) is satisfied. In this case, the use of $\kappa$ can be seen as the optimality tolerance to terminate the iterative algorithm. Consequently, when all the CSs do not hold the condition in Eq.~(\ref{eqn:for11c}) at iteration $\theta= \vartheta$, then we will obtain
\begin{equation}
\label{eqn:for16j}
\begin{aligned}
\Big(\rho_i^{(\vartheta+1)}(\phi),\xi_i^{(\vartheta+1)}(\phi)\Big) = \Big(\rho_i^{(\vartheta)}(\phi),\xi_i^{(\vartheta)}(\phi)\Big), \\ \forall i \in \mathcal{I}, \forall \phi \in \Phi.
\end{aligned}
\end{equation}
Consequently, for the rest of $\theta$ values starting from $\vartheta$, the algorithm obtains the same ${U}_{i}\Big(\rho^{(\theta+1)}_i(\phi),\xi^{(\theta+1)}_i(\phi),\boldsymbol{\rho}^{(\theta)}_{-i}(\phi), \boldsymbol{\xi}^{(\theta)}_{-i}(\phi)\Big), \forall i \in \mathcal{I}$. It implies that the algorithm converges under the optimality tolerance $\kappa$.

%From the Theorem~\ref{theorem_equi1}, we have proved that the best response $\Gamma^\dagger_i \in \Gamma_i$, where $i \in \mathcal{I}$, is an increasing function. In addition, the contract of CS-$i$ at iteration $\theta + 1$ can be updated to $\Big(\rho^{(\theta+1)}_i(\phi),\xi^{(\theta+1)}_i(\phi)\Big) \in \Gamma^{(\theta)}_i$ if the condition in Eq.~(\ref{eqn:for11c}) is satisfied. In this case, the use of $\kappa$ can be seen as the optimality tolerance to terminate the iterative algorithm. Consequently, when all the CSs do not hold the condition in Eq.~(\ref{eqn:for11c}), then we will obtain
%\begin{equation}
%\label{eqn:for16i}
%\begin{aligned}
%\Gamma^\dagger_i\Big(\max \mathbb{C}^*_i\Big) = \max \mathbb{C}^*_i, \forall i \in \mathcal{I},
%\end{aligned}
%\end{equation}
%or
%\begin{equation}
%\label{eqn:for16j}
%\begin{aligned}
%\Big(\rho_i^{(\theta)}(\phi),\xi_i^{(\theta)}(\phi)\Big) = \Big(\rho_i^{(\theta+1)}(\phi),\xi_i^{(\theta+1)}(\phi)\Big), \\ \forall i \in \mathcal{I}, \forall \phi \in [\phi_{min},\phi_{max}],
%\end{aligned}
%\end{equation}
%for the rest of $\theta$ values. It implies that the algorithm converges to the equilibrium contract solution $\Big(\boldsymbol{\hat \rho}(\phi), \boldsymbol{\hat \xi}(\phi)\Big), \forall \phi \in [\phi_{min},\phi_{max}]$, since there is no further solution improvement from all the CSs' contracts.

\section{Proof of Theorem 2}
\label{appx:theorem2}

We adopt this proof from~\cite{Bernheim:1986} and \cite{Fraysse:1993}. In particular, we first show that an equilibrium exists through obtaining a fixed point of $\Gamma$. Consider $\mathbb{C}^*$ which contains $\Big(\boldsymbol{\rho}(\phi), \boldsymbol{\xi}(\phi)\Big) \in \mathbb{C}$. %with $\Gamma^\dagger_i\Big(\boldsymbol{\rho}_{-i}(\phi), \boldsymbol{\xi}_{-i}(\phi)\Big) \geq \Big(\boldsymbol{\rho}_{-i}(\phi), \boldsymbol{\xi}_{-i}(\phi)\Big)$. 
This $\mathbb{C}^*$ is a non-empty contract space since ${\hat \Gamma}\Big(\min \mathbb{C}\Big) \geq \min \mathbb{C}$, ${\hat \Gamma} \in \Gamma$, and thus
\begin{equation}
\label{eqn:for16f}
\begin{aligned}
{\hat \Gamma}\Big(\max \mathbb{C}^*\Big) \geq \max \mathbb{C}^*.
\end{aligned}
\end{equation}
As ${\hat \Gamma}$ is increasing as shown in the proof of Theorem~\ref{theorem_equi1}, we have
\begin{equation}
\label{eqn:for16g}
\begin{aligned}
{\hat \Gamma}\bigg({\hat \Gamma}\Big(\max \mathbb{C}^*\Big)\bigg) \geq {\hat \Gamma}\Big(\max \mathbb{C}^*\Big),
\end{aligned}
\end{equation}
and thus ${\hat \Gamma}\Big(\max \mathbb{C}^*\Big) \in \mathbb{C}^*$. Then, we have ${\hat \Gamma}\Big(\max \mathbb{C}^*\Big) \leq \max \mathbb{C}^*_i$ and obtain
\begin{equation}
\label{eqn:for16h}
\begin{aligned}
{\hat \Gamma}\Big(\max \mathbb{C}^*\Big) = \max \mathbb{C}^*.
\end{aligned}
\end{equation}
As a result, $\max \mathbb{C}^*$ is a fixed point of $\Gamma$ which contains $\Big(\boldsymbol{\rho}^*(\phi), \boldsymbol{\xi}^*(\phi)\Big), \forall \phi \in \Phi$. This indicates that the equilibrium exists, i.e., $\Big(\boldsymbol{\hat \rho}(\phi), \boldsymbol{\hat \xi}(\phi)\Big) = \Big(\boldsymbol{\rho}(\phi)^*, \boldsymbol{\xi}^*(\phi)\Big), \forall \phi \in \Phi$.

Since the utilities of all CSs follow the increasing function until reaching a convergence under the $\kappa$, there is no further solution improvement from all the CSs' contracts. Therefore, we conclude that the algorithm must converge to the equilibrium contract solution $\Big(\boldsymbol{\hat \rho}(\phi), \boldsymbol{\hat \xi}(\phi)\Big), \forall \phi \in \Phi$, to ensure that there is no such CS-$i$ can improve its utility, i.e., $U_i({\hat \rho}_i(\phi), {\hat \xi}_i(\phi), \boldsymbol{\hat \rho}_{-i}(\phi), \boldsymbol{\hat \xi}_{-i}(\phi))$, unilaterally.
%To prove that the equilibrium contract exists, we adopt the Kakutani's fixed point theorem~\cite{Kakutani:1941}. Specifically, we demonstrate that $\Gamma\Big(\boldsymbol{\rho}(\phi), \boldsymbol{\xi}(\phi)\Big)$ satisfies the conditions of Kakutani's fixed point theorem, and thus a fixed point exists in $\Gamma\Big(\boldsymbol{\rho}(\phi), \boldsymbol{\xi}(\phi)\Big)$ which is equivalent to the existence of equilibrium contract. First, 
%\section{Proof of Theorem 2}
%\label{appx:theorem2}

%\begin{IEEEbiography}{Yuris Mulya Saputra}
%Biography text here.
%\end{IEEEbiography}
%
%\begin{IEEEbiography}{Dinh Thai Hoang}
%Biography text here.
%\end{IEEEbiography}
%
%\begin{IEEEbiography}{Diep N. Nguyen}
%Biography text here.
%\end{IEEEbiography}
%
%\begin{IEEEbiography}{Eryk Dutkiewicz}
%Biography text here.
%\end{IEEEbiography}

% that's all folks
\end{document}